\newtheorem{observation}{Observation}
\newif\ifcomment\commentfalse
\def\commentOFF{\commentfalse}
\long\outer\def\bc#1\ec{{\ifcomment \sloppy  $[${\bf Leen suggests}]
{{#1}} \textbf{[end]} \fi }}
\long\outer\def\br#1\er{{\ifcomment \sloppy  $[${\bf Leen suggests to remove}]
{{#1}} \textbf{[end]} \fi }}
\long\outer\def\bo#1\eo{{\ifcomment \sloppy  $[${\bf instead of}]
{\textit{#1}} \textbf{[end]}  \fi }}
\long\outer\def\BC#1\EC{{\ifcomment \sloppy \par \#  \dotfill
{\textsc{#1}} \dotfill \# \par \fi }}
\begin{document}

\title{Constructing level-2 phylogenetic networks from triplets\thanks{Part of this research has been funded
by the Dutch BSIK/BRICKS
project.}}
\titlerunning{Constructing level-2 phylogenetic networks}  
%
\author{Leo van Iersel\inst{1}, Judith Keijsper\inst{1}, Steven
Kelk\inst{2}, Leen Stougie\inst{2}}

\authorrunning{Van Iersel, Keijsper, Kelk, Stougie}

\institute{Technische Universiteit Eindhoven (TU/e), Den Dolech 2, 5612 AX Eindhoven, Netherlands.\\
\email{l.j.j.v.iersel@tue.nl, j.c.m.keijsper@tue.nl}\\
\and
Centrum voor Wiskunde en Informatica (CWI), Kruislaan 413, 1098 SJ Amsterdam, Netherlands. \\
\email{S.M.Kelk@cwi.nl, Leen.Stougie@cwi.nl} \\
}

\maketitle              

\begin{abstract}
Jansson and Sung showed in \cite{JS1} that, given a dense set of input triplets $T$ (representing hypotheses about the
local evolutionary relationships of triplets of species), it is possible to determine in polynomial time whether there
exists a \emph{level-1 network} consistent with $T$, and if so to construct such a network. They also showed that,
unlike in the case of trees (i.e. level-0 networks), the problem becomes NP-hard when the input is non-dense. Here we
further extend this work by showing that, when the set of input triplets is dense, the problem is even polynomial-time
solvable for the construction of \emph{level-2} networks. This shows that, assuming density, it is tractable to
construct plausible evolutionary histories from input triplets even when such histories are heavily non-tree like.
This further strengthens the case for the use of triplet-based methods in the construction of phylogenetic networks.
We also show that, in the non-dense case, the level-2 problem remains NP-hard.
\end{abstract}

\section{Introduction}

\subsection{Phylogenetic reconstruction: popular methods}

Broadly speaking \emph{phylogenetics} is the field at the interface of biology, mathematics and computer-science which
tackles the problem of (re-)constructing plausible evolutionary scenarios when confronted with incomplete and/or
error-prone biological data. There are already a great many algorithmic strategies for constructing evolutionary
scenarios. The most well-known techniques are Maximum Parsimony (MP), Maximum Likelihood (ML), Bayesian methods,
Distance-based methods (such as Neighbour Joining and UPMGA) and Quartet-based methods, as well as various
(meta-)combinations of these. See
\cite{bryant01constructing}\cite{traditional}\cite{jiang00polynomial}\cite{phylogenetics} for good discussions of
these methods.\\
\\
The methods generally considered accurate enough to cope with large input data sets are MP and ML \cite{largescale},
with Bayesian methods (based on Markov Chain Monte Carlo random walks) more recently also emerging as a popular method
within molecular studies \cite{traditional}\cite{elephant}. However, MP and (especially) ML both suffer from slow
running times which means that finding optimal MP/ML solutions on data sets consisting of more than several tens of
species is practically infeasible. (Both problems are NP-hard \cite{roch}.) One response to this tractability problem
has been the development of Quartet-based methods. Such methods actually encompass an array of algorithms (e.g.
Maximum Quartet Consistency, Minimum Quartet Inconsistency) and various heuristics for rejecting problematic parts of
the input data (e.g. Q*/Naive Method, Quartet Cleaning and Quartet Puzzling.) The unifying idea however is the
assumption that, with high-accuracy, one can construct evolutionary trees for all, or at least very many subsets of
exactly 4 species. Given such ``quartets'' we then wish to find a single tree, containing all the species encountered
in the quartets, which is consistent with all - or at least, as many as possible - of the given quartets.

\subsection{From quartet methods to triplet methods}

Quartet methods apply to the construction of unrooted evolutionary trees; less well studied is the problem of
constructing \emph{rooted} evolutionary trees, where the edges of the tree are directed to reflect the direction of
evolution. (In unrooted evolutionary trees a path between two species A and B does not indicate whether A evolved into
B, or vice-versa.) The analogue of quartet methods in the case of rooted evolutionary trees are \emph{triplet}
methods: here we are given not unrooted trees on 4 leaves, but rooted binary trees on 3 leaves, see Figure
\ref{fig:singletriplet}. One can interpret the triplet in this figure as saying that species $x$ and $y$ only diverged
from each other \emph{after} some common ancestor of theirs had already diverged from species $z$. For any set of 3
leaves there are at most 3 triplets possible. There are various ways to generate triplets from biological data; a
high-accuracy method such as MP or ML is often used because for the construction of small trees their running time is
perfectly acceptable.

\begin{figure}
\centering \vspace{-0.5cm} \includegraphics{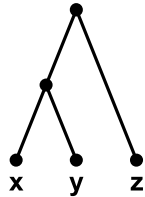} \caption{One of the three possible triplets on the set
of leaves $x, y, z$. Note that, as with all figures in this article, all arcs are assumed to be directed downwards,
away from the root.} \vspace{-0.3cm} \label{fig:singletriplet}
\end{figure}

\noindent Aho et al. studied the problem of constructing trees from input sets of triplets. They showed that, given a
set of triplets, it is possible to determine in polynomial time whether there exists a single rooted tree consistent
with all the input triplets \cite{aho}. (And, if so, to construct such a tree.) This contrasts favourably with the
corresponding quartet problem, which is NP-hard \cite{steelhard}. Various authors
\cite{bryant97}\cite{Gasieniec99}\cite{jansson01}\cite{jansson05}\cite{wu04} have studied the problem of, when
confronted with a set of triplets for which the Aho et al. algorithm fails to return a tree, finding a tree which is
best possible under some given optimisation criteria. A well-studied, albeit NP-hard, optimisation criteria is to find
a tree that maximises the number of input triplets it is consistent with.

\subsection{From trees to networks}

In recent years attention has turned towards the construction of evolutionary scenarios that are not tree-like. This
has been motivated by the fact that biological phenomena such as hybridisation, horizontal gene transfer,
recombination, and gene duplication can cause lineages which earlier in time diversified from a common ancestor, to
once again intersect with each other later in time. In essence, thus, evolutionary scenarios where the underlying,
undirected graph potentially contains cycles. Rather than attempt to summarise this extremely varied area we refer the
reader to \cite{husonbryant}, \cite{makarenkov} and \cite{moret2004}, all outstanding survey articles.\\
\\
Note that in the case named above the need for structures more general than trees is \emph{explicitly} motivated by
the \emph{inherently} non tree-like (plausible) evolution(s) that we are trying to reconstruct. However, even if
underlying evolutionary scenarios are believed to be tree-like, it can be extremely useful to have algorithms for
constructing more complex structures. This is because \emph{errors} in the input data (in this case, input triplets)
can create a situation where it is not possible to build a tree i.e. where the algorithm of Aho et al. fails. If
however a more complex, non tree-like evolutionary scenario \emph{can} be built from that triplet set, then this can
often be used to (visually) locate the parts of the input that are responsible for spoiling the expected tree-like
status of the input. This is indeed part of the motivation behind the well-known \emph{SplitsTree} package of Huson et
al. \cite{splitstree}. In the same way there are both explicit and implicit interpretations possible of the phenomenon
where, for some sets of three species, there is more than one triplet in the input. (Note that for any three species
there are at most three different triplets possible.) On one hand this can be viewed as a reflection of the fact that
the three species in question genuinely came from an evolutionary scenario that was non tree-like, and as such that
the multiple triplets corresponding to those three species are indeed all potentially valid. On the other hand one can
view multiple input triplets on the same set of three species as an expression of uncertainty/confidence as to which
triplet is the ``correct'' one. Suffice to say: in this paper we take a purely mechanical, algorithmic approach to
this question and leave it to the reader to reason about the relative merits of implicit and explicit interpretations.

\subsection{Efficiently constructing phylogenetic networks from dense sets of input triplets}

In \cite{JS2} and \cite{JS1} Jansson and Sung considered the following problem. Given a set of input triplets, is it
possible to construct a \emph{level-1} network (otherwise known as a \emph{galled tree} or a \emph{galled network})
which is consistent with all those triplets? Informally, a level-$k$ network (for $k\geq 0$) is an evolutionary
network where each biconnected component of the network contains at most $k$ recombination events. They showed that,
in general, the level-1 problem is NP-hard. (In contrast the algorithm of Aho et al. always runs in polynomial time.)
However, when the input is \emph{dense} - each set of 3 species has at least one triplet in the input - they show that
the problem can be solved in polynomial time. (In \cite{JS1} an algorithm is given with quadratic running time in the
number of input triplets, in \cite{JS2} this is improved to linear time.) Density is a reasonable assumption if
high-quality triplets can be constructed for all subsets of 3 species. In \cite{JS2} various upper-bounds,
lower-bounds and approximation algorithms for the general case are also given. (A similar group of authors has also
explored related problems of constructing galled trees from ultrametric distance matrices \cite{ultrametric}, and
building galled trees where certain input triplets are \emph{forbidden} \cite{forbid}.)\\
\\
In this paper we extend considerably the work of Jansson and Sung in \cite{JS2} by showing that, when the input set is
dense, it is even polynomial-time solvable to detect whether a \emph{level-2} network can be constructed consistent
with the input triplets. (And, if so, to construct one.) We give an algorithm that runs in time $O(|T|^3)$ where $T$
is the set of input triplets. This significantly extends the power of the triplet method because it further extends
the complexity of the evolutionary scenarios that can be constructed. For example, networks of the complexity shown in
Figure~\ref{fig:biglevel2} can be constructed by our algorithm. This tractability result follows from several deep
insights regarding the behaviour of the \emph{SN-set}, first introduced by Jansson and Sung, and by the construction
of algorithms for solving the \emph{simple level-2} problem. On the basis of this result it is tempting to conjecture
that, for fixed $k$, the dense level-$k$ problem is polynomial-time solvable. However it is not yet clear that the
pivotal theorem in Section \ref{sec:poly}, Theorem \ref{thm:one}, generalises easily to level-3 networks and higher;
already in level-2 it is no longer necessarily true that an SN-set corresponds to a single ``side network''. We also
show that the level-2 problem is NP-hard in the general case; this result also touches on some interesting issues
concerning the conditions under which triplets give rise to one, and one only, possible solution.

\begin{figure}
\centering \vspace{-0.2cm}
\includegraphics{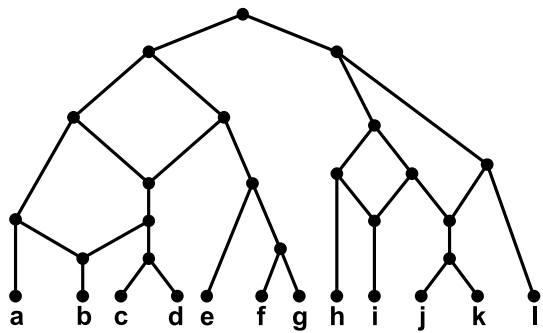}
\caption{An example of a level-2 network.} \label{fig:biglevel2}
\end{figure}

\vspace{.3cm}

\subsection{Layout of the paper}

In Section~\ref{sec:def} we introduce the basic definitions and notation used throughout this paper; more
context-specific terminology is introduced at relevant points throughout the paper. In Section~\ref{sec:poly} we
present the main result of this paper, the algorithm for constructing level-2 networks from dense triplet sets. The
result is rather complicated and for this reason we split this section into several sub-sections.
Section~\ref{sec:nphard} shows that, for general input sets, constructing level-2 networks is NP-hard. Finally in
Section~\ref{sec:conclusion} we discuss both our conclusions and the very many fascinating open problems which still
remain in this area. The appendix contains various proofs that would otherwise interrupt the flow of the paper.

\section{Definitions}
\label{sec:def}

A \emph{phylogenetic tree} is a rooted binary tree with directed edges (arcs) and distinctly labelled leaves. A
\emph{phylogenetic network} (\emph{network} for short) is a generalisation of a phylogenetic tree, defined as a
directed acyclic graph in which exactly one vertex has indegree 0 and outdegree 2 (the root) and all other vertices
have either indegree 1 and outdegree 2 (\emph{split vertices}), indegree 2 and outdegree 1 (recombination vertices) or
indegree 1 and outdegree 0 (leaves), where the leaves are distinctly labelled. A leaf that is a child of a
recombination vertex is called a \emph{recombination leaf}. In general directed acyclic graphs a \emph{recombination
vertex} is a vertex with indegree 2. A directed acyclic graph is \emph{connected} (also called ``weakly connected'')
if there is an undirected path between any two vertices and \emph{biconnected} if it contains no vertex whose removal
disconnects the graph. A \emph{biconnected component} of a network is a maximal biconnected subgraph and is called
\emph{trivial} if it is equal to two vertices connected by an arc. To avoid ``redundant'' networks, we assume in this
paper that in any network every nontrivial biconnected component has at least three outgoing arcs.
\begin{definition}
A network is said to be a \emph{level-$k$ network} if each biconnected component contains at most $k$ recombination
vertices.
\end{definition}
A network that is also a tree can thus be considered a level-0 network. A network that is a level-$k$ network but not
a level-$(k-1)$ network is called a \emph{strict level-$k$ network}.\\
\\
A phylogenetic tree with exactly three leaves is called a \emph{rooted triplet} (\emph{triplet} for short). The unique
triplet on a leaf set $\{x,y,z\}$ in which the lowest common ancestor of $x$ and $y$ is a proper descendant of the
lowest common ancestor of $x$ and $z$ is
denoted by $xy|z$ (which is identical to $yx|z$). The triplet in Figure \ref{fig:singletriplet} is $xy|z$.\\
\\
Denote the set of leaves in a network $N$ by $L(N)$ and for any set $T$ of triplets define $L(T) = \bigcup_{t \in T}
L(t)$. A set $T$ of rooted triplets is called \emph{dense} if for each $\{x,y,z\} \subseteq L(T)$ at least one of
$xy|z$, $xz|y$ and $yz|x$ belongs to $T$. Furthermore, for a set of triplets $T$ and a set of leaves $L'\subseteq L$,
we denote by $T|L'$ the subset of the triplets in $T$ that have only leaves in $L'$. The number of leaves $|L(N)|$ of
a network $N$ is denoted by $n$.

\begin{definition}
\label{def:con} A triplet $xy|z$ is \emph{consistent} with a network $N$ (interchangeably: $N$ is consistent with
$xy|z$) if $N$ contains a subdivision of $xy|z$, i.e. if $N$ contains vertices $u \neq v$ and pairwise internally
vertex-disjoint paths $u \rightarrow x$, $u \rightarrow y$, $v \rightarrow u$ and $v \rightarrow z$.
\end{definition}
By extension, a set of triplets $T$ is said to be \emph{consistent} with $N$ (interchangeably: $N$ is consistent with
$T$) if every triplet in $T$ is consistent with $N$. We say that a set of triplets $T$ is \emph{level-$k$ realisable}
if there exists a level-$k$ network $N$ consistent with $T$. To clarify triplet consistency we observe that the
network in Figure \ref{fig:biglevel2} is consistent with (amongst others) $ab|c$, $bc|a$ and $dg|k$ but not consistent
with (for example) $ah|f$ or $hk|i$.\\

\pagebreak

\noindent Note that the definition of triplet consistency in \cite{JS2} (``$xy|z$ is \emph{consistent} with $N$ if
$xy|z$ is an embedded subtree of $N$ (i.e. if a lowest common ancestor of $x$ and $y$ is a proper descendant of a
lowest common ancestor of $x$ and $z$)'') is only usable for trees and not for general networks. Personal
communication with the authors \cite{communicate} has clarified that Definition~\ref{def:con} is the definition they
actually meant. It follows directly from Definition~\ref{def:con} that triplet consistency can be checked in
polynomial time (in the number of vertices), since a fixed number of disjoint paths in a directed acyclic graph can be
found in polynomial time \cite{fortune}. Note that the algorithms in Section \ref{sec:poly} only need to check triplet
consistency in a very restricted type of networks, making it possible to check $O(n^3)$ triplets in $O(n^3)$ time.
Designing a fast algorithm (faster than searching for disjoint paths) that checks triplet consistency in general
networks is an open
problem.\\
\\
We will now define SN-sets, introduced in \cite{JS1}, which will play an important role in the rest of the paper. For
a triplet set $T$, let $\mathcal{S}_T$ be the operation on subsets $X$ of $L(T)$ defined by $\mathcal{S}_T(X) = X \cup
\{c \in L(T) | \exists x,y \in X: xc|y \in T\}$. The set $SN_T(X)$ is defined as the closure of $X$ w.r.t. the
operation $\mathcal{S}_T$. Define an \emph{$SN$-set} of $T$ as a set of the form $SN_T(X)$ for some $X\subseteq L(T)$,
i.e. SN-sets are the subsets of $L(T)$ that are closed under the operation $\mathcal{S}_T$. Note that
$SN_T(L(T))=L(T)$ so $L(T)$ is an SN-set. Note also that $SN_T(\{x\}) = \{x\}$ for each $x \in X$; we call such an
SN-set a \emph{singleton} SN-set. An SN-set $X$ is \emph{maximal} with respect to a triplet set $T$ if $X \neq L(T)$
and $L(T)$ is the only SN-set that is a strict superset of $X$. It is important in the following section to remember
that SN-sets are determined by triplets, not by networks.

\section{Constructing level-2 networks from \emph{dense} triplet sets is polynomial-time solvable}
\label{sec:poly}

We begin with some important lemmas and definitions.

\begin{lemma}
\label{lem:densepar} (Jansson and Sung, \cite{JS1}) If $T$ is dense, then for any $A, B \subseteq L(T)$, $SN_T(A) \cap
SN_T(B)$ equals $\emptyset$, $SN_T(A)$ or $SN_T(B)$.
\end{lemma}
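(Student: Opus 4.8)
The plan is to establish the result in two stages: first show, without using density, that the intersection of two SN-sets is again an SN-set; then invoke density to rule out the possibility that this intersection is a nonempty proper subset of both.

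First I would observe that if $X$ and $Y$ are any two subsets of $L(T)$ closed under $\mathcal{S}_T$, then $X \cap Y$ is closed as well. Indeed, if some $c$ could be added to $X \cap Y$ via a triplet $xc|y \in T$ with $x,y \in X \cap Y$, then closure of $X$ forces $c \in X$ and closure of $Y$ forces $c \in Y$, so $c \in X \cap Y$ already. Since $SN_T(A)$ and $SN_T(B)$ are closed by definition, their intersection $C := SN_T(A) \cap SN_T(B)$ is therefore an SN-set. This step uses only the closure definition, not density.

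Next, arguing by contradiction, I would assume $C$ is nonempty and a proper subset of both $SN_T(A)$ and $SN_T(B)$, and choose witnesses $c \in C$, $a \in SN_T(A) \setminus SN_T(B)$, and $b \in SN_T(B) \setminus SN_T(A)$. By density applied to $\{a,b,c\}$, at least one of $ab|c$, $ac|b$, $bc|a$ lies in $T$, and I would dispatch each case by applying the one-step closure rule inside the appropriate SN-set. For example, if $ab|c \in T$ then, since $b,c \in SN_T(B)$, reading the triplet as $ba|c$ forces $a \in \mathcal{S}_T(SN_T(B)) = SN_T(B)$, contradicting $a \notin SN_T(B)$. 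If $ac|b \in T$, then $c,b \in SN_T(B)$ together with the triplet again force $a \in SN_T(B)$, the same contradiction. If $bc|a \in T$, then $a,c \in SN_T(A)$ force $b \in SN_T(A)$, contradicting $b \notin SN_T(A)$. Hence $C$ must be empty or equal to one of the two SN-sets.

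I expect the only delicate point to be bookkeeping: correctly matching each of the three triplets against the definition $\mathcal{S}_T(X) = X \cup \{c \in L(T) \mid \exists x,y \in X:\ xc|y \in T\}$, so that the leaf being forced into an SN-set is precisely the one known to lie outside it, and applying the argument in whichever of $SN_T(A)$, $SN_T(B)$ contains the two relevant leaves. Once these roles are pinned down, all three cases close immediately, and the content of the lemma reduces exactly to the interplay between density (which guarantees \emph{some} triplet on $\{a,b,c\}$) and the closure rule (which then propagates a leaf across the boundary of an SN-set).
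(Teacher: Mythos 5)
Your proof is correct. Note, however, that the paper itself gives no proof of this lemma: it is imported verbatim from Jansson and Sung \cite{JS1}, so there is nothing in this paper to compare your argument against. Your two-stage argument is sound and essentially the natural (and, in substance, the standard) one: the observation that closed sets are stable under intersection is fine (though not strictly needed for the stated trichotomy), and the core contradiction argument is watertight --- taking $c \in SN_T(A)\cap SN_T(B)$, $a \in SN_T(A)\setminus SN_T(B)$, $b \in SN_T(B)\setminus SN_T(A)$, each of the three triplets guaranteed by density on $\{a,b,c\}$ places two of its leaves (one in the near pair, one as outgroup) inside one of the two SN-sets, so the one-step closure rule $\mathcal{S}_T$ pulls the excluded leaf across the boundary, a contradiction. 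Your case bookkeeping against the definition of $\mathcal{S}_T$ is exactly right in all three cases, so the proposal stands as a complete, self-contained proof of the cited result.
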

From this lemma follows that the maximal SN-sets of $T$ partition $L(T)$. The next lemma shows that each SN-set is
equal to a set of the form $SN_T(\{x,y\})$ or $SN_T(\{x\})$, showing that we can find all SN-sets by the algorithm in
\cite{JS1}.

\begin{lemma}
\label{lem:2born} If $Y$ is an SN-set then $Y = SN_T(\{x,y\})$ or $Y = SN_T(\{x\})$ for some $x,y \in Y$.
\end{lemma}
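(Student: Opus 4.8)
The plan is to exploit the laminar (nested-or-disjoint) structure of SN-sets guaranteed by Lemma~\ref{lem:densepar}, together with the fact that every singleton $\{x\}$ is itself an SN-set (recall $SN_T(\{x\}) = \{x\}$). The statement then reduces to showing that any SN-set $Y$ is generated by at most two of its own elements. First I would dispose of the trivial case $|Y| = 1$: here $Y = \{x\}$ and $Y = SN_T(\{x\})$ directly, so from now on assume $|Y| \geq 2$.

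Next I would examine the collection of \emph{proper} SN-subsets of $Y$, i.e. SN-sets $Z$ with $Z \subsetneq Y$, and single out those that are maximal under inclusion. The key claim is that these maximal proper SN-subsets \emph{partition} $Y$. Disjointness is immediate from Lemma~\ref{lem:densepar}: two distinct maximal proper SN-subsets that intersected would, by the nested-or-disjoint property, have one contained in the other, contradicting maximality of the smaller one. For the covering property, note that for each $x \in Y$ the singleton $\{x\}$ is a proper SN-subset (as $|Y| \geq 2$); among all proper SN-subsets containing $x$ there is a maximal one, and the nesting argument shows it is in fact maximal among \emph{all} proper SN-subsets, so it is a part containing $x$. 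Since the parts partition the $(\geq 2)$-element set $Y$, there are at least two of them.

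Finally, I would choose elements $x, y$ lying in two distinct parts $Z_1 \neq Z_2$ and set $W = SN_T(\{x,y\})$. Because $Y$ is closed under $\mathcal{S}_T$ and contains $x$ and $y$, monotonicity of the closure operator gives $W \subseteq Y$. It then suffices to rule out $W \subsetneq Y$: if $W$ were a proper SN-subset it would be contained in a single part $M$ (any part containing $W$ is disjoint from the others, so $W$ cannot straddle two), whence $x, y \in M$ would force $Z_1 = M = Z_2$, a contradiction. Therefore $W = Y$, so $Y = SN_T(\{x,y\})$ as required.

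The main obstacle is establishing the partition claim cleanly -- in particular verifying that the maximal proper SN-subsets cover all of $Y$ and that no proper SN-subset can meet two different parts. Both facts rest entirely on the density hypothesis through Lemma~\ref{lem:densepar}; without the nested-or-disjoint structure the collection of SN-subsets need not be laminar and the argument would collapse, which is precisely where density is essential.
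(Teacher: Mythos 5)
Your proof is correct, but it takes a genuinely different route from the paper's. The paper argues bottom-up by induction on the size of a generating set: writing $Y = SN_T(X)$, if $|X| > 2$ then density supplies a triplet $xy|z$ with $x,y,z \in X$, whence $SN_T(X) = SN_T(X \setminus \{x\})$ because $x$ is regenerated from $\{y,z\}$ by the closure operation; iterating reduces to $|X| \leq 2$. That argument uses density only through its definition and never invokes Lemma~\ref{lem:densepar}. Your argument is top-down and structural: you invoke laminarity of SN-sets (Lemma~\ref{lem:densepar}) to show that the maximal proper SN-subsets of $Y$ partition $Y$, and then show that any two elements chosen from distinct parts already generate $Y$, since a proper closure would have to live inside a single part. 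Your route is slightly heavier --- it needs finiteness of the SN-set family to extract maximal elements, and the fact that every proper SN-subset of $Y$ lies inside some part, which you use twice and should state once explicitly --- but it buys a strictly stronger and more informative conclusion: \emph{every} pair $x,y$ separated by the partition of $Y$ into maximal proper SN-subsets satisfies $Y = SN_T(\{x,y\})$, a structural fact in the same spirit as the paper's later use of maximal SN-sets (compare Lemma~\ref{lem:splitSN}). The paper's induction, by contrast, is shorter and more elementary but only yields the existence of some generating pair.
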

\begin{proof}
Suppose $Y = SN_T(X)$. The proof is by induction on $|X|$. If $|X| \leq 2$ we are done. If $|X| > 2$ we take any three
leaves $x,y,z \in X$ such that $xy|z$ is a triplet in $T$. This is possible because $T$ is dense. We have thus that $Y
= SN_T(X) = SN_T(X \setminus \{x\})$ and the lemma follows by induction. $\Box$
\end{proof}

\noindent An arc $a=(u,v)$ of a network $N$ is a \emph{cut-arc} if its removal disconnects $N$. We write ``the set of
leaves below $a$'' to mean the set of leaves reachable from $v$ in $N$ and ``the set of vertices below $a$'' to mean
the set of all vertices reachable from $v$ in $N$.

\begin{lemma}
\label{lem:cutSN} Let $N$ be a network consistent with dense triplet set $T$. Then for each cut-arc $a$ in $N$, the
set $S$ of leaves below $a$ is an SN-set of $T$.
\end{lemma}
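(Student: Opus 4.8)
The plan is to prove directly that $S$ is closed under the operation $\mathcal{S}_T$, since by definition an SN-set is precisely a subset of $L(T)$ that is closed under $\mathcal{S}_T$. Writing $a=(u,v)$, I must therefore show that whenever $x,y\in S$ and $xc|y\in T$ for some $c\in L(T)$, the leaf $c$ also lies in $S$. As the leaves of $S$ are exactly the leaves reachable from $v$, the goal reduces to showing that $c$ is reachable from $v$. Because $T$ is consistent with $N$, the triplet $xc|y$ is consistent with $N$, so Definition~\ref{def:con} supplies two vertices $p\neq q$ together with internally vertex-disjoint directed paths $p\rightarrow x$, $p\rightarrow c$, $q\rightarrow p$ and $q\rightarrow y$. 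Thus $p$ is a common ancestor of $x$ and $c$, and $q$ is a strictly higher common ancestor of $x$, $c$ and $y$.

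The structural fact I would establish first is the following reachability property of cut-arcs, which is where the cut-arc hypothesis is genuinely used: every directed path from the root $r$ to a vertex below $a$ must pass through $v$. Since $N$ is weakly connected and $a$ is a cut-arc, $N-a$ has exactly two weak components, and $a$ is the unique arc joining them; the component not containing $r$ is exactly the set $B$ of vertices reachable from $v$ (here $v$ itself lies in this component, as otherwise both endpoints of $a$ would lie in $r$'s component and $a$ would not be a bridge). Indeed, any directed path from $r$ to a vertex of this component must cross $a$ from $u$ to $v$, hence reach $v$ first, and conversely every vertex reachable from $v$ remains in that component. In particular $v$ is an ancestor of every leaf of $S$, and any $r$-to-$w$ path with $w\in B$ that avoided $v$ would wrongly place $w$ in the root component.

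The crux — and the step I expect to be the main obstacle — is locating $v$ relative to $p$ using the disjoint-path structure. The aim is to show that $v$ is an ancestor of $p$ (or $v=p$); once this is known, $p$ is reachable from $v$, hence so is $c$ via the path $p\rightarrow c$, giving $c\in S$. I would argue by contradiction. If $v$ is not an ancestor of $p$, then $p$ is reachable from $r$ without passing through $v$; appending the path $p\rightarrow x$ yields an $r$-to-$x$ route which, by the reachability property, must contain $v$, forcing $v$ to be an internal vertex of $p\rightarrow x$ (it cannot equal $x$, since then $S=\{x\}$ would exclude $y$). Symmetrically, $v$ cannot be an ancestor of $q$ either, for otherwise the path $q\rightarrow p$ would make $v$ an ancestor of $p$; the same reasoning then forces $v$ to be an internal vertex of $q\rightarrow y$. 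But then $v$ is an internal vertex common to the paths $p\rightarrow x$ and $q\rightarrow y$, contradicting their internal vertex-disjointness. Hence $v$ is an ancestor of $p$ and the lemma follows. The only points needing care are the degenerate cases $v\in\{x,y\}$, which are ruled out because $x$ and $y$ are distinct leaves (outdegree $0$, so they cannot be internal path vertices) both lying in $S$.
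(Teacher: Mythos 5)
Your proof is correct and takes essentially the same approach as the paper: the paper's entire proof is the one-line observation that $SN_T(S)=S$ ``since the only triplet with leaves $x,y \in S$ and $z \notin S$ which is consistent with $N$ is $xy|z$.'' Your cut-arc reachability property and the disjoint-paths contradiction are precisely a rigorous expansion of that ``clearly,'' so the two arguments differ only in level of detail, not in route.
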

\begin{proof}
Clearly $SN_T(S) = S$, since the only triplet with leaves $x,y \in S$ and $z \notin S$ which is consistent with $N$ is
$xy|z$. $\Box$
\end{proof}

\noindent We say that a cut-arc $a = (u,v)$ is \emph{trivial} if $v$ is a leaf. We say that a cut-arc $a = (u,v)$ is a
\emph{highest} cut-arc if there does not exist a second cut-arc $a' = (u', v')$ such that $u$ is reachable from $v'$.

\begin{lemma}
\label{lem:highcut}
The sets of leaves below highest cut-arcs partition $L$.
\end{lemma}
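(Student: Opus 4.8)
The statement asks for a partition, so I would prove two things: that the leaf sets below highest cut-arcs \emph{cover} $L = L(N)$, and that they are pairwise \emph{disjoint}. Both halves rest on a single structural fact that I would establish first: if $a=(u,v)$ is a cut-arc and a leaf $x$ lies below $a$ (that is, $x$ is reachable from $v$), then every directed path from the root to $x$ traverses $a$. This is because deleting $a$ disconnects $N$; the root is not below $v$ (it is the unique source, and $v$ has a parent), so the root and $x$ end up in different components, and any root-to-$x$ path avoiding $a$ would survive in $N-a$ and illegitimately reconnect them.

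For the covering property, I would fix a leaf $x$. The arc entering $x$ is a trivial cut-arc having $x$ below it, so the family $\mathcal{A}_x$ of cut-arcs with $x$ below them is nonempty. Since $N$ is acyclic, the relation ``$a'=(u',v')$ lies above $a=(u,v)$ when $u$ is reachable from $v'$'' is a strict partial order on the finite set $\mathcal{A}_x$, which therefore has a maximal element $a=(u,v)$. I claim this $a$ is highest in the global sense of the definition. Indeed, if some cut-arc $a'=(u',v')$ satisfied ``$u$ reachable from $v'$'', then $x$ (reachable from $v$, hence from $u$, hence from $v'$) would lie below $a'$ as well, placing $a'$ in $\mathcal{A}_x$ strictly above $a$ and contradicting maximality. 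Hence $x$ lies below a highest cut-arc, and every leaf is covered.

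For disjointness, I would suppose for contradiction that two distinct highest cut-arcs $a_1=(u_1,v_1)$ and $a_2=(u_2,v_2)$ shared a leaf $x$ below both. By the structural fact, every directed path from the root to $x$ uses both $a_1$ and $a_2$; fixing one such path, the two arcs occur on it in some order. If $a_1$ precedes $a_2$, then $v_1$ reaches $u_2$, so $a_1$ witnesses that $a_2$ is not highest; if $a_2$ precedes $a_1$, symmetrically $a_1$ is not highest. Either way we contradict the assumption that both are highest, so the leaf sets below distinct highest cut-arcs must be disjoint.

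The hard part will be the structural (domination) fact itself, since the rest is bookkeeping once it is available. To make it rigorous I would first note that the head $v$ of any cut-arc has in-degree $1$ (otherwise a second incoming arc would keep $v$ attached after deleting $a$), and then argue that the descendants of $v$ form exactly one of the two components of $N-a$, which is what actually separates the root from $x$. As an alternative route to disjointness one could invoke the earlier lemmas: Lemma~\ref{lem:cutSN} makes each leaf set below a cut-arc an SN-set, and Lemma~\ref{lem:densepar} then forces any two such sets to be nested or disjoint; but excluding the nested case for two \emph{highest} cut-arcs reduces to the same reachability argument, so I would prefer the direct path argument above.
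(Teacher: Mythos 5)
Your proof is correct and takes essentially the same approach as the paper, whose own proof is just two sentences: every leaf is below a cut-arc (hence below some highest cut-arc), and by the definition of highest cut-arc no leaf can be below two of them. Your separation fact for cut-arcs, the partial-order argument for coverage, and the path-ordering argument for disjointness are exactly the details the paper leaves implicit, so this is a faithful (and rigorous) expansion rather than a different route.
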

\begin{proof}
Clearly every leaf is below a cut-arc, so it must also be below
some highest cut-arc. By the definition of highest cut-arc a leaf
cannot be below two highest cut-arcs. $\Box$
\end{proof}

\pagebreak

\begin{lemma}
\label{lem:integersum} Let $N$ be a network consistent with dense triplet set $T$. Each maximal SN-set $S$ in $T$ can
be expressed as the union of the leaves below one or more highest cut-arcs in $N$.
\end{lemma}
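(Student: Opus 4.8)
The plan is to fix a maximal SN-set $S$ and show that it is exactly the union of those highest-cut-arc leaf sets that it contains. Let $a_1,\dots,a_m$ be the highest cut-arcs of $N$ and let $S_1,\dots,S_m$ be their respective sets of leaves below; by Lemma~\ref{lem:highcut} the sets $S_1,\dots,S_m$ partition $L$. It therefore suffices to prove that each $S_i$ satisfies either $S_i\subseteq S$ or $S_i\cap S=\emptyset$: granting this, $S=\bigcup\{\,S_i : S_i\subseteq S\,\}$ is visibly a union of leaf sets below highest cut-arcs, as required.

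To establish this dichotomy I would invoke the nesting of SN-sets. By Lemma~\ref{lem:cutSN} each $S_i$ is an SN-set, and $S$ is an SN-set by hypothesis, so Lemma~\ref{lem:densepar} forces $S\cap S_i\in\{\emptyset,\,S,\,S_i\}$. If $S\cap S_i=\emptyset$ the two sets are disjoint, and if $S\cap S_i=S_i$ then $S_i\subseteq S$; both are the outcomes we want. The only troublesome case is $S\cap S_i=S$, i.e.\ $S\subseteq S_i$. Here either $S=S_i$ (again fine), or $S\subsetneq S_i$; but then $S_i$ is an SN-set strictly containing the maximal SN-set $S$, and since $L(T)$ is by definition the only SN-set with that property we must have $S_i=L(T)$.

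The main obstacle is thus to rule out the possibility that a highest cut-arc has all of $L$ below it. Because the leaf sets below highest cut-arcs partition $L$, having $S_i=L(T)$ is equivalent to $a_i$ being the \emph{only} highest cut-arc, so it is enough to show that $N$ has at least two highest cut-arcs whenever $|L|\geq 2$ (which holds here, since $S$ maximal gives $S\neq L(T)$ and hence $|L|\geq 2$). I would argue this from the root $r$, which has outdegree $2$, say with children $c_1\neq c_2$, each of which reaches at least one leaf. If one of the arcs $(r,c_j)$ is itself a cut-arc, then it is automatically a highest cut-arc (its tail $r$ has indegree $0$, so no cut-arc can lie above it), and its leaf set is a proper nonempty subset of $L$ because the other child still reaches a leaf; hence there are at least two highest cut-arcs. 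Otherwise both arcs leaving $r$ lie inside a nontrivial biconnected component $B$ containing $r$; by the standing assumption that every nontrivial biconnected component has at least three outgoing arcs, $B$ has at least three outgoing arcs, each of which is a bridge and hence a cut-arc, and each of which is highest since $r\in B$ is the source so no cut-arc lies above $B$. In either case $N$ has at least two highest cut-arcs, so no single $S_i$ equals $L(T)$, the troublesome case cannot occur, and the dichotomy---and with it the lemma---follows.

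The delicate point to get right is precisely this last structural claim: one must verify carefully that an outgoing arc of a biconnected component really is a cut-arc (it cannot lie on an undirected cycle without contradicting maximality of the component) and that the outgoing arcs of the root's biconnected component are genuinely highest (because the block containing the source $r$ is topmost, nothing lies above it). This is where the paper's redundancy assumption on nontrivial biconnected components does the real work, and it is the part of the argument I would write out in full detail.
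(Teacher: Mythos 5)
Your proof is correct, but it takes a genuinely different route from the paper's. The paper argues by contradiction directly with triplets: after dismissing (with the word ``clearly'') the possibility that $S$ is a strict subset of the leaf set below a single highest cut-arc, it picks $x,z\in S$ and $y\notin S$ with $y,z$ below one highest cut-arc and $x$ below another; density plus consistency with $N$ force $yz|x\in T$, and the closure property of SN-sets then forces $y\in S$, a contradiction. You instead black-box the laminarity of SN-sets (Lemma~\ref{lem:densepar}) together with Lemma~\ref{lem:cutSN} and Lemma~\ref{lem:highcut} to get the dichotomy $S_i\subseteq S$ or $S_i\cap S=\emptyset$ for each partition class, so that the entire burden falls on ruling out $S\subsetneq S_i=L(T)$, i.e.\ on showing that $N$ has at least two highest cut-arcs. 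That structural fact is precisely what hides behind the paper's ``clearly'': both proofs need it (the paper's step that $S$ cannot be a strict subset of a single class silently assumes no highest cut-arc has all of $L$ below it), the paper asserts it, and you prove it. Your proof of it is sound, and it is indeed where the standing assumption that nontrivial biconnected components have at least three outgoing arcs does real work; without that assumption there are valid degree-respecting counterexamples in which the root's block has a single outgoing arc with every leaf below it. One caution on the step you flagged as delicate: ``an outgoing arc of a block is a bridge'' is false in arbitrary graphs (such an arc can lie on a cycle passing through a second block that shares the cut vertex), and maximality of $B$ alone does not rule this out; what saves it here is that every vertex of a network has total degree at most three, so the tail $p$ of an outgoing arc $(p,q)$ has exactly two edges inside $B$, any cycle through $(p,q)$ must use one of them, and since blocks partition the edge set this would force $(p,q)$ into $B$ itself --- make sure that degree argument is what you write out in full. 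In sum: the paper's proof is shorter and self-contained, re-deriving the laminarity it needs on the fly; yours is more modular and, by making the hidden structural claim explicit and proving it, actually fills a gap that the paper glosses over.
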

\begin{proof}
Suppose, by contradiction, that we have a maximal SN-set $S$ that does not have such a property. Clearly by
Lemma~\ref{lem:cutSN} $S$ cannot be a strict subset of the leaves below some single highest cut-arc $a$. Combining
this with Lemma \ref{lem:highcut} we conclude that $S$ intersects with the leaves below at least two highest cut-arcs.
It follows that there exist leaves $x, y, z$ such that $x$ is below highest cut-arc $a_1$, $y$ and $z$ are both below
highest cut-arc $a_2$, $x,z \in S$ and $y \notin S$. However, in this case the only triplet in $T$ on the leaves $x,
y, z$ is $yz|x$, meaning that $y \in S$ and thus yielding a contradiction. $\Box$
\end{proof}

\subsection{Simple level-2 networks}
\label{subsec:sim2}

We now introduce a class of level-2 networks that we name \emph{simple} level-2 networks. Informally these are the
basic building blocks of level-2 networks in the sense that each biconnected component of a level-2 network is in
essence a simple level-2 network. A simple characterisation of simple level-2 networks will be given in
Lemma~\ref{lem:withoutcut}. For the definition we first introduce a \emph{simple level-$k$ generator} (for $k\geq 1$),
which is defined as a directed acyclic multigraph:

\begin{enumerate}
\item that is biconnected; \item has a single root (indegree 0, outdegree 2), precisely $k$ recombination vertices
(indegree 2, outdegree at most 1) and apart from that only split vertices (indegree 1, outdegree 2),
\end{enumerate}

\noindent where vertices with indegree 2 and outdegree 0 as well as all arcs are labelled and called \emph{sides}.

\begin{definition}
A simple level-$k$ network $N$, for $k\geq 1$, is a network obtained by applying the following transformation (``leaf
hanging'') to some simple level-$k$ generator such that the resulting graph is a valid network:
\begin{enumerate}
\item replace each arc $X$ by a path and for each internal vertex $v$ of the path add a new leaf $x$ and an arc
$(v,x)$; we say that ``leaf $x$ is on side $X$''; and \item for each vertex $Y$ of indegree 2 and outdegree 0 add a
new leaf $y$ and an arc $(Y,y)$; we say that ``leaf $y$ is on side $Y$''.
\end{enumerate}
\end{definition}
Note that in the above transformation we obtain a valid network if and only if, whenever there are multiple arcs, we
replace at least one of them by a path of at least three vertices. A simple case-analysis (see
Lemma~\ref{lem:enumgens} in the appendix) shows that there is precisely one simple level-1 generator, and precisely
four simple level-2 generators, shown respectively in Figures~\ref{fig:lev1gen} and \ref{fig:lev2gen}. A simple
level-2 network built by hanging leaves from generator $8a$, $8b$, $8c$ or $8d$ is called a \emph{network
of type} $8a$, $8b$, $8c$ or $8d$ respectively.\\
\\
We do not attempt to define simple level-0 networks; instead we introduce the \emph{basic tree} which we define as the
directed graph on three vertices $\{v_1, v_2, v_3\}$ with arc set $\{ (v_1, v_2), (v_1,v_3) \}$. For the sake of
convenience we say that the basic tree, simple level-1 networks and simple level-2 networks are all \emph{simple
level-$\leq$2 networks.}


\begin{lemma}
\label{lem:withoutcut} A strict level-$k$ network is a simple level-$k$ network if and only if it contains no
nontrivial cut-arcs.
\end{lemma}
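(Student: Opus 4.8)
The plan is to prove the two directions separately, since each requires a different idea.

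For the ``only if'' direction I would start from the definition: a simple level-$k$ network $N$ is obtained from a simple level-$k$ generator $G$ by leaf hanging. Since $G$ is biconnected and has at least three vertices, it contains no cut-vertex, and hence (every vertex of $G$ then having degree at least two) no bridge; equivalently $G$ is $2$-edge-connected. I would then check that leaf hanging preserves this: subdividing an arc into a path and attaching pendant leaves to the internal vertices cannot create a bridge on the ``core'' of the network, because each subdivided path remains a path between its endpoints in the underlying undirected graph. The only bridges introduced are the pendant arcs leading to the newly hung leaves, and these are precisely the trivial cut-arcs. Hence $N$ has no nontrivial cut-arc. A count of recombination vertices (all $k$ of them coming from $G$ and lying in a single biconnected component) confirms that $N$ is strict level-$k$.

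For the ``if'' direction, let $N$ be a strict level-$k$ network with $k \geq 1$ and no nontrivial cut-arc. Since a cut-arc is by definition a bridge and conversely, the bridges of $N$ are exactly its arcs into leaves. I would form $N'$ by deleting all leaves and their incoming arcs, and argue that $N'$ has no bridge: deleting pendant vertices cannot create new bridges, and no surviving arc was a bridge of $N$. Thus $N'$ is $2$-edge-connected. The crucial step is to upgrade this to biconnectivity. Here I would exploit the fact that in any network every vertex has total degree at most three, a property inherited by $N'$: if $N'$ had a cut-vertex $v$, then, since $v$ has degree at most three and its removal leaves at least two components, some component would be joined to $v$ by a single arc, and that arc would be a bridge of $N'$, contradicting $2$-edge-connectivity. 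Hence $N'$ is biconnected, so $N$ has a unique nontrivial biconnected component containing all of its recombination vertices, and strictness forces this number to be exactly $k$.

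It then remains to recover a generator. I would suppress every vertex of indegree $1$ and outdegree $1$ in $N'$, contracting the subdivided paths back to single arcs, and call the result $G^{*}$. I would verify that $G^{*}$ is a simple level-$k$ generator: it is biconnected, it inherits the single root of $N$ (the root cannot have had a leaf child, since otherwise the arc to its other child would be a nontrivial cut-arc), it has exactly $k$ recombination vertices of indegree $2$ and outdegree at most $1$, and otherwise only split vertices. Reading the construction backwards shows that hanging leaves on $G^{*}$ reproduces $N$, with the suppressed vertices and the outdegree-zero recombination vertices accounting for the hung leaves; the validity condition on multi-arcs is automatic because $N$ is a simple digraph. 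The main obstacle I anticipate is exactly the connectivity upgrade: $2$-edge-connectivity does not in general imply biconnectivity, so the argument genuinely relies on the degree-three bound peculiar to networks together with careful treatment of the boundary cases (the root, and recombination vertices whose unique child was a leaf). The remaining verifications are essentially bookkeeping about which vertices become sides under the leaf-hanging correspondence.
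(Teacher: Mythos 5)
Your proof is correct and takes essentially the same route as the paper's: the forward direction from biconnectivity of the generators (leaf hanging only introducing trivial cut-arcs), and the converse by deleting leaves, suppressing indegree-1/outdegree-1 vertices, and using the degree-at-most-three bound to turn the absence of cut-arcs into the absence of cut-vertices (the paper compresses this into the single sentence ``any graph with degree at most three containing a cut-vertex also contains a cut-arc''), with strictness then forcing exactly $k$ recombination vertices so that the reduced graph is a simple level-$k$ generator. Your write-up merely makes explicit the connectivity upgrade and boundary cases that the paper leaves terse.
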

\begin{proof}
A simple level-$k$ network contains no nontrivial cut-arcs because simple level-$k$ generators are biconnected. Now
take a strict level-$k$ network $N$ with no nontrivial cut-arcs. Delete all leaves and suppress all vertices with
indegree and outdegree equal to one. The resulting graph is biconnected because any graph with degree at most three
containing a cut-vertex also contains a cut-arc. This graph is moreover a strict level-$k$ network and therefore it
has $k$ recombination vertices. It follows that this graph is a simple level-$k$ generator. $\Box$.
\end{proof}

\begin{figure}\centering
\vspace{-.2cm}\includegraphics{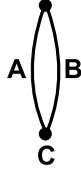} \caption{The only simple level-1 generator.}
\vspace{-.1cm}\label{fig:lev1gen}
\end{figure}
\begin{figure}\centering
\vspace{-.2cm}\includegraphics{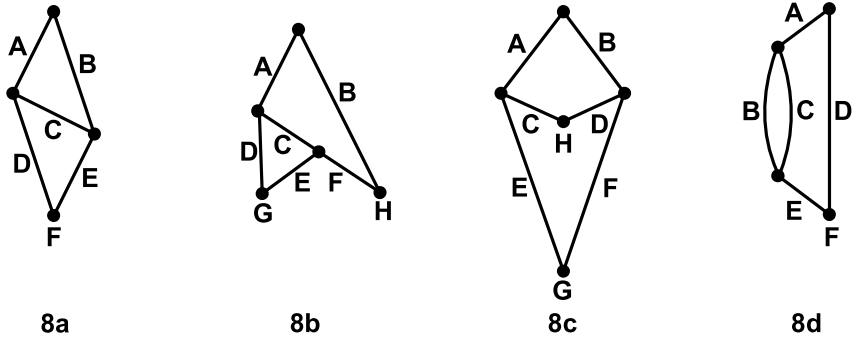} \caption{The four simple level-2
generators.}\vspace{-.1cm} \label{fig:lev2gen}
\end{figure}

\subsection{Constructing simple level-2 networks}

All simple level-1 networks can be found by the algorithm in \cite{JS2}. In this section we describe an algorithm that
constructs all simple level-2 networks consistent with a dense set of triplets. We start with some definitions. We say
that $z$ is a \emph{low leaf} of a network $N$ if its parent is a sink in $N\setminus L$. If arc $a$ enters a low leaf
$z$ we say that $a$ is a \emph{low arc} of $N$. An arc leaving the root or a child of the root is called a \emph{high
arc}. Recall that a \emph{recombination leaf} is a leaf whose parent is a recombination vertex. A network is a
\emph{caterpillar} if the deletion of all leaves gives a directed path. We call a set of leaves $L'$ a
\emph{caterpillarset} w.r.t. $T$ if we can write $L'=\{\ell_1,\ldots,\ell_k\}$ such that $\{\ell_1,\ldots,\ell_i\}$ is
an SN-set of $T$ for all $1\leq i\leq k$ or if $L'=\emptyset$. Lemma~\ref{lem:caterpillarsets}
explains why we call these sets caterpillarsets.\\
\\
Theorem~\ref{thm:SL2} shows that all simple level-2 networks can be found by Algorithm~\ref{alg:SL2}: SL2, which uses
the subroutine FindCaterpillarsets in Algorithm~\ref{alg:findcaterpillarsets}.

\begin{lemma}\label{lem:caterpillarsets}
Suppose that network $N$ consistent with dense triplet set $T$ contains (as a subgraph) a caterpillar with leaves
$L'$. Then is $L'$ a caterpillarset w.r.t. $T$.
\end{lemma}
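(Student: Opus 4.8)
The plan is to exploit Lemma~\ref{lem:cutSN}, which guarantees that the leaves below any cut-arc form an SN-set, and to show that the spine of the caterpillar consists entirely of cut-arcs of $N$. First I would pin down the structure of the caterpillar subgraph $C$. Deleting its leaves yields a directed path $p_1 \to p_2 \to \cdots \to p_m$; a recombination vertex would retain indegree $2$ after leaf-deletion, since its two parents are themselves non-leaves and hence survive, contradicting the path condition. Thus $C$ contains no recombination vertices and is in fact a caterpillar tree: $p_1$ is its root, each of $p_1,\dots,p_{m-1}$ carries one pendant leaf and one spine-child, and $p_m$ carries two pendant leaves. I would label the pendant leaves $a_1,\dots,a_{m+1}$ so that $a_i$ hangs below $p_i$ for $i\le m-1$ and $a_m,a_{m+1}$ both hang below $p_m$, so that $L'=\{a_1,\dots,a_{m+1}\}$.

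The crucial step is to transfer this local structure from $C$ to $N$. Each spine vertex $p_j$ already has outdegree $2$ inside $C$, so its outdegree in $N$ is at least two; since the only outdegrees occurring in a network are $0,1,2$, it is exactly two, forcing $p_j$ to be a split vertex (or the root) of $N$ with indegree at most one. For $2\le j\le m$ the arc $p_{j-1}\to p_j$ already lies in $N$, so $p_j$ is a split vertex whose \emph{unique} in-arc in $N$ is precisely the spine arc, and its two out-arcs in $N$ are exactly those present in $C$; moreover every pendant $a_i$, being a labelled leaf, is a sink of $N$. From these facts it follows that for $1\le j\le m-1$ the set of vertices reachable from $p_{j+1}$ in $N$ is exactly $\{p_{j+1},\dots,p_m\}\cup\{a_{j+1},\dots,a_{m+1}\}$, and that the only arc joining this set to the rest of $N$ is the spine arc $p_j\to p_{j+1}$. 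Removing that arc therefore disconnects $N$, so every spine arc is a cut-arc.

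With the cut-arcs in hand the conclusion is immediate. By Lemma~\ref{lem:cutSN} the leaves below $p_j\to p_{j+1}$, namely $\{a_{j+1},\dots,a_{m+1}\}$, form an SN-set for each $1\le j\le m-1$; together with the singleton $\{a_{m+1}\}$ these yield the nested chain $\{a_{m+1}\}\subset\{a_m,a_{m+1}\}\subset\cdots\subset\{a_2,\dots,a_{m+1}\}$ of SN-sets. For the full set $L'$ I would argue separately: if $p_1$ is the root of $N$ then $L'=L(N)=L(T)$, which is an SN-set, and otherwise $p_1$ is a split vertex of $N$ whose unique in-arc is, by the same reachability argument, a cut-arc with exactly $L'$ below it. Ordering the leaves as $\ell_1=a_{m+1},\ \ell_2=a_m,\ \dots,\ \ell_{m+1}=a_1$ then makes every prefix $\{\ell_1,\dots,\ell_i\}$ an SN-set, which is precisely the definition of a caterpillarset. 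The main obstacle is the middle step: establishing that $N$ attaches no additional in-arcs to the spine vertices, so that the spine arcs genuinely remain cut-arcs in $N$ and not merely in the subgraph $C$. Everything else is bookkeeping on top of Lemma~\ref{lem:cutSN}.
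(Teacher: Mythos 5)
Your proof is correct and follows essentially the same route as the paper's: order the leaves of the caterpillar deepest-first and apply Lemma~\ref{lem:cutSN} to the spine arcs to get the nested chain of SN-sets required by the definition of a caterpillarset. The only difference is that you explicitly verify, via the degree constraints on network vertices, that the spine arcs of the caterpillar subgraph are genuinely cut-arcs of $N$ (and handle the case where the caterpillar's root is the root of $N$), a step the paper's one-line proof leaves implicit.
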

\begin{proof}
Write $L'=\{\ell_1,\ldots,\ell_k\}$ such that $\ell_1$ has distance $k-1$ and $\ell_i$ ($2\leq i\leq k$) has distance
$k-i+1$ from the root of the caterpillar. Then by Lemma~\ref{lem:cutSN} the set $\{\ell_1,\ldots,\ell_i\}$ is an
SN-set of $T$ for all $1\leq i\leq k$. $\Box$
\end{proof}

\begin{algorithm}[h]
\caption{SL2} \label{alg:SL2}
\begin{algorithmic} [1]
\STATE $Net:=\emptyset$\\
\FOR{each leaf $x\in L$}
\STATE $L' := L\setminus\{x\}$\\
\STATE $T' := T | L'$\\
\STATE Construct all caterpillarsets w.r.t. $T'$ by the subroutine FindCaterpillarsets.\\
\FOR{each caterpillarset $C$}
\STATE $L'':=L'\setminus C$\\
\STATE $T'':=T'|L''$\\
\STATE Build the unique tree $N=(V,A)$ consistent with $T''$ if it exists by the algorithm in \cite{aho}.
\STATE $V:=V\cup\{r'\}$; $A:=A\cup\{(r',r)\}$ \COMMENT{with $r$ the root of $N$ and $r'$ a new dummy root}\\
\FOR{every arc $a_1=(u_1,v_1)$ and every low or high arc $a_2=(u_2,v_2)$ in $A$}\label{line:arcsy} \IF{$|C|\geq 2$}
\STATE Construct the caterpillar $(V_{cat},A_{cat})$ consistent with $T|C$ and let y be its root.\\
\ELSIF{$|C|=1$}
\STATE $V_{cat}:=C$, $A_{cat}:=\emptyset$ and let $y$ be such that $C=\{y\}$.\\
\ELSE
\STATE $V_{cat}:=\{y\}$, $A_{cat}:=\emptyset$ \COMMENT{with $y$ a dummy leaf}\\
\ENDIF
\STATE \COMMENT{Subdivide $a_1$ and $a_2$ and put the caterpillar below the new vertices as follows:}\\
\STATE $V':=V\cup V_{cat}\cup\{w_1, w_2, y'\}$\\
\STATE $A':=A\cup A_{cat}\setminus \{a_1,a_2\} \cup\{(u_i,w_i),(w_i,v_i),(w_i,y')|i=1,2\}\cup\{(y',y)\}$\\
\FOR{every two arcs $a_3=(u_3,v_4)$ and $a_4=(u_4,v_4)$ in $A'$}\label{line:arcs}
\STATE \COMMENT{Subdivide $a_3$ and $a_4$ and make $x$ a recombination leaf below the new vertices as follows:}\\
\STATE $V'':=V'\cup\{w_3, w_4, x', x\}$\\
\STATE $A'':=A'\setminus \{a_3,a_4\} \cup\{(u_i,w_i),(w_i,v_i),(w_i,x')|i=3,4\}\cup\{(x',x)\}$\\
\STATE $N'':=(V'',A'')$\\
\IF{$C=\emptyset$}
\STATE $N'':=N''\setminus\{y\}$ \COMMENT{remove the dummy leaf}
\STATE Suppress the former parent of $y$ with indegree and outdegree both equal to 1.\\
\ENDIF
\STATE $N'':=N''\setminus\{r'\}$ \COMMENT{remove the dummy root}\\
\IF{$N''$ is a simple level-2 network consistent with $T$}\label{line:consistency}
\STATE $Net:=Net\cup N''$\\
\ENDIF \ENDFOR \ENDFOR \ENDFOR \ENDFOR\\
\STATE \textbf{return} $Net$\\
\end{algorithmic}
\end{algorithm}
\begin{algorithm}[H]
\caption{FindCaterpillarsets} \label{alg:findcaterpillarsets}
\begin{algorithmic} [1]
\STATE $Cat := \{\emptyset\}$\\
\STATE Compute all SN-sets by the algorithm in \cite{JS1}.\\
\FOR{each singleton SN-set $S$}
\STATE $C := S$\\
\STATE $Cat := Cat\cup\{C\}$\\
\WHILE{there is an SN-set $C\cup \{x\}$ with $x\notin C$}
\STATE $C := C\cup\{x\}$\\
\STATE $Cat := Cat\cup\{C\}$\\
\ENDWHILE \ENDFOR
\STATE \textbf{return} $Cat$ \\
\end{algorithmic}
\end{algorithm}

\pagebreak

\begin{theorem}\label{thm:SL2}
Algorithm SL2 finds all simple level-2 networks consistent with a dense triplet set.
\end{theorem}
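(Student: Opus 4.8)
The statement has two directions. \textbf{Soundness} — every network returned lies in the target class — is immediate: a graph is placed in $Net$ only after passing the explicit test on line~\ref{line:consistency} that it be a simple level-2 network consistent with $T$. The entire content of the theorem is therefore \textbf{completeness}: every simple level-2 network $N$ consistent with $T$ must be produced on some run of the nested loops. Since SL2 exhaustively enumerates the removed leaf $x$, the caterpillarset $C$, and the four subdivided arcs $a_1,a_2,a_3,a_4$, it suffices to exhibit, for an arbitrary such $N$, one choice of these objects for which the construction yields a graph isomorphic to $N$.

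The plan is to reverse the construction by peeling $N$ in two stages. First I would locate a recombination vertex $x'$ of $N$ having no recombination vertex below it. Such a vertex always has a recombination leaf $x$: if its single outgoing arc led to a non-leaf $v$, then, as the whole region below $v$ is tree-structured (no further recombination vertices), that arc would be a nontrivial cut-arc, contradicting Lemma~\ref{lem:withoutcut}. Deleting $x$ and $x'$ and suppressing the two resulting degree-two vertices yields a level-1 network $N_1$ on $L\setminus\{x\}$ consistent with $T|(L\setminus\{x\})$. In $N_1$ the unique remaining recombination vertex $y'$ carries below it the leaves that, in $N$, were biconnected only through $x'$; I would argue that these form a caterpillar $C$ — a single new recombination vertex can only re-biconnect an appendage whose nontrivial cut-arcs lie along one path — and hence, by Lemma~\ref{lem:caterpillarsets}, a caterpillarset of $T|(L\setminus\{x\})$, so that it is enumerated by FindCaterpillarsets (Algorithm~\ref{alg:findcaterpillarsets}). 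Deleting $C$ together with $y'$ and suppressing leaves a tree $N_{tree}$ on $L'':=L\setminus(\{x\}\cup C)$.

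The second stage identifies $N_{tree}$ with the tree SL2 rebuilds. Since $N$ is consistent with $T$ and consistency of a triplet depends only on the ancestry of its three leaves, $N_{tree}$ is consistent with $T'':=T|L''$. As $T$ is dense so is $T''$, and a dense tree-consistent triplet set admits exactly one consistent tree (the tree determined by the forced triplet on every set of three leaves); hence the algorithm of Aho et al.~\cite{aho} returns precisely $N_{tree}$. It then remains to check that reversing the peeling lies within the search space: subdividing two arcs of $N_{tree}$ and hanging $C$ below the new recombination vertex recreates $N_1$, and subdividing two arcs of $N_1$ and hanging $x$ recreates $N$. As the two inner loops range over all arc pairs, the reassembling choices of $a_3,a_4$ and of $a_1$ are available; one must additionally verify that the second support arc of $y'$ may always be taken to be a low or high arc, matching the restriction on $a_2$, and dispose of the degenerate cases $C=\emptyset$ and $|C|=1$ for which the dummy-leaf and dummy-root bookkeeping is used. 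The consistency test on line~\ref{line:consistency} is then passed and $N$ is placed in $Net$.

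The main obstacle is the structural core of the first stage: proving that \emph{every} simple level-2 network peels in exactly this two-step fashion, i.e.\ that removing a bottom-most recombination vertex always exposes a single caterpillar $C$ hanging from the other recombination vertex with a tree underneath. Establishing this cleanly — in particular that the exposed appendage is a caterpillar rather than an arbitrary tree, and that the second support arc of $y'$ is low or high — is where a case analysis over the four generators of Figure~\ref{fig:lev2gen} is likely to be needed; the tree-uniqueness step and the handling of the dummy vertices are by comparison routine.
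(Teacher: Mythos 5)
Your outline follows the paper's own proof step for step: remove a recombination leaf $x$ and its parent, observe that what then hangs below the surviving recombination vertex is a caterpillar, a single leaf, or nothing (hence a caterpillarset by Lemma~\ref{lem:caterpillarsets}, so enumerated by FindCaterpillarsets), rebuild the unique Aho tree on the remaining leaves, and note that the nested loops contain the arc choices that reassemble $N$, with line~\ref{line:consistency} guaranteeing soundness. Your extra argument that a bottom-most recombination vertex must have a leaf child (via Lemma~\ref{lem:withoutcut}) is correct and is a point the paper leaves implicit.

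However, there is a genuine gap, and it is precisely the step you defer at the end: verifying that the restriction of $a_2$ to low or high arcs in line~\ref{line:arcsy} never excludes the only way to reassemble some network $N$. This is not a peripheral check; it is the bulk of the paper's proof, and without it completeness simply does not follow, since the loop over $a_2$ is deliberately \emph{not} exhaustive (that restriction is what keeps SL2 at $O(n^8)$ rather than a higher power). The paper settles it by a coordinated case analysis over the four generators of Figure~\ref{fig:lev2gen}, in which the removed recombination leaf and caterpillarset are chosen in a type-dependent way: for type $8a$ one removes the leaf on side $F$ and the caterpillarset on side $E$, and then either side $B$ has at least two leaves, in which case the arc entering the lowest leaf on side $B$ is a low arc usable as $a_2$, or side $B$ has at most one leaf, in which case $a_2$ can be a high arc leaving the dummy root or its child (this is exactly why the dummy root exists); type $8b$ is handled the same way after removing the leaves on sides $G$ and $H$; for type $8c$ either some side among $C,D,E,F$ has at least two leaves, giving a low arc, or all four sides have at most one leaf and then \emph{every} arc one needs to subdivide is low; and for type $8d$ a low arc always exists because sides $B$ and $C$ cannot both be empty in a valid network. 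Your other deferred claim --- that the appendage below $y'$ is a caterpillar rather than an arbitrary tree --- is comparatively harmless: it is immediate from the generator structure, since below the surviving recombination vertex there is only the subdivided arc forming side $E$ (types $8a$, $8d$) or the single leaf on side $H$ (types $8b$, $8c$). But a proof that stops short of the low/high-arc case analysis has omitted the theorem's actual content.
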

\begin{proof}
Consider any simple level-2 network. If we remove a recombination leaf $x$ and its parent $x'$, there is one
recombination vertex left and below it is a caterpillar, a single leaf or nothing (these are all caterpillarsets by
Lemma~\ref{lem:caterpillarsets} and will hence be found by Algorithm~\ref{alg:findcaterpillarsets}). If we
subsequently remove this recombination vertex and the caterpillar $Q$ below it, we obtain a tree which we can
construct using the algorithm of Aho et al. (and is unique, as shown in \cite{JS1}). From this tree we can reconstruct
the network if we choose the right arcs in the following procedure. We subdivide two specific arcs, connect the new
vertices to a new recombination vertex $y'$ which in turn is connected to the root of $Q$ and then subdivide two other
specific arcs, connect the new vertices to another recombination vertex $x'$ and connect $x'$ to $x$. Since on some
sides of the simple level-2 network there might be no leaves, we have to consider adding a dummy recombination leaf
(for when there are no leaves between the two recombination vertices) and add a
dummy root (for when there are no leaves on a side connected to the root).\\
\\
We will now prove that in line~11 we can always choose at least one high or low arc. First consider a network of type
$8a$. We can first remove the leaf on side $F$ and then the caterpillarset consisting of all leaves (possibly none) on
side $E$. If there are at least two leaves on side $B$ then one of the arcs we choose to subdivide is a low arc (we
subdivide the arc leading to the lowest leaf on side $B$). If on the other hand there is at most one leaf on side $B$
then one of the arcs we subdivide is a high arc (we subdivide the arc leaving the dummy root if there are no leaves on
side $B$ and we subdivide an arc leaving the child of the dummy root if there is exactly one leaf on
side $B$).\\
\\
Now consider a network of type $8b$. We first remove the leaf on side $G$ and the caterpillarset consisting of the
leaf on side $H$. Then we can argue just like with $8a$ that if there are at least two leaves on side $B$ we subdivide
a low arc and otherwise a high arc. In a network of type $8c$ we remove the leaf on side $G$ and the caterpillarset
consisting of the leaf on side $H$. If on one of the sides $C$, $D$, $E$ or $F$ there are at least two leaves we can
subdivide a low arc on this side. Otherwise there is at most one leaf on each of the sides $C$, $D$, $E$ and $F$ and
therefore all arcs we want to subdivide are low. Finally, consider a network of type $8d$. We remove the leaf on side
$F$ and the caterpillarset consisting of the leaves on side $E$. Then we can always subdivide a low arc unless there
are no leaves on sides $B$ and $C$, which is not allowed. This concludes the proof that in line~11 we
can always choose one high or low arc.\\
\\
Now suppose that triplet set $T$ is consistent with some simple level-2 network $N$. At some iteration the algorithm
will choose the right leaf and caterpillarset to remove and the right arcs to subdivide and the algorithm will
construct the network $N$. Furthermore, the algorithm checks in line~32 whether the output network is a simple level-2
network consistent with $T$. We conclude that the algorithm will find exactly all simple level-2 networks consistent
with $T$. $\Box$
\end{proof}

\begin{lemma}\label{lem:arcs}
Any level-2 network with $n$ leaves has $O(n)$ arcs.
\end{lemma}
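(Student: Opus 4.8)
The plan is to reduce the statement to bounding the number $r$ of recombination vertices, and then to bound $r$ linearly in $n$ using the level-2 property together with the standing assumption that every nontrivial biconnected component has at least three outgoing arcs. First I would count arcs by a degree-sum argument. Writing $s$ for the number of split vertices, summing outdegrees gives the number of arcs as $2 + 2s + r$ (outdegree $2$ at the root and at each split vertex, $1$ at each recombination vertex, $0$ at each leaf), while summing indegrees gives $s + 2r + n$. Equating the two yields $s = r + n - 2$, and substituting back shows that the number of arcs equals $3r + 2n - 2$. Hence it suffices to prove $r = O(n)$.

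Next I would bound $r$ by the number $B$ of nontrivial biconnected components. Every recombination vertex has both of its incoming arcs inside a single biconnected component: removing one such arc cannot disconnect $N$, since the vertex remains attached through its other parent, so neither incoming arc is a cut-arc. Thus each recombination vertex lies in a nontrivial biconnected component, and by the level-2 definition each such component contains at most two recombination vertices, giving $r \leq 2B$.

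Then I would bound $B \leq n-1$ by a contraction argument. Because every vertex has total degree at most three, no vertex can have degree two in two distinct nontrivial biconnected components, so these components are pairwise vertex-disjoint; since every cycle of $N$ lies inside a single biconnected component, contracting each nontrivial component to a single node turns $N$ into a rooted tree $\mathcal{T}$ on the same $n$ leaves. In $\mathcal{T}$ every internal node is either a split vertex, of outdegree $2$, or a contracted biconnected component, whose outdegree equals its number of outgoing cut-arcs. The latter is at least three by assumption, and distinct outgoing cut-arcs reach distinct vertices below (a shared target would force a recombination vertex with both parents in the component, hence inside it), so no outdegree is lost in the contraction. Thus every internal node of $\mathcal{T}$ has outdegree at least two, so $\mathcal{T}$ has at most $n-1$ internal nodes and in particular $B \leq n-1$. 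Combining the three steps gives $r \leq 2(n-1)$ and therefore at most $3r + 2n - 2 \leq 8n - 8 = O(n)$ arcs.

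The main obstacle I expect is the structural bookkeeping in the contraction step: verifying rigorously that nontrivial biconnected components are pairwise vertex-disjoint (which relies on the degree-three bound), that contracting them yields a genuine rooted tree, and that each contracted node really has outdegree at least two. This last point is precisely where both the ``at least three outgoing arcs'' assumption and the fact that distinct outgoing cut-arcs reach distinct subnetworks are needed. By contrast, the degree-counting identity and the inequality $r \leq 2B$ are routine once this block structure is established.
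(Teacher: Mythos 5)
Your proof is correct, but it takes a genuinely different route from the paper. The paper argues by induction on $n$: using the standing assumption that every nontrivial biconnected component has at least three outgoing arcs, it finds a leaf that is not a recombination leaf, removes it, repairs the network (suppressing the parent, possibly collapsing a component that drops to two outgoing arcs into a split vertex, or removing a multiple arc that arises in a component of type $8d$), and checks that each such step destroys at most eight arcs, giving the bound $8n$. Your argument is instead a global counting argument: the degree-sum identity reduces everything to bounding the number $r$ of recombination vertices, the observation that in-arcs of recombination vertices are never cut-arcs places each recombination vertex in a nontrivial block, and the block--cut-tree structure (with vertex-disjointness of nontrivial blocks forced by the degree-3 bound, and outdegree at least three per contracted block by the standing assumption) gives $B \leq n-1$, hence $r \leq 2(n-1)$ and at most $8n-8$ arcs. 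What your approach buys: it avoids the paper's generator-specific surgery and case analysis entirely, and it generalises verbatim to level-$k$ networks (the same identity gives $3r + 2n - 2$ arcs and $r \leq kB \leq k(n-1)$, i.e., $O(kn)$ arcs), which the paper's inductive repair step does not obviously do. What the paper's approach buys: it is elementary and self-contained, needing no facts about block decompositions or cut vertices. Two small points you should tighten: when you claim neither in-arc of a recombination vertex is a cut-arc, "the vertex remains attached through its other parent" is not quite enough --- you should note that every vertex remains reachable from the root after deleting that arc (reroute any path through the deleted arc via the other parent, which is legitimate since no directed root-to-parent path can pass through the child in a DAG); and when you call the contracted graph a \emph{rooted} tree you should verify it is an out-arborescence, i.e., that every non-root node of the contracted graph has exactly one incoming cut-arc (this follows since the underlying graph is a tree and the root is the unique node of indegree zero). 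Both repairs are routine.
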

\begin{proof}
The proof is deferred to the appendix. $\Box$
\end{proof}

\begin{lemma}\label{lem:consistency}
Given a level-2 network $N$ and a set of triplets $T$ one can decide in time $O(n^3)$ whether $N$ is a simple level-2
network consistent with $T$.
\end{lemma}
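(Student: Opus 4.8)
The plan is to handle two tasks separately: verifying that $N$ is a \emph{simple} level-2 network, and verifying that $N$ is consistent with $T$.

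For the first task I would invoke Lemma~\ref{lem:withoutcut}, which tells us that a strict level-$k$ network is a simple level-$k$ network exactly when it has no nontrivial cut-arcs. So I would first count the recombination vertices (vertices of indegree 2) to confirm that $N$ is a strict level-2 network, and then test, for every arc $a=(u,v)$ with $v$ not a leaf, whether removing $a$ disconnects $N$. By Lemma~\ref{lem:arcs}, $N$ has only $O(n)$ arcs and hence $O(n)$ vertices, so a single connectivity test costs $O(n)$ and all $O(n)$ tests together cost $O(n^2)$. If any nontrivial cut-arc is found, or the recombination count is wrong, we reject; otherwise $N$ is a simple level-2 network, and the same computation (deleting leaves and suppressing degree-two vertices, as in the proof of Lemma~\ref{lem:withoutcut}) identifies which of the four generators of Figure~\ref{fig:lev2gen} underlies $N$.

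For the second task I would exploit the very restricted structure of a simple level-2 network. Each leaf of $N$ lies on exactly one \emph{side} of the underlying generator, and the leaves on a common side are arranged in a chain; in a single $O(n)$ sweep I would record, for every leaf, its side and its position along that side. I claim that whether a triplet $xy|z$ is consistent with $N$ (in the disjoint-paths sense of Definition~\ref{def:con}) is completely determined by the generator type together with the sides and relative positions of $x$, $y$ and $z$. Granting this, consistency of any one triplet can be decided in $O(1)$ time by a constant-size lookup keyed on this bounded data. I would then simply iterate over all triplets of $T$ (of which there are at most $O(n^3)$, since there are only $O(n^3)$ possible triplets on $n$ leaves) and check each in $O(1)$, accepting iff all of them are consistent; this costs $O(n^3)$ in total, dominating the $O(n^2)$ of the first task.

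The main obstacle is establishing the claim that consistency is a function of the bounded side-and-position data. In a tree this would merely be a comparison of lowest common ancestors, but here the four internally vertex-disjoint paths required by Definition~\ref{def:con} must be routed around up to two recombination vertices, so the disjointness condition genuinely depends on the generator's cycle structure. Since there are only four generators, each of constant size with a constant number of sides, I expect to settle the claim by a finite case analysis: for each generator and each choice of sides for $x$, $y$ and $z$ one determines, once and for all, which of the at most three triplets on $\{x,y,z\}$ admit the required disjoint paths, taking the relative positions of co-sided leaves into account. This analysis is routine but tedious, and it is precisely what turns the generic polynomial-time disjoint-paths test into a constant-time per-triplet check.
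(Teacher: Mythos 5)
Your proposal is correct, and its core device is exactly the paper's: the paper defines a function $f(t,X,Y,Z)$ that records, for each generator type $t\in\{8a,8b,8c,8d\}$ and each assignment of sides to $x,y,z$ (with a fixed convention on the relative order of co-sided leaves, plus outright rejection when $z$ lies below $x$ or $y$ on a common side), whether $xy|z$ is consistent; it asserts this finite table is computable in constant time and then spends $O(1)$ per triplet over all $O(n^3)$ triplets --- precisely your lookup claim, left at the same ``routine but tedious finite case analysis'' level of detail as yours. Where you genuinely diverge is the structural phase. The paper does not invoke Lemma~\ref{lem:withoutcut} algorithmically; instead it pattern-matches $N$ against each of the four generators via a subroutine \emph{LeafPathBetween}$(p,q,A)$ (find a $p$--$q$ path with only leaves hanging off it, avoiding $A$), called a constant number of times between the constantly many root, split and recombination vertices; this makes verification and the extraction of the side-and-order data one and the same $O(n^2)$ computation. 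You instead certify simplicity by counting recombination vertices and testing each non-leaf-ending arc for being a cut-arc ($O(n^2)$ using Lemma~\ref{lem:arcs}), then recover the generator by the suppression argument inside Lemma~\ref{lem:withoutcut} and read off sides in a separate sweep; this is more modular and reuses proven lemmas, at the cost of an extra pass. One caveat on your wording: a recombination count of $2$ does not by itself certify that $N$ is a \emph{strict} level-2 network (a priori the two recombination vertices could sit in different biconnected components, making $N$ level-1), but combined with the absence of nontrivial cut-arcs it does, since then there is a single nontrivial biconnected component and every recombination vertex lies in it; so your combined test is sound as stated, only the order of the two checks matters for the justification.
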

\begin{proof}
The proof is deferred to the appendix. $\Box$
\end{proof}

\begin{lemma}\label{lem:SL2runtime}
Algorithm SL2 can be implemented to run in time $O(n^8)$.
\end{lemma}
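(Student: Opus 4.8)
The plan is to establish the $O(n^8)$ runtime by multiplying together the iteration counts of the nested loops in Algorithm SL2 and charging each innermost operation its appropriate cost, using Lemmas~\ref{lem:arcs} and \ref{lem:consistency} for the two main ingredients. First I would bound the number of arcs: by Lemma~\ref{lem:arcs} any level-2 network on $n$ leaves has $O(n)$ arcs, and the intermediate graphs built during the algorithm (the tree $N$, and the graphs $N'$, $N''$) all have $O(n)$ vertices and arcs, since we only ever add a constant number of dummy vertices to structures on at most $n$ leaves.

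Next I would account for the loops one at a time, from outermost to innermost. The loop over leaves $x \in L$ contributes a factor $O(n)$. For each $x$, the call to FindCaterpillarsets runs the SN-set algorithm of \cite{JS1} and then walks up chains of SN-sets; I would note that there are $O(n)$ caterpillarsets (each is determined by its maximal element together with a nested chain, and the total size is polynomially bounded), and that computing all SN-sets takes time subsumed by the later costs. The loop over caterpillarsets $C$ thus contributes another factor $O(n)$. Inside, building the unique tree consistent with $T''$ via the algorithm of \cite{aho} costs at most $O(n^3)$, which will be dominated. The loop in line~\ref{line:arcsy} over one arbitrary arc $a_1$ and one \emph{low or high} arc $a_2$ contributes $O(n) \cdot O(1) = O(n)$ choices, since there are $O(n)$ arcs total but only $O(1)$ high arcs and $O(n)$ low arcs (so the pair count is $O(n)$, not $O(n^2)$); here it matters that $a_2$ is restricted. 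Then the innermost loop in line~\ref{line:arcs} over \emph{two} arbitrary arcs $a_3, a_4$ in $A'$ contributes $O(n^2)$ choices, since $A'$ has $O(n)$ arcs. Multiplying the loop factors gives $O(n) \cdot O(n) \cdot O(n) \cdot O(n^2) = O(n^5)$ iterations reaching line~\ref{line:consistency}.

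At line~\ref{line:consistency} each candidate network $N''$ must be checked for being a simple level-2 network consistent with $T$; by Lemma~\ref{lem:consistency} this costs $O(n^3)$ per check. Multiplying the $O(n^5)$ iterations by the $O(n^3)$ consistency test yields the claimed $O(n^8)$ bound, and I would verify that every other per-iteration operation (subdividing a constant number of arcs, adding a constant number of vertices and arcs, removing dummies) costs only $O(n)$ and is therefore absorbed.

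\textbf{The main obstacle} will be pinning down the loop factors honestly rather than over-counting: in particular, justifying that $a_2$ ranges over only $O(n)$ choices (being restricted to low or high arcs) rather than all $O(n)$ arcs paired with $a_1$, and confirming that the number of caterpillarsets and the work in FindCaterpillarsets are genuinely $O(n)$ and do not secretly hide a larger factor. I would also double-check that the $O(n^3)$ cost of the Aho et al. tree construction and of FindCaterpillarsets per outer iteration is dominated by the $O(n^8)$ total and need not be multiplied into the innermost loop. Once these counts are verified, the runtime follows by straightforward multiplication.
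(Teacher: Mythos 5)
Your overall accounting follows the same decomposition as the paper's proof: $O(n)$ choices of the removed leaf $x$, $O(n)$ caterpillarsets, the $O(n)$ bound on arcs from Lemma~\ref{lem:arcs}, $O(n^2)$ choices for $(a_3,a_4)$ in line~\ref{line:arcs}, and the $O(n^3)$ consistency check from Lemma~\ref{lem:consistency}. The gap is in your treatment of line~\ref{line:arcsy}. You assert that there are ``only $O(1)$ high arcs and $O(n)$ low arcs'' and, in the same sentence, that the pair count for $(a_1,a_2)$ is $O(n)\cdot O(1)=O(n)$. These two claims are inconsistent: if $a_2$ may range over $O(n)$ low arcs, then the pairs $(a_1,a_2)$ number $O(n)\cdot O(n)=O(n^2)$, and your grand total becomes $O(n)\cdot O(n)\cdot O(n^2)\cdot O(n^2)\cdot O(n^3)=O(n^9)$, not $O(n^8)$. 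The entire purpose of restricting $a_2$ to low or high arcs is defeated if the low arcs can be linear in number, so as written the claimed bound does not follow.

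The paper's proof rests on exactly the point you contradict: it asserts that the number of low \emph{and} high arcs is $O(1)$. For high arcs this is immediate (at most two arcs leave the root and at most four leave its children). For low arcs it requires an argument, since in an arbitrary binary tree a low leaf is a leaf of a cherry and a balanced tree has $\Theta(n)$ of them. The point is that the trees relevant here are not arbitrary: a tree obtained from a simple level-2 network by deleting a recombination leaf, its parent, the second recombination vertex and the caterpillar below it is a union of a constant number of leaf-paths (the sides of a generator), and hence has only a constant number of cherries; an implementation may therefore discard any iteration in which the constructed tree has more than this constant number of cherries, since such a tree can never be completed to a simple level-2 network and every candidate would fail the test in line~\ref{line:consistency} anyway. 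Some argument of this kind, pinning the effective range of $a_2$ to $O(1)$ candidates, is what your proof is missing; with it, the multiplication gives $O(n)\cdot O(n)\cdot O(n)\cdot O(1)\cdot O(n^2)\cdot O(n^3)=O(n^8)$ as claimed.
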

\begin{proof}
The number of caterpillarsets and arcs (by Lemma~\ref{lem:arcs}) are both $O(n)$ and the number of low and high arcs
is $O(1)$. Line~32 can be executed in time $O(n^3)$ by Lemma~\ref{lem:consistency}. Therefore, the algorithm can be
implemented to run in time $O(n^8)$. $\Box$
\end{proof}

\subsection{From simple to general level-2 networks}

This section explains why we can build level-2 networks by recursively building simple level-$\leq 2$ networks.\\
\\
Let $T$ be a dense set of triplets and $N$ a level-2 network consistent with $T$. Define $Collapse(N)$ as the network
obtained by, for each highest cut-arc $a = (u,v)$, replacing $v$ and everything reachable from it by a single new leaf
$V$, which we identify with the set of leaves below $a$. Let $L'$ be the leaf set of $Collapse(N)$. We define a new
set of triplets $T'$ on the leaf-set $L'$ as follows: $XY|Z \in T'$ if and only if there exists $x \in X, y\in Y, z
\in Z$ such that $xy|z \in T$. The fact that $T$ is dense implies that $T'$ is also dense. We write $T' =
CutInduce(N,T)$ as shorthand for the above. Observe that $CutInduce$ is a specific example of \emph{inducing} a new
triplet set using some given partition of the original triplet leaf set, in this case the partition created by highest
cut-arcs.\\

\vspace{1cm}

\pagebreak

\begin{lemma}
\label{lem:central}
(1) Any network $N'$ consistent with $T'$ can be transformed into a network
consistent with $T$ by ``expanding'' each leaf $V$ back into the subnetwork of $N$
that collapsed into it;\\
(2) $Collapse(N)$ is a simple level-$\leq$2 network and is consistent with $T'$.\\
(3) There is a bijection between the maximal SN-sets of $T$ and the maximal SN-sets of $T'$. Namely, a maximal SN-set
$S'$ in $T'$ corresponds to the maximal SN-set $S$ in $T$ obtained by replacing each leaf in $S'$ by the set of leaves
below the corresponding highest cut-arc in $N$.
\end{lemma}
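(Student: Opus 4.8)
The plan is to prove the three parts in sequence, but all of them hinge on one preliminary structural observation that I would record first. Each block (the set of leaves below a highest cut-arc $a=(u,v)$, with $N_V$ the subnetwork hanging below $v$) is itself an SN-set by Lemma~\ref{lem:cutSN}. More usefully: if two of the three leaves of a triplet $xy|z\in T$ lie in a common block $V$ and the third lies outside $V$, then, since every path into $N_V$ must pass through the single cut-arc, any lowest common ancestor of an inside leaf with the outside leaf sits strictly above the lowest common ancestor of the two inside leaves. Hence the only triplet on those three leaves that is consistent with $N$ is the one grouping the two inside leaves. Because $N$ is consistent with $T$, this forbids the ``split'' configuration entirely, and I would reuse this fact in parts (1) and (3).

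For part (1) I would first make the expansion precise: reinsert $N_V$ in place of each leaf $V$ of $N'$ by identifying $V$ with the root of $N_V$ across the former cut-arc, which plainly yields a valid acyclic network $\hat N$. To verify consistency I take an arbitrary $xy|z\in T$ and split on how many distinct blocks meet $\{x,y,z\}$. If all three lie in one block $V$, the witnessing vertices and paths of Definition~\ref{def:con} can be chosen inside $N_V$, so the triplet survives expansion. If exactly two blocks are met, the preliminary observation discards every configuration except the one in which the pair grouped by $xy|z$ shares a block; there the lowest common ancestor of that pair lies inside one subnetwork while the path to $z$ descends from above the cut-arc, giving the required internally disjoint paths. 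If three blocks are met, the induced block-triplet lies in $T'$, so $N'$ already contains its subdivision, and I extend the three terminal paths down into the reinserted subnetworks to reach $x$, $y$ and $z$.

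For part (2), consistency of $Collapse(N)$ with $T'$ follows by projection: a subdivision in $N$ of a triplet $xy|z$ whose leaves lie in three distinct blocks must have both witnessing vertices above all three cut-arcs, so they survive collapse and the paths project onto a subdivision of the block-triplet. For simplicity I would invoke Lemma~\ref{lem:withoutcut}, so it suffices to show $Collapse(N)$ has no nontrivial cut-arc: any such arc would be an arc of the upper part of $N$ whose removal disconnects $N$ and which lies strictly above at least two highest cut-arcs, contradicting their being highest. The level-$\leq 2$ bound is immediate because collapsing only deletes structure strictly below highest cut-arcs and so cannot increase the number of recombination vertices in any biconnected component; the basic-tree case handles level-$0$, and non-redundancy guarantees at least two blocks.

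For part (3) the heart is the sub-claim that for $W\subseteq L'$, writing $\hat W$ for the union of the blocks in $W$, the set $\hat W$ is an SN-set of $T$ if and only if $W$ is an SN-set of $T'$; both directions are short computations with $\mathcal{S}_T$ and $\mathcal{S}_{T'}$ that again use the preliminary observation to discard the degenerate same-block case. To upgrade this to a bijection on \emph{maximal} SN-sets I would note, via Lemma~\ref{lem:densepar}, that every SN-set of $T$ is either contained in a single block or a union of blocks, and by Lemma~\ref{lem:integersum} that every maximal SN-set of $T$ is a union of blocks; since $W\mapsto\hat W$ is inclusion-preserving and sends $L'$ to $L(T)$, maximality transfers in both directions, yielding exactly the stated correspondence $S'\mapsto\hat{S'}$. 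The main obstacle I expect is not any single part but the recurring need to control the witnessing paths of Definition~\ref{def:con} across cut-arcs — preserving disjointness both when extending paths into reinserted subnetworks and when projecting them onto the collapsed network — together with the bookkeeping in part (3) ruling out a single-block SN-set slipping strictly between a maximal $\hat W$ and $L(T)$ and thereby breaking maximality.
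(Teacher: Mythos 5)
Your proposal is correct and follows essentially the same route as the paper: the same case analysis over how the three leaves of a triplet distribute across the blocks below highest cut-arcs for parts (1) and (2), Lemma~\ref{lem:withoutcut} for simplicity of $Collapse(N)$, and the same blocks-to-leaves correspondence (the paper's map $\delta$) combined with Lemmas~\ref{lem:densepar}, \ref{lem:cutSN} and~\ref{lem:integersum} to transfer maximality of SN-sets in part (3). Your write-up is in fact somewhat more explicit than the paper's, which dispatches the consistency claim in part (2) with ``it is not difficult to see.''
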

\begin{proof}
(1) Suppose we have a network $N'$ consistent with $T'$ and a triplet $xy|z$ in $T$ such that the described expansion
of $N'$ is not compatible with $xy|z$. It cannot be true that $x, y, z$ all came from underneath the same highest
cut-arc in $N$, because the network structure underneath a highest cut-arc is left unchanged by the contractions and
expansions described above. It is not possible that $x, z$ originated from underneath one highest cut-arc and $y$ from
another because then $xy|z \notin T$. It is also not possible that $x,y$ came from underneath one highest cut-arc, and
$z$ from another, since in this case $xy|z$ is consistent with the expansion. So it must be that $x, y, z$ each
originated from the leaves below three different highest cut-arcs in $N$, say $X, Y, Z$ respectively. But $XY|Z$ would
then have been in $T'$, and $N'$ is consistent with $T'$, meaning that the expansion would indeed have been compatible
with $xy|z$, contradiction. (2) It is not difficult to see that $Collapse(N)$ is consistent with $T'$. That
$Collapse(N)$ is a simple level-$\leq$2 network, follows from Lemma~\ref{lem:withoutcut}. (3) Consider a maximal
SN-set $S$ of $T$. By Lemma~\ref{lem:integersum} $S$ is the union of the leaves below one or more highest cut-arcs in
$N$. Define the mapping $\delta$ as $\delta(S) = \{V\in L'~|~V\subseteq S\}$ and let $S'=\delta(S)$. We firstly show
that $S'$ is an SN-set for $T'$. Observe that $S' \neq L(T')$ because $S \neq L(T)$. Secondly, observe that $SN_T(S')
= S'$. If this would not be true then there would exist $X, Z \in S'$ and $Y \notin S'$ such that $XY|Z \in T'$. But
this would mean that there exist leaves $x, y, z \in L(T)$ and a triplet $xy|z \in T$ such that $x,z \in S$ and $y
\notin S$, yielding a contradiction. Now, consider a maximal SN-set $S'$ within $T'$, and let $S$ be the set of leaves
in $N$ obtained by expanding each leaf in $S'$ (i.e. the inverse mapping $\delta^{-1}$ of $\delta$) to obtain $S$.
Clearly $S$ is not equal to $L(N)$. It is also clear that $SN_T(S)=S$ because the existence of a triplet $xy|z$ such
that $x,z \in S$ and $y \notin S$ would mean the existence of a corresponding triplet $XY|Z$ in $T'$, violating the
maximality of $S'$. Now, suppose $S$ is a strict subset of some other SN-set in $T$; without loss of generality it is
thus a subset of a maximal SN-set $S^{*}$ in $T$. But by the above mapping $\delta$ we know that the existence of
$S^{*}$ guarantees the existence of an SN-set in $T'$ that is a superset of $S'$, giving a contradiction. So, there
remains only to show that the mapping $\delta$ maps maximal SN-sets of $T$ to maximal SN-sets of $T'$. Suppose this is
not true and some maximal SN-set $S$ in $T$ gets mapped to a non-maximal SN-set $S'$ in $T'$. Then $S'$ is a subset of
some maximal SN-set $S''$ in $T'$. The existence of $S''$ and the mapping $\delta^{-1}$ guarantees the existence of an
SN-set in $T$ which is a strict superset of $S$, giving a contradiction. $\Box$
\end{proof}

\begin{theorem}\label{thm:one}
Let $T$ be a dense triplet set consistent with some simple level-$\leq 2$ network $N$. Then there exists a level-2
network $N'$ consistent with $T$ such that at most one maximal SN-set $S$ of $T$ equals the union of the sets of
leaves below two cut-arcs and each other maximal SN-set is equal to the set of leaves below just one cut-arc.
\end{theorem}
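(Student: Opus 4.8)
The plan is to analyze how the maximal SN-sets of $T$ are distributed over the leaves of the given simple network $N$, and then to reorganise $N$ into the desired $N'$ by gathering the leaves of each maximal SN-set beneath a single fresh cut-arc. First I would invoke Lemma~\ref{lem:withoutcut} to reduce to finitely many shapes for $N$: the basic tree, a simple level-1 network, or a network of type $8a$--$8d$ (Figure~\ref{fig:lev2gen}). The basic-tree and simple level-1 cases are easy, since there each maximal SN-set already corresponds to a single side (for level-1 this is the Jansson--Sung side-network correspondence), so gathering its leaves beneath one cut-arc needs no exception. The whole force of the theorem lies in the four level-2 generators.

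For the level-2 generators I would exploit the fact that the leaves hung on any one side of the generator form a caterpillar, so by Lemma~\ref{lem:caterpillarsets} the bottom segments of a side are nested SN-sets. Combining this with density and the laminarity of SN-sets (Lemma~\ref{lem:densepar}), I would prove the key structural claim: every maximal SN-set $S$ is a union of such bottom segments, and moreover $S$ is either entirely contained below a single vertex of the generator---in which case its leaves can be pulled down beneath one new cut-arc at that vertex---or $S$ straddles exactly the two sides entering a common recombination vertex.

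Given this claim, I would construct $N'$ as follows. For each maximal SN-set lying below a single generator vertex, I subdivide the incoming arc(s), attach a new cut-arc, and hang the caterpillar of leaves of $S$ below it; Definition~\ref{def:con} then lets me check directly that no triplet of $T$ is destroyed, since the relative order of the leaves of $S$, and their position relative to the rest of the network, is preserved. For a straddling SN-set I would instead leave its leaves split across the two sides meeting at the recombination vertex, so that $S$ equals the union of the leaves below two cut-arcs. By Lemma~\ref{lem:integersum} the resulting partition into cut-arc leaf sets then refines to the maximal SN-sets exactly as required.

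The main obstacle---and where the bulk of the work sits---is proving the ``at most one'' part of the statement: that the reticulation structure of a level-2 network forces at most a single straddling SN-set. This is where the restriction to level $2$ is essential (in contrast to level-1, where no straddling occurs), and I expect it to require the explicit case analysis over generators $8a$--$8d$, using density to pin down precisely which triplets occur among leaves on the two sides of each recombination vertex and thereby to locate the unique side-pair that can be forced to share an SN-set. A secondary technical point is verifying that the gathering operation never creates a directed cycle nor raises the level above $2$, which holds because we only subdivide arcs and attach pendant subnetworks below cut-arcs.
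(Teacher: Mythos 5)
Your overall skeleton (reduce to the generators, analyse how maximal SN-sets sit on the sides, then gather each one below a fresh cut-arc) matches the paper's strategy, and your treatment of the basic tree and the level-1 case is fine. But your central structural claim is wrong exactly at the point the theorem is about. You claim the exceptional maximal SN-set---the one that must remain the union of leaves below \emph{two} cut-arcs---is one that ``straddles the two sides entering a common recombination vertex.'' Check this against the generators. If the two in-sides of a recombination vertex have only the root as a common ancestor (sides $B$ and $C$ in $8a$; sides $C$ and $D$, or $E$ and $F$, in $8c$, which emanate from the two different children of the root), then two leaves $x,y\in S$ on those sides have no common ancestor below the root, so $xy|z$ is consistent with $N$ for no $z$; density then forces every other leaf into $SN_T(\{x,y\})$, i.e.\ $S=L(T)$, contradicting maximality of $S$ (this is the paper's Observation~\ref{obs:root}). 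If instead the two in-sides have a unique lowest common ancestor (sides $D$ and $E$ in $8a$), a straddling SN-set is possible but \emph{can} be collapsed below a single cut-arc, so it is not exceptional either. The true exceptional configuration, which your dichotomy cannot produce, occurs only in generator $8c$: the maximal SN-set $\{g,h\}$ formed by the two recombination leaves below the two \emph{different} recombination vertices. Those two leaves have two incomparable lowest common ancestors (the two split vertices), and that is precisely what blocks gathering them under one cut-arc; the ``at most one'' in the statement comes from this being the only such configuration among all four generators.

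There are also two genuine gaps in your construction. First, hanging the leaves of $S$ below the new cut-arc ``as a caterpillar'' destroys triplets internal to $S$ whenever $S$ spans more than one side, which does happen (e.g.\ the paper's Situation X in $8b$, where $S$ covers sides $C$, $D$, $E$, $F$, $G$); one must instead hang a copy of $N$ restricted to the leaves of $S$, as the paper does with $C[N,S]$, after contracting the connected subgraph $CSG[N,S]$ carrying $S$ (and Lemma~\ref{lem:degrees} on the in/out-degrees of that subgraph is what keeps the result a level-2 network). Second, your assertion that consistency of triplets with one leaf in $S$ and two outside can be ``checked directly since relative positions are preserved'' is false: the gathering moves the leaves of $S$, and there is a concrete failure case---$N$ of type $8a$ with $x\in S$ on side $C$, $x'\in S$ on side $D$, $y\notin S$ on side $E$ and $z\notin S$ on side $F$, triplet $xy|z\in T$---in which the generic gathering breaks the triplet and a different network must be built (Figure~\ref{fig:thm1fig4}); a similar special construction is needed in the situation of Figure~\ref{fig:thm1fig1}, where not all leaves hanging between two leaves of $S$ belong to $S$. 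This consistency verification is the bulk of the paper's proof and cannot be waved through.
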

\begin{proof}
The proof is rather complicated and is thus deferred to the appendix. $\Box$
\end{proof}

\noindent The following corollary proves that the above theorem also applies to general level-2 networks.

\begin{corollary}
\label{cor:theorem2} Let $T$ be a dense triplet set consistent with some level-2 network $N$. Then there exists a
level-2 network $N'$ consistent with $T$ such that at most one maximal SN-set $S$ of $T$ equals the union of the sets
of leaves below two cut-arcs and each other maximal SN-set is equal to the set of leaves below just one cut-arc.
\end{corollary}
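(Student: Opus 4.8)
The plan is to reduce the general case to Theorem~\ref{thm:one} via the $Collapse$/$CutInduce$ machinery of Lemma~\ref{lem:central}. Given a dense triplet set $T$ consistent with an arbitrary level-2 network $N$, I would first form $T' = CutInduce(N,T)$ together with $Collapse(N)$. By Lemma~\ref{lem:central}(2), $Collapse(N)$ is a simple level-$\leq 2$ network consistent with $T'$, and $T'$ is again dense by construction. Hence $T'$ is a dense triplet set consistent with a simple level-$\leq 2$ network, which is precisely the hypothesis of Theorem~\ref{thm:one}.

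Applying Theorem~\ref{thm:one} to $T'$ yields a level-2 network $M'$ consistent with $T'$ in which at most one maximal SN-set of $T'$ equals the union of the leaves below two cut-arcs and every other maximal SN-set equals the leaves below a single cut-arc. I would then invoke the expansion of Lemma~\ref{lem:central}(1): replace each leaf $V$ of $M'$ by the subnetwork of $N$ that collapsed into it, producing a network $N'$ consistent with $T$. Since $M'$ is level-2 and each expanded subnetwork is a level-$\leq 2$ subnetwork of $N$ hanging below a cut-arc, no two of these biconnected structures share a biconnected component, so $N'$ is itself a level-2 network.

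The remaining work is to transport the SN-set property from $M'$ back to $N'$, and for this I would use the bijection $\delta$ of Lemma~\ref{lem:central}(3) between the maximal SN-sets of $T'$ and those of $T$: a maximal SN-set $S$ of $T$ is exactly $\delta^{-1}(S')$ for $S' = \delta(S)$, obtained by expanding each leaf of $S'$ into the leaves below the corresponding highest cut-arc. The key observation is that the expansion of Lemma~\ref{lem:central}(1) leaves the cut-arcs of $M'$ intact as cut-arcs of $N'$ and merely expands the leaf sets beneath them. Consequently, if $S'$ is the set of leaves below a single cut-arc $a$ of $M'$, then $S$ is precisely the set of leaves below that same arc $a$, now a cut-arc of $N'$; and if $S'$ is the union of the leaves below two cut-arcs of $M'$, then $S$ is the union of the leaves below those same two cut-arcs in $N'$. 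Because $\delta$ is a bijection, the unique exceptional maximal SN-set of $T'$ corresponds to a unique exceptional maximal SN-set of $T$, so $N'$ inherits exactly the desired property.

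I expect the main obstacle to be precisely this last transport step: one must argue carefully that expanding leaves into entire subnetworks neither destroys the cut-arc that carried a maximal SN-set nor accidentally creates a second maximal SN-set that requires two cut-arcs. This reduces to checking that attaching a subnetwork below a cut-arc preserves that arc's cut-arc status and that the leaf set below it is exactly the expanded image, which follows from the structure-preserving nature of the expansion in Lemma~\ref{lem:central}(1) together with the bijection of Lemma~\ref{lem:central}(3).
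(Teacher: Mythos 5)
Your proposal is correct and follows essentially the same route as the paper: apply $Collapse$/$CutInduce$ to reduce to the simple level-$\leq 2$ case, invoke Theorem~\ref{thm:one} on the induced dense triplet set, expand leaves back into subnetworks via Lemma~\ref{lem:central}(1), and transport the property through the bijection of Lemma~\ref{lem:central}(3). Your write-up is merely more explicit than the paper's about why cut-arcs and the leaf sets below them are preserved under expansion, which the paper leaves implicit.
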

\begin{proof}
Let $N^{*} = Collapse(N)$ and let $T' = CutInduce(N,T)$. By Lemma \ref{lem:central} we know that $N^{*}$ is simple
level-$\leq$2, is consistent with $T'$, and that there is a bijection between the maximal SN-sets of $T'$ and the
maximal SN-sets of $T$. By Theorem \ref{thm:one} there exists a network $N^{**}$ consistent with $T'$ with the desired
property. Replacing each leaf in $N^{**}$ by the subnetwork (of $N$) that collapsed into it gives a network $N'$
consistent with $T$ with the desired property. $\Box$
\end{proof}
The following theorem explains why, in essence, the entire algorithm can be reduced to the problem of finding simple
level-$\leq$2 networks. For a set of triplets $T$ and a set of SN-sets $M=\{S_1,\ldots,S_q\}$ we denote by $T\nabla M$
the set of triplets $S_i S_j |S_k$ such that there exist $x\in S_i$, $y\in S_j$, $z\in S_k$ with $xy|z\in T$ and $i$,
$j$ and $k$ all distinct.

\begin{theorem}
\label{thm:theorem3} Let $T$ be a dense set of triplets and $N'$ a network with the properties described in
Corollary~\ref{cor:theorem2}. Let $M$ be the set of SN-sets that are equal to the set of leaves below a highest
cut-arc. Then there exists a simple level-$\leq$2 network $N''$ consistent with $T\nabla M$. Furthermore, for any
simple level-$\leq$2 network $N''$ consistent with $T\nabla M$ holds that expanding each leaf into the subnetwork of
$N'$ that collapsed into it gives a level-2 network consistent with $T$.
\end{theorem}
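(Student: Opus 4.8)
The plan is to prove the two assertions separately, in each case reducing to the machinery already established in Lemma~\ref{lem:central}. The first thing I would record is that $T\nabla M$ is nothing other than $CutInduce(N',T)$: since $M$ consists precisely of the sets of leaves lying below the highest cut-arcs of $N'$ (these are SN-sets by Lemma~\ref{lem:cutSN} and partition $L$ by Lemma~\ref{lem:highcut}), the leaf set of $Collapse(N')$ is exactly $M$, and the defining condition ``$S_iS_j|S_k\in T\nabla M$ iff there exist $x\in S_i,y\in S_j,z\in S_k$ with $xy|z\in T$'' coincides verbatim with the definition of $CutInduce$. With this identification the existence statement is immediate: by Lemma~\ref{lem:central}(2) the network $Collapse(N')$ is a simple level-$\leq2$ network consistent with $CutInduce(N',T)=T\nabla M$, so it serves as the required witness.

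For the second, more substantial, assertion I would take an arbitrary simple level-$\leq2$ network $N''$ consistent with $T\nabla M$ and analyse the network $\widehat{N}$ obtained by expanding each leaf of $N''$ into the subnetwork of $N'$ that collapsed into it (this is well defined because the leaf set of $N''$ is exactly $M$, and each element of $M$ names one highest cut-arc of $N'$). Consistency of $\widehat{N}$ with $T$ follows directly from Lemma~\ref{lem:central}(1), applied with $N'$ in the role of the network whose highest cut-arcs induce $T\nabla M=CutInduce(N',T)$: that part of the lemma already guarantees that expanding \emph{any} network consistent with the induced triplet set yields a network consistent with $T$. The only remaining task is therefore to verify that $\widehat{N}$ is a level-2 network.

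The key structural point is that every leaf of a simple level-$\leq2$ network is attached by a trivial cut-arc, so after expansion the arc that formerly entered a leaf $S_i$ now enters the root of the inserted subnetwork and remains a cut-arc of $\widehat{N}$. Consequently the biconnected components of $\widehat{N}$ are exactly the (at most one) nontrivial biconnected component coming from the generator of $N''$ together with the biconnected components sitting inside the inserted subnetworks. The former carries at most two recombination vertices because $N''$ is level-$\leq2$; each of the latter is a biconnected component of $N'$ (it lies below a highest cut-arc, hence below a cut-arc, so the rest of $N'$ cannot contribute to it) and therefore carries at most two recombination vertices because $N'$ is level-2. No component is enlarged, and none is newly created across a cut-arc, so $\widehat{N}$ is level-2, finishing the proof.

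I expect the main obstacle to be precisely this last verification that expansion cannot raise the level: one has to argue carefully that hanging an entire subnetwork below a former leaf-arc neither fuses the subnetwork's biconnected structure with the generator of $N''$ nor produces any new biconnected component, and it is exactly the cut-arc observation that secures this. By contrast the existence and consistency parts are essentially immediate once $T\nabla M$ is recognised as $CutInduce(N',T)$ and Lemma~\ref{lem:central} is invoked. I would also take care to note that the single ``split'' maximal SN-set singled out in Corollary~\ref{cor:theorem2} plays no active role in the expansion itself: the collapse and re-expansion are organised entirely around the highest-cut-arc partition $M$, so that distinguished SN-set is simply realised as the union of two of the leaves of $N''$ and is expanded one highest-cut-arc at a time like any other.
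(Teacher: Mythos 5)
Your proposal is correct and takes essentially the same route as the paper: recognise that $T\nabla M$ coincides with $CutInduce(N',T)$, then apply Lemma~\ref{lem:central}(2) to get the witness $Collapse(N')$ and Lemma~\ref{lem:central}(1) to transform any simple level-$\leq$2 solution for $T\nabla M$ back into a solution for $T$. Your explicit cut-arc argument that expansion cannot raise the level (the inserted subnetworks stay separated from the generator's biconnected component, so every biconnected component of the expanded network lives either in $N''$ or inside a piece of $N'$) is a detail the paper's proof leaves implicit, and it is a sound and worthwhile addition.
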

\begin{proof}
Consider the network $N'' = Collapse(N')$ and the triplet set $T' = CutInduce(N', T)$. By Lemma \ref{lem:central} we
know that $N''$ is simple level-$\leq$2. The $CutInduce(.)$ function and the construction of $T\nabla M$ are in this
case identical, so $N''$ is clearly consistent with $T\nabla M$. Applying the first part of Lemma \ref{lem:central}
shows that $N''$ (and in fact any solution for $T'$) can be transformed back into a solution for $T$. $\Box$
\end{proof}
It remains to show how to find the set $M$ of SN-sets that are equal to the set of leaves below a highest cut-arc in
$N'$, when $N'$ is unknown. By Corollary~\ref{cor:theorem2} we know that, with the exception of possibly one maximal
SN-set of $T$, there is a one-to-one correlation between the elements of $M$ and the maximal SN-sets of $T$. Given
that there is at most one maximal SN-set that needs to be split into two pieces, we can simply try splitting each
maximal SN-set of $T$ in turn, as well as trying the case where no maximal SN-sets of $T$ are split. This does not
take too long because there is at most a linear (in the number of leaves in $T$) number of maximal SN-sets. The
following lemma tells us how to split the chosen maximal SN-set into two pieces.
\begin{lemma}\label{lem:splitSN}
Let $T$ be a dense set of triplets and $N'$ a network with the properties described in Corollary~\ref{cor:theorem2}.
Suppose $T$ contains a maximal SN-set $S$ which occurs as the union of the sets $S_1$ and $S_2$ of leaves below two
cut-arcs. Then $T|S$ contains precisely two maximal SN-sets and these are $S_1$ and $S_2$.
\end{lemma}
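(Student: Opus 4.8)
The plan is to exploit the fact that, restricted to $S$, the triplet set $T|S$ is again dense, so by Lemma~\ref{lem:densepar} its maximal SN-sets partition $L(T|S)=S$. The whole lemma then reduces to two claims: that $S_1$ and $S_2$ are disjoint SN-sets of $T|S$ whose union is $S$, and that each of them is in fact a \emph{maximal} SN-set of $T|S$. Granting these, no third maximal SN-set can exist, since by the partition property it would be nonempty and disjoint from both $S_1$ and $S_2$, which already cover $S$; hence there are precisely two maximal SN-sets, namely $S_1$ and $S_2$.

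First I would settle the easy structural facts. Since $a_1,a_2$ are cut-arcs of $N'$ and $N'$ is consistent with the dense set $T$, Lemma~\ref{lem:cutSN} gives that $S_1$ and $S_2$ are SN-sets of $T$. By Lemma~\ref{lem:densepar} two SN-sets are either disjoint or nested; were $S_1$ and $S_2$ nested, $S$ would equal the leaves below a single cut-arc, contradicting that we are in the genuine two-cut-arc case of Corollary~\ref{cor:theorem2}, so $S_1\cap S_2=\emptyset$ and both are nonempty with $S_1\cup S_2=S$. To transfer this to $T|S$ I would observe that every closure step under $\mathcal{S}_{T|S}$ uses a triplet of $T|S\subseteq T$ and is therefore also a legal step under $\mathcal{S}_T$; starting from $S_i\subseteq S$ this yields $SN_{T|S}(S_i)\subseteq SN_T(S_i)=S_i$, so $S_1$ and $S_2$ are SN-sets of $T|S$ as well. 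Note also that neither equals $L(T|S)=S$, since each omits the other (nonempty) piece.

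The core of the argument is maximality, and the pivotal ingredient is a geometric triplet fact: for any $x\in S_1$ and any two distinct $w,s\in S_2$, the unique triplet on $\{x,w,s\}$ consistent with $N'$ is $ws|x$, so by density $ws|x\in T|S$. This is exactly the disjoint-paths reasoning already carried out in the proof of Lemma~\ref{lem:integersum}: because $a_1,a_2$ are cut-arcs with disjoint, non-nested leaf sets, every root-to-leaf path into $S_i$ must use the arc $a_i$, and the paths witnessing any triplet other than $ws|x$ would be forced to share the head of one of these cut-arcs, violating internal vertex-disjointness. I expect verifying this fact (the disjointness/edge-case bookkeeping at the cut-arc heads) to be the main obstacle, but it is essentially a restatement of the Lemma~\ref{lem:integersum} argument and so can be cited rather than re-derived.

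With this in hand, maximality of $S_1$ follows by contradiction. Suppose $S_1\subsetneq W\subsetneq S$ with $W$ an SN-set of $T|S$. Since $W\supsetneq S_1$ and $W\subseteq S=S_1\cup S_2$, there is some $w\in W\cap S_2$; since $W\subsetneq S$ while $S_1\subseteq W$, there is some $s\in S_2\setminus W$; and we may pick any $x\in S_1\subseteq W$. By the geometric fact, $ws|x\in T|S$. Writing this triplet in the form $pc|q$ with $p=w$, $q=x$ (both in $W$) and $c=s$, the operation $\mathcal{S}_{T|S}$ forces $s\in W$, contradicting $s\notin W$. Hence the only SN-set of $T|S$ strictly containing $S_1$ is $S$ itself, so $S_1$ is maximal; the same argument with the roles of $S_1$ and $S_2$ exchanged shows $S_2$ is maximal. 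Combining with the partition statement of the first paragraph completes the proof.
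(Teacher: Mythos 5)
Your proof is correct, and it reaches the conclusion by a genuinely different route in the key step. Both you and the paper begin identically: Lemma~\ref{lem:cutSN} makes $S_1$ and $S_2$ SN-sets of $T$, and they remain SN-sets of $T|S$ (the paper calls this ``quite easy to see''; your closure-monotonicity argument $SN_{T|S}(S_i)\subseteq SN_T(S_i)=S_i$ is exactly the right way to fill that in). The divergence is in how maximality is established. The paper never returns to the network: it invokes Lemma~\ref{lem:densepar} to get that the maximal SN-sets of $T|S$ partition $S$, and argues that any partition other than $\{S_1,S_2\}$ would contain a maximal SN-set that is a \emph{strict} subset of $S_1$ or $S_2$, contradicting maximality --- a purely set-theoretic, laminar-family argument. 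You instead argue directly from the definition of the closure: any SN-set $W$ with $S_1\subsetneq W\subsetneq S$ fails to be closed because the forced triplet $ws|x$ (with $x\in S_1$, $w\in W\cap S_2$, $s\in S_2\setminus W$) pulls $s$ into $W$, and you only use the partition property of Lemma~\ref{lem:densepar} at the very end to rule out a third maximal SN-set. Your route is more hands-on and has the side benefit of making explicit the disjointness of $S_1$ and $S_2$, which the paper leaves implicit. One simplification you missed: the ``geometric fact'' you flag as the main obstacle needs no disjoint-path or cut-arc geometry at all. Since $S_2$ is already known to be an SN-set of $T$, if $wx|s$ or $sx|w$ were in $T$ then $x\in SN_T(\{w,s\})\subseteq S_2$, a contradiction; density then gives $ws|x\in T$, hence $ws|x\in T|S$. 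With that observation your argument becomes entirely self-contained at the level of SN-set definitions, arguably more elementary than the paper's, at the cost of being somewhat longer.
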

\begin{proof}
By Lemma~\ref{lem:cutSN}, $S_1$ and $S_2$ are both SN-sets of $T$. It is quite easy to see that $S_1$ and $S_2$ remain
SN-sets in the restriction of $T$ to $S$. Now, the fact that $S = S_1 \cup S_2$ means that $S_1$ and $S_2$ are both
maximal (in the restriction to $S$). To see why this is, consider that any alternative partition of $S$ into two or
more maximal SN-sets must contain at least one set that is a strict subset of either $S_1$ or $S_2$, contradicting the
maximality of the SN-sets. $\Box$
\end{proof}

\noindent Lemma~\ref{lem:cutSN} shows that only maximal SN-sets that internally decompose into two maximal SN-sets
need to be considered as possible candidates for the ``split'' SN-set, and furthermore that this split is totally
determined by the internal decomposition of the SN-set.

\subsection{Constructing a level-2 network}

\noindent We are finally ready to describe the complete algorithm. The general idea is as follows. We compute the
maximal SN-sets. If there are precisely two maximal SN-sets then we recursively create two level-2 networks for the
two maximal SN-sets and connect their roots to a new root. If there are three or more maximal SN-sets we try splitting
each maximal SN-set in turn, as well as that we try the case where no maximal SN-set is split. Lemma~\ref{lem:splitSN}
tells us how to split the maximal SN-set. If $M$ is the obtained set of SN-sets then we try to construct a simple
level-$\leq 2$ network $N$ consistent with $T\nabla M$. We recursively create level-2 networks for each SN-set in $M$
and replace each leaf of $N$ by the corresponding, recursively created, level-2 network. The complete pseudo code is
displayed in Algorithm~\ref{alg:L2}: L2.\\

\begin{algorithm}[H]
\caption{L2} \label{alg:L2}
\begin{algorithmic} [1]
\STATE $N:=null$\\

\STATE Compute the set of maximal SN-sets $SN$ of $T$ by the algorithm in \cite{JS1}.\label{line:SNsets1}\\

\IF{$|SN|=2$}

\STATE $N$ is a basic tree with leaves labelled $S_1$ and $S_2$.\\
\STATE $M^*:=SN$\\

\ELSE

\FOR{$S\in SN\cup\{\emptyset\}$} \label{line:beginforloop}

\STATE Compute the set of maximal SN-sets $SN'$ of $T|S$.\label{line:SNsets2}\\

\IF{$|SN'|=2$ \textbf{or} $S=\emptyset$}

\STATE $M := SN \setminus \{S\} \cup SN'$\\

\STATE $q:=|M|$\\

\STATE $T':=T\nabla M$\label{line:nabla}\\

\IF{$T'$ is consistent with a simple level-1 network} \label{line:beginif}

\STATE Let $N$ be such a network.\\
\STATE $M^*:=M$\\

\ELSIF{$T'$ is consistent with a simple level-2 network}

\STATE Let $N$ be such a network.\\
\STATE $M^*:=M$\\

\ENDIF \label{line:endif}

\ENDIF

\ENDFOR \label{line:endforloop}

\ENDIF

\STATE Denote the elements of $M^*$ by $S_1,\ldots,S_q$.\\

\FOR{$i=1,\ldots,q$}

\IF{$|S_i|>2$}

\STATE $N_i:=L2(T|S_i)$\\

\ELSE

\STATE $N_i$ is a basic tree with leaves labelled by the elements of $S_i$.\\

\ENDIF

\ENDFOR

\IF{any $N_i$ or $N$ equals $null$}

\STATE \textbf{return} $null$

\ENDIF

\FOR{$i=1,\ldots,q$}

\STATE Remove $S_i$ from $N$ and connect the former parent of $S_i$ to the root of $N_i$.\\

\ENDFOR

\STATE \textbf{return} $N$\\

\end{algorithmic}

\end{algorithm}

\begin{theorem}
Algorithm L2 constructs, in $O(|T|^3)$ time, a level-2 network consistent with a dense set of triplets if and only if
such a network exists.
\end{theorem}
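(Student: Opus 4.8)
The plan is to prove correctness and the running-time bound separately. Correctness I would argue by induction on $n=|L(T)|$, with induction hypothesis that \texttt{L2}$(T|S)$ returns a level-2 network consistent with $T|S$ whenever one exists and returns \texttt{null} otherwise. The induction is well-founded because the maximal SN-sets always partition $L(T)$ into at least two strictly smaller blocks, so every recursive call is on a strictly smaller leaf set (blocks of size at most two are turned directly into basic trees and form the base cases). Throughout I would use the elementary \emph{restriction principle}: if $T$ is level-2 realisable then so is $T|L'$ for every $L'\subseteq L(T)$, obtained from a witness network by deleting the leaves outside $L'$ and suppressing the resulting degree-two and dead-end vertices, an operation that can only remove recombination vertices and hence preserves the level.

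For the ``if'' direction (completeness), assume $T$ is level-2 realisable. By Corollary~\ref{cor:theorem2} fix a witness $N'$ in which at most one maximal SN-set of $T$ is the union of the leaves below two highest cut-arcs and every other maximal SN-set lies below a single highest cut-arc, and let $M$ be the collection of SN-sets below the highest cut-arcs of $N'$. I would show that one iteration of the main loop constructs exactly this $M$: either no maximal SN-set is split and the iteration $S=\emptyset$ yields $M=SN$, or exactly one maximal SN-set $S$ is split, in which case Lemma~\ref{lem:splitSN} guarantees that $T|S$ has precisely two maximal SN-sets and that they are the two required blocks, so the condition $|SN'|=2$ holds and the line $M:=SN\setminus\{S\}\cup SN'$ produces the correct $M$. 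For this $M$, Theorem~\ref{thm:theorem3} supplies a simple level-$\le 2$ network consistent with $T\nabla M$, which the algorithm detects via the simple level-1 test of \cite{JS2} or, failing that, via Algorithm~\texttt{SL2} (correct by Theorem~\ref{thm:SL2}); by the induction hypothesis together with the restriction principle each block of $M^{*}$ also yields a consistent level-2 network, and gluing these below the leaves of the simple network produces a level-2 network consistent with $T$ by the (leaf-independent) argument of Lemma~\ref{lem:central}(1). The branch $|SN|=2$ needs a short separate treatment: here $T\nabla SN=\emptyset$ and a basic tree on the two blocks is trivially valid at the top, while every cross-block triplet $xy|z$ necessarily has its ``close'' pair $\{x,y\}$ inside one block (otherwise closure of SN-sets under $\mathcal{S}_T$ would force all three leaves into one block), so the two top cut-arcs realise every such triplet and the recursion, again using the restriction principle, handles the rest.

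For the ``only if'' direction (soundness) the argument is short: $N$ receives a non-\texttt{null} value only when $T\nabla M$ has been certified consistent with an explicit simple level-$\le 2$ network, and the algorithm returns \texttt{null} as soon as the top-level search or any recursive call fails. Expanding each leaf of $N$ into the recursively constructed network for the corresponding block, as in the final loop, yields a network consistent with $T$ by Lemma~\ref{lem:central}(1); the level stays $\le 2$ because the expansion happens strictly below cut-arcs and so cannot add a recombination vertex to any existing biconnected component. Combined with completeness, this shows the algorithm returns a network exactly when one exists, and that any returned network is valid.

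The running-time bound is the step I expect to require the most care, since a term-by-term count over-estimates it. I would model the execution as a recursion tree whose nodes $v$ are the calls, with $m_v$ leaves and $q_v$ maximal SN-sets; the blocks produced across the whole recursion form a laminar family over the $n$ original leaves, so their total number, and hence $\sum_v q_v$, is $O(n)$. At a node the dominant cost is the simple level-$\le 2$ search: the loop tries $O(q_v)$ candidate splits and each triggers an \texttt{SL2} call on the $q_v$ super-leaves of $T\nabla M$, costing $O(q_v^{8})$ by Lemma~\ref{lem:SL2runtime}, for a per-node total of $O(q_v^{9})$ (computing SN-sets, forming $T\nabla M$ and the level-1 tests are all of lower order). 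The crucial point is to charge the expensive \texttt{SL2} factor against the \emph{number of children} rather than the number of leaves, so that the partition property of SN-sets bounds the sum: $\sum_v q_v^{9}\le\bigl(\sum_v q_v\bigr)\,(\max_v q_v)^{8}=O(n)\cdot O(n^{8})=O(n^{9})$. Since $T$ is dense we have $|T|=\Theta(n^{3})$, whence $O(n^{9})=O(|T|^{3})$, completing the bound.
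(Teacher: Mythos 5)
Your proposal is correct, and it is worth separating its two halves when comparing with the paper. For correctness, the paper's own proof is a one-line appeal to Theorems~\ref{thm:SL2} and~\ref{thm:theorem3} and Corollary~\ref{cor:theorem2}; your induction is precisely the fleshing-out that this citation requires, and it makes explicit two points the paper glosses over: that soundness must hold for \emph{whatever} set $M^{*}$ the loop happens to end with (not just the $M$ coming from a witness network), which needs the observation that the sub-blocks produced by a split are still SN-sets of $T$ so the expansion argument of Lemma~\ref{lem:central}(1) applies verbatim; and that the recursive calls cannot fail when $T$ is realisable, which needs your restriction principle. For the running time you take a genuinely different accounting route. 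The paper charges the expensive iterations to the nontrivial biconnected components of the \emph{constructed output network}: if the $i$-th such component has $s_i$ outgoing arcs, lines 13--19 are executed $s_i+1$ times at cost $O(s_i^{8})$ each, $\sum_i s_i = O(n)$ by the arc-counting Lemma~\ref{lem:arcs}, and $\sum_i s_i^{9} \le \bigl(\sum_i s_i\bigr)^{9} = O(n^{9})$. You instead charge to the recursion tree, bounding $\sum_v q_v = O(n)$ by laminarity of the SN-set blocks and then $\sum_v q_v^{9} \le \bigl(\sum_v q_v\bigr)\,(\max_v q_v)^{8} = O(n^{9})$. The two schemes count essentially the same quantity, since $s_i$ equals (up to one) the $q_v$ of the call that built that component, but yours buys two small advantages: it does not presuppose that a network is actually constructed, so it covers executions that return \texttt{null} without any hand-waving, and it bounds the top-level sum without invoking Lemma~\ref{lem:arcs} (which is then needed only inside Lemmas~\ref{lem:consistency} and~\ref{lem:SL2runtime}). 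Both analyses finish identically with $|T| = \Theta(n^{3})$ by density, giving $O(n^{9}) = O(|T|^{3})$.
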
\begin{proof}
Correctness of the algorithm follows from Theorems~\ref{thm:SL2}~and~\ref{thm:theorem3} and
Corollary~\ref{cor:theorem2}. It remains to analyze the running time. A simple level-1 network can be found by the
algorithm in \cite{JS2} in time $O(n^3)$ and a simple level-2 network by algorithm~SL2 in time $O(n^8)$ by
Lemma~\ref{lem:SL2runtime}. Therefore, line~13~-~19 of Algorithm~L2 take $O(|SN|^8)$ time. Consider the constructed
network $N$ and denote the number of arcs going out of the nontrivial biconnected components by $s_1,\ldots,s_m$. For
the $i$-th nontrivial biconnected component of $N$ lines~13~-~19 are executed $s_i+1$ times, each iteration taking
$O(s_i^8)$ time. We have that $s_1+\ldots+s_m=O(n)$ since the total number of arcs is $O(n)$ by Lemma~\ref{lem:arcs}.
Because $s_1^9+\ldots+s_m^9\leq (s_1+\ldots+s_m)^9$ the total time needed for lines~13~-~19 is $O((s_1+\ldots+s_m)^9)$
and hence $O(n^9)$. The computation of maximal SN-sets in line~8 takes $O(n^5)$ time and is executed $O(n^2)$ times.
The computation of $T\nabla M$ takes $O(n^4)$ time and is also executed $O(n^2)$ times. All other computations can be
done in $O(n^5)$ time and are executed $O(n)$ times. We conclude that the total running time of Algorithm~L2 is
$O(n^9)$ which is equal to $O(|T|^3)$. $\Box$
\end{proof}

\vspace{.5cm}

\section{Constructing level-2 networks from triplet sets is NP-hard}
\label{sec:nphard}

In this section we prove that for a general, not necessary dense, set of triplets $T$ it is NP-hard to decide whether
there exists a level-2 network consistent with $T$. This is a nontrivial extension of the proof that this problem is
NP-hard for level-1 networks \cite{JS2}. Let $\tilde{N}$ be the network in Figure \ref{fig:unique} and $\tilde{T}$ the
set of all triplets consistent with $\tilde{N}$. For the NP-hardness proof we need to show that $\tilde{N}$ is the
only level-2 network consistent with $\tilde{T}$. We prove this in the Lemmas~\ref{lem:8csimple}~-~\ref{lem:8cunique}
in the appendix.

\begin{figure}\centering\vspace{-.5cm}
\includegraphics{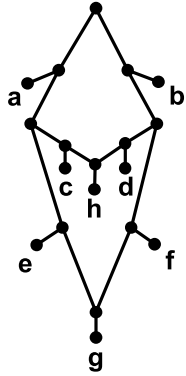}
\caption{The network $\tilde{N}$.}\vspace{-.5cm} \label{fig:unique}
\end{figure}

\begin{theorem}\label{thm:nphard}
It is NP-hard to decide whether for a given set of triplets $T$ there exists some level-2 network $N$ consistent with
$T$.
\end{theorem}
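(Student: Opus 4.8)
The plan is to establish NP-hardness by a polynomial-time reduction from a suitable NP-complete problem (following the scheme of the level-1 reduction of Jansson and Sung \cite{JS2}, e.g.\ a variant of satisfiability), using the network $\tilde{N}$ of Figure~\ref{fig:unique} as a rigid ``anchor''. The decisive point is that, by Lemmas~\ref{lem:8csimple}--\ref{lem:8cunique}, $\tilde{N}$ is the \emph{unique} level-2 network consistent with $\tilde{T}$. Note first \emph{why} the reduction cannot be obtained for free from the level-1 case: a level-1 network is already a level-2 network, so any level-1 realisable triplet set is level-2 realisable, and the extra freedom of a second recombination vertex could let a level-2 network satisfy triplet sets that are meant to be infeasible. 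The whole construction must therefore be engineered so that this extra budget is already spent. The anchor does exactly this: the nontrivial biconnected component of $\tilde{N}$ contains two recombination vertices and thus exhausts the level-2 budget, so no further recombination can be introduced inside it without violating the level-2 constraint.

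Concretely, I would, given an instance of the source problem, build a triplet set $T = \tilde{T} \cup T_{\mathrm{elem}} \cup T_{\mathrm{constr}}$. Including $\tilde{T}$ forces any consistent level-2 network to contain a forced copy of $\tilde{N}$ as its backbone. For each element/variable of the instance I would introduce a fresh leaf together with triplets in $T_{\mathrm{elem}}$, relating that leaf to the distinguished leaves of $\tilde{N}$, that leave exactly two consistent attachment positions relative to the anchor: one coding the value ``true'' and one coding ``false'', with every other position blocked either by a triplet already in $\tilde{T}$ or by the exhausted recombination budget. For each constraint/clause I would add triplets in $T_{\mathrm{constr}}$ that are simultaneously realisable precisely when at least one of the participating element-leaves sits in its satisfying position.

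Correctness then splits into the two usual directions. For completeness, given a yes-instance I would place each element-leaf in the position prescribed by a satisfying assignment and verify directly that the resulting level-2 network is consistent with every triplet of $T$ --- those inherited from $\tilde{T}$, the per-element triplets, and the per-constraint triplets. For soundness, given any level-2 network $N$ consistent with $T$, I would use the uniqueness lemmas to locate the forced copy of $\tilde{N}$ inside $N$, argue that each element-leaf must occupy one of its two permitted positions, read off the induced truth assignment, and invoke the clause triplets of $T_{\mathrm{constr}}$ to conclude that every constraint is satisfied.

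The hard part will be the soundness direction. Here one must rule out \emph{all} the ways in which a level-2 network might ``cheat'': I would need to show that adjoining the element- and clause-leaves cannot perturb the forced copy of $\tilde{N}$, cannot spawn an alternative nontrivial biconnected component that realises a triplet I intended to block, and genuinely confines each element-leaf to precisely the two designed positions. This is exactly the point where the uniqueness of $\tilde{N}$ and the exhaustion of the two-recombination budget do the essential work, and it is the step that demands the careful, gadget-specific case analysis that is deferred to the appendix.
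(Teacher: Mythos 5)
Your skeleton does match the paper's proof: it too anchors the reduction on $\tilde{N}$, uses Lemmas~\ref{lem:8csimple}--\ref{lem:8cunique} to conclude that any level-2 network consistent with a superset of $\tilde{T}$ must be $\tilde{N}$ with the extra leaves attached, confines each element-leaf to two attachment positions (between $a$ and the left split vertex, or between $b$ and the right split vertex), encodes constraints by extra triplets, and splits correctness into completeness and soundness. The problem is that everything you leave as a placeholder --- the source problem and the actual gadgets --- is precisely where the content of the theorem lies, and the two concrete commitments you do make point away from a construction that works. First, your clause semantics is wrong for this machinery. Triplets can only \emph{obstruct} joint placements, and the natural ordering gadget is symmetric in the two sides of $\tilde{N}$; one cannot cheaply express ``consistent iff at least one participating leaf sits in its satisfying position''. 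The paper therefore reduces from \textsc{Set Splitting} (a not-all-equal constraint, not a clause-SAT constraint): for each $C_j=\{s_a,s_b,s_c\}$ it adds the cyclic triplets $s_a^jh|s_b^j$, $s_b^jh|s_c^j$, $s_c^jh|s_a^j$, which are simultaneously consistent iff the three leaves do \emph{not} all lie on one side, because on a common path they would force a cyclic ``is below'' relation. Position blocking is likewise done by explicit triplets ($hs_i^j|a$, $hs_i^j|b$, $he|s_i^j$, $hf|s_i^j$), not by your ``exhausted recombination budget'' argument, which is already subsumed in the uniqueness lemmas.

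Second, one fresh leaf per element does not suffice. Any gadget that forbids three clause-mates from sharing a side necessarily imposes a relative \emph{order} on two clause-mates that legitimately do share a side (this is exactly how the cyclic triplets operate). With a single leaf per element, two different clauses containing the same pair of elements can demand opposite orders, so completeness fails even on yes-instances. The paper's fix is to create $m$ copies $s_i^1,\ldots,s_i^m$ of each element, one per clause, forced onto a common side by the chaining triplets $s_i^js_i^{j+1}|h$ but left free in their relative order, so that each clause can order its own private copies as its cyclic triplets require. Without these two ideas --- the not-all-equal source problem and the per-clause copies --- your plan cannot be completed as stated, so the proposal has a genuine gap rather than being a variant proof.
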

\begin{proof}
We reduce from the following NP-hard problem \cite{GareyJohnson}.\\
\\
\textit{Problem:} Set Splitting.\\
\textit{Instance:} Set $S=\{s_1,\ldots,s_n\}$ and collection $C=\{C_1,\ldots,C_m\}$ of cardinality-3 subsets
of $S$.\\
\textit{Question:} Can $S$ be partitioned into $S_1$ and $S_2$ (a set splitting) such that $C_j$ is not a subset of
$S_1$ or $S_2$ for all $1\leq j\leq m$?\\
\\
From an instance $(S,C)$ of Set Splitting we construct a set of triplets $T$ as follows. We start with the triplets
$\tilde{T}$ and for each set $C_j=\{s_a,s_b,s_c\}\in C$ (with $1 \leq a < b < c \leq n$) we add triplets
$s_a^jh|s_b^j$, $s_b^jh|s_c^j$ and $s_c^jh|s_a^j$. In addition, for every $s_i\in S$ and $1\leq j\leq m$ we add
triplets $hs_i^j|a$, $hs_i^j|b$, $he|s_i^j$, $hf|s_i^j$ and (if $j\neq m$) $s_i^js_i^{j+1}|h$. This completes the
construction of $T$ and we will now prove that $T$ is consistent with some level-2 network if and only if there exists
a set splitting $\{S_1,S_2\}$ of $(S,C)$.\\
\\
First suppose that there exists a set splitting $\{S_1,S_2\}$. Then we construct the network $N$ by starting with the
network $\tilde{N}$ and adding leaves to it as follows. For each element $s_i\in S_1$ we put all leaves $s_i^j$ (for
all $1\leq j\leq m$) between $a$ and the split vertex below $a$; for each element $s_i\in S_2$ all leaves $s_i^j$ (for
all $1\leq j\leq m$) are added between $b$ and the split vertex below $b$. To determine the order in which to put
these leaves consider a set $C_j=\{s_a,s_b,s_c\}\in C$. If $s_a$ and $s_b$ are on the same side of the partition we
put leaf $s_a^j$ below $s_b^j$, if $s_b$ and $s_c$ are on the same side of the partition we put $s_b^j$ below $s_c^j$
and if $s_a$ and $s_c$ are on the same side we put $s_c^j$ below $s_a^j$. The rest of the ordering is arbitrary. It is
easy to check that all triplets are indeed satisfied. For an example of this construction see
Figure~\ref{fig:hardreduc}.\\

\begin{figure}\centering
\vspace{-.5cm}\includegraphics{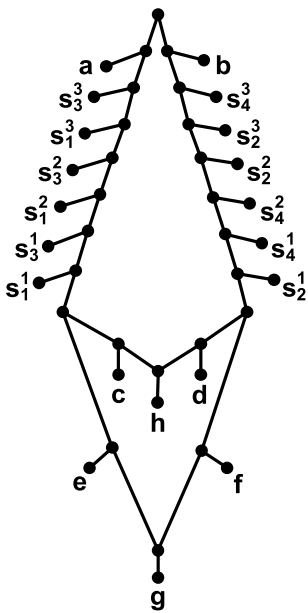} \caption{Example of the construction of the network $N$ in the proof of
Theorem~\ref{thm:nphard} for $C_1=\{s_1,s_3,s_4\}$, $C_2=\{s_2,s_3,s_4\}$ and $C_3=\{s_1,s_2,s_4\}$.}
\label{fig:hardreduc}\vspace{-.5cm}
\end{figure}

\noindent Now suppose that $T$ is consistent with some level-2 network $N$. Since $\tilde{T}\subset T$ we know by
Lemma~\ref{lem:8cunique} that $N$ must be equal to $\tilde{N}$ with the leaves not in $L(\tilde{N})$ added. From the
triplets $hs_i^j|a$ and $hs_i^j|b$ it follows that none of the leaves $s_i^j$ can be between $a$ and the root or
between $b$ and the root. In addition, from the triplets $he|s_i^j$ and $hf|s_i^j$ we know that $s_i^j$ cannot be
below any of the two split vertices. If follows that each $s_i^j$ must either be between $a$ and the left split vertex
or between the right split vertex and $b$. In addition, from the triplets $s_i^js_i^{j+1}|h$ we know that for each
$1\leq i\leq n$ all $s_i^j$ ($1\leq j\leq m$) have to be on the same side. Let $S_1$ be the set of elements $s_i\in S$
for which all $s_i^j$ ($1\leq j\leq m$) are between $a$ and the left split vertex and denote by $S_2$ the set of
elements $s_i\in S$ for which all $s_i^j$ ($1\leq j\leq m$) are between $b$ and the right split vertex. It remains to
prove that $(S_1,S_2)$ is a set splitting of $(S,C)$. Consider a set $C_j=\{s_a,s_b,s_c\}$ and suppose that
$s_a,s_b,s_c\in S_1$ (the case $s_a,s_b,s_c\in S_2$ is symmetric). It follows that all leaves $s_a^j,s_b^j,s_c^j$ are
somewhere between $a$ and the root. But since $T$ contains all triplets $s_a^jh|s_b^j$, $s_b^jh|s_c^j$ and
$s_c^jh|s_a^j$ this is not possible. $\Box$ \end{proof}

\section{Conclusion and open questions}
\label{sec:conclusion}

In this paper we have shown that it is polynomial-time solvable to construct level-2 networks when the input triplet
set is dense. In this way we have brought more complex, interwoven forms of evolution within reach of triplet methods.
There remain, of course, many open questions and challenges, which we briefly list here.

\begin{enumerate}
\item \textbf{Applicability.} A practical challenge is to implement the algorithm and to test it on real biological
data. How plausible are the networks that the algorithm constructs? How does it compare to the networks produced by
other packages? How far is the critique from certain parts of the community on the validity of many quartet-based
methods also relevant here? This critique in essence rests on the argument that it is in practice far harder to
generate high-quality input quartets than is often claimed. The \emph{short quartet method} \cite{short1} has been
discussed as a way of addressing this critique. This debate needs to be addressed in the context of this paper. \item
\textbf{Implementation.} Related to the above, it will be interesting to see how far the running time of our algorithm
can be improved and/or how far this is necessary for practical applications. At the moment it runs in time $O(|T|^3)$.
\item \textbf{Complexity I.} Is the dense level-$k$ problem always polynomial-time solvable for fixed $k$? As
discussed in the introduction it might in this regard be helpful to try generalising Theorem~\ref{thm:one}, which
captures the behaviour of SN-sets, and Theorem~\ref{thm:SL2}, which proves the polynomial-time solvability of
constructing simple level-2 networks. Generalising Theorem~\ref{thm:one} will probably be difficult, because it is at
this moment not clear whether the technique of ``pushing'' maximal SN-sets below cut-arcs generalises to level-3 and
higher. \item \textbf{Complexity II.} What is the computational complexity of the following problem: Given a dense set
of triplets $T$, compute the smallest $k$ for which there exists a level-$k$ network $N$ that is consistent with $T$.
\item \textbf{Complexity III.} Confirm the conjecture that non-dense level-$k$ is NP-hard for all fixed $k \geq 1$.
\item \textbf{Bounds.} In \cite{JS2} the authors determine constructive lower and upper bounds on the value $p$ for
which the following statement is true: for each set of triplets $T$, not necessarily dense, there exists some level-1
network $N$ which is consistent with at least $p|T|$ triplets in $T$. It will be interesting to explore this question
for level-2 networks and higher. \item \textbf{Building all networks.} It is not clear whether it is possible to adapt
our algorithm to generate \emph{all} level-2 networks consistent with the input triplet set. If so, then such an
adaptation could (even in the case that exponentially many networks are produced) be very useful for comparing the
plausibility and/or relative similarity of the various solutions. \item \textbf{Properties of constructed networks.}
Under what conditions on the triplet set $T$ is there only one network $N$ for which $N$ is consistent with $T$? Under
what conditions does $T$ permit some solution $N$ such that the set of all triplets consistent with $N$, is exactly
equal to $T$? These questions are also valid for level-1 networks. \item \textbf{Different triplet restrictions.}
Density is only one of very many possible restrictions on the input triplets. A particularly interesting alternative
is what we have named \emph{extreme density}, which is strongly related to the previous point. Here we assume that the
input triplets were derived from some real network, and that \emph{all} triplets within that network were found, not
just some dense subset of them; this might be a plausible assumption if the applied triplet generation method is fast
and generates high-quality triplets. What is the complexity of reconstructing the original network or, indeed, any
network consistent with \emph{exactly} the set of input triplets? There are some indications that, because the input
is guaranteed to contain a large amount of information, such extreme density reconstruction problems might be easier
to reason about for higher-level networks. \item \textbf{Confidence.} At the moment all input triplets are assumed to
be correct. Is there scope for attaching a confidence measure to each input triplet, and optimising on this basis?
This is also related to the problem of ensuring that certain triplets are \emph{excluded} from the output network, as
explored in \cite{forbid}. \item \textbf{Exponential-time exact algorithms.} As shown in \cite{JS2} and in this paper
the general level-$k$ problem for $k \in \{1,2\}$ is NP-hard. It could be interesting, and useful, to develop
exponential-time exact algorithms for solving these problems.
\end{enumerate}

\section*{Acknowledgements}
We thank Katharina Huber for her useful ideas and many interesting discussions.

\vspace{3cm}

\pagebreak

\appendix

\section{Appendix}

\begin{lemma}
\label{lem:enumgens}
There is only 1 simple level-1 generator, and there are only 4 simple level-2 generators,
and these are shown in Figures \ref{fig:lev1gen} and \ref{fig:lev2gen} respectively.
\end{lemma}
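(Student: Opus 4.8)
The plan is to reduce the enumeration to a purely combinatorial counting argument followed by a short, structured case analysis. First I would record the degree constraints imposed by the definition. Writing $k_0$ and $k_1$ for the number of recombination vertices of outdegree $0$ and $1$ respectively (so $k = k_0 + k_1$) and $s$ for the number of split vertices, the identity $\sum \text{indeg} = \sum \text{outdeg} = A$, where $A$ is the number of arcs, gives $2k + s = 2 + k_1 + 2s$, hence $s = 2k_0 + k_1 - 2$ and $A = 2k + s$. This immediately bounds the size of any generator: there are $1 + k + s$ vertices and $A$ arcs, both $O(k)$, and it restricts the admissible ``type vectors'' $(k_0,k_1,s)$ to a short finite list.

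Next I would pass to the underlying undirected multigraph. Since every vertex has total degree $2$ (the root and each outdegree-$0$ recombination vertex) or $3$ (each split vertex and each outdegree-$1$ recombination vertex), suppressing all degree-$2$ vertices of a biconnected generator yields a $2$-connected cubic multigraph $H$ on exactly $s + k_1 = 2k - 2$ vertices and $3k-3$ edges. Conversely, every generator is obtained from such an $H$ by (i) choosing an acyclic orientation that realises the split/recombination types, and (ii) subdividing edges to re-insert the unique root and the $k_0$ outdegree-$0$ recombination sinks, subject to keeping the graph acyclic and biconnected. This reduction is the crux: it replaces ``enumerate all generators'' by ``enumerate small cubic multigraphs, then their admissible decorations''.

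For $k=1$ the multigraph $H$ has $2k-2 = 0$ vertices, so the reduction is degenerate and I would argue directly: the identity forces $(k_0,k_1,s) = (1,0,0)$ as the only feasible type vector, and the unique acyclic biconnected realisation is the root joined to a single outdegree-$0$ recombination vertex by two parallel arcs, precisely Figure~\ref{fig:lev1gen}. For $k=2$ we have $2k-2 = 2$ vertices and $3k-3 = 3$ edges, and the only $2$-connected cubic multigraph on two vertices is the theta graph (three parallel edges); the only competitor, a single edge carrying a loop on each endpoint, contains a bridge and is discarded. The feasible type vectors are $(k_0,k_1,s) \in \{(2,0,2),(1,1,1)\}$; the remaining candidate $(0,2,0)$ is ruled out because, on only three vertices, giving both recombination vertices indegree $2$ and outdegree $1$ forces a directed $2$-cycle. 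Enumerating the acyclic orientations of the theta graph together with the admissible placements of the root and the outdegree-$0$ sinks then yields exactly the four generators $8a$--$8d$ of Figure~\ref{fig:lev2gen}; I would close the easy direction by verifying that each of these four indeed satisfies every clause of the definition.

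The main obstacle is the bookkeeping in this last enumeration. One must simultaneously maintain acyclicity, preserve biconnectedness after each subdivision, and respect the indegree/outdegree type of every vertex, while avoiding double-counting configurations that are isomorphic under the symmetry group permuting the three parallel edges of the theta graph and swapping its two branch vertices. Keeping these symmetries and constraints mutually consistent, rather than any single deep idea, is where care is needed, and it is what makes the analysis elementary but somewhat tedious.
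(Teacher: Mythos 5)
Your proposal is correct, and it reaches the enumeration by a genuinely different route than the paper. The paper's proof is a top-down path-counting argument: since a generator has no leaves, every path starting at any vertex must terminate at a recombination vertex; a recombination vertex can end at most two paths, each split vertex adds one path still to be ended, and the root starts two, so a level-2 generator has at least one and at most two split vertices; the four generators then fall out of a case analysis on the children of the root and of the split vertices, with biconnectivity used to discard degenerate configurations. Your handshake identity $s = 2k_0 + k_1 - 2$ is the exact equality of which the paper's path-counting bound is the inequality relaxation, and your further reduction --- suppress the degree-2 vertices (the root and the outdegree-0 recombination vertices) to obtain a biconnected cubic multigraph, which for $k=2$ must be the theta graph, then re-insert the root and sinks --- is a more systematic organisation: it cleanly separates the ``shape'' of the generator from its ``decoration'' and is the natural way to attack higher $k$, whereas the paper's argument is shorter and entirely self-contained for $k \leq 2$. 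Your exclusion of the type vector $(0,2,0)$ and your treatment of the degenerate $k=1$ case are both sound.

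One caveat on your converse step. An edge of $H$ that carries the root, or carries an outdegree-0 recombination vertex, does not inherit a single orientation from the generator: the root's two arcs point in opposite directions along its edge, and both arcs at a sink point inwards. So there is no well-defined acyclic orientation of $H$ chosen \emph{prior} to the subdivision; indeed the only acyclic orientations of the theta graph give one branch vertex outdegree 3, which realises no admissible vertex type, so the recipe ``orient, then subdivide'' executed literally would enumerate nothing. The fix is that the placement of the root and sinks on edges of $H$ and the orientation of the remaining plain edges must be chosen \emph{jointly}, after which the vertex types and acyclicity force everything: up to the symmetries you mention, this yields exactly $8c$ (root and both sinks on three distinct edges), $8b$ (root sharing its edge with one sink), $8a$ (root and the unique sink on distinct edges) and $8d$ (root and the unique sink on the same edge) of Figure~\ref{fig:lev2gen}. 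Since you explicitly flag this joint bookkeeping of types, acyclicity and biconnectivity as the delicate point, I read this as a presentational slip rather than a gap, but it is the one place where your write-up, as worded, would need repair.
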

\begin{proof}
To see that Figure \ref{fig:lev1gen} is the only simple level-1 generator, note firstly that a generator cannot
contain leaves. Hence, for each vertex in the generator, all paths beginning at that vertex must terminate at a
recombination vertex. A recombination vertex can end at most two paths, and a split vertex increases the number of
paths that still need to be ended by one. The root vertex introduces two paths and there is precisely one
recombination vertex, so the simple level-1 generator cannot contain any split vertices. The uniqueness of the
simple level-1 generator follows.\\
\\
There remains the case of the simple level-2 generators. By the above reasoning a simple level-2 generator can have at
most two split vertices; three or more split vertices would mean (if the two paths beginning at the root were
included) at least five paths would have to be ended, and two recombination vertices can only end at most four
paths. Similarly, a level-2 generator must have at least one split vertex.\\
\\
\textbf{Case 1: one split vertex.} Consider the two arcs leaving the root. It is not possible that they both end at
the same recombination vertex $r$ because then the removal of $r$ will disconnect the graph. So precisely one of these
arcs ends at a split vertex $s$ and the other at a recombination vertex. There are no other split vertices so both
arcs leaving $s$ must enter recombination vertices. The two possibilities
for this lead to $8a$ and $8d$ from Figure \ref{fig:lev2gen}.\\
\\
\textbf{Case 2: two split vertices.} Let $(r,x)$ and $(r,y)$ be the two arcs leaving the root. It is again not
possible that $x$ and $y$ are both equal to the same recombination vertex. Consider the case where $x$ and $y$ are
both equal to split vertices. This creates four paths that need to be ended, so all the arcs leaving $x$ and $y$ have
to enter recombination vertices. There is only one way to do this such that the graph is biconnected; this creates
$8c$. There remains only the case when (without loss of generality) $x$ is a split vertex and $y$ is a recombination
vertex. Consider the two arcs $(x,p)$ and $(x,q)$. It cannot be that $p$ and $q$ are both equal to the same
recombination vertex, because the existence of a second split vertex reachable from $p$ creates two paths that need to
be ended, and $y$ can only end one path. So (without loss of generality) $p$ is also a split vertex. Neither $p$ nor
$q$ can be equal to $y$ because then the resulting graph will not be biconnected. So one of the children of $p$ is
equal to $y$ and the other child of $p$ joins with the remaining child of $x$ to form the second recombination vertex.
This gives $8b$. $\Box$
\end{proof}

\noindent\textbf{Theorem \ref{thm:one}} \emph{Let $T$ be a dense triplet set consistent with some simple level-$\leq
2$ network $N$. Then there exists a level-2 network $N'$ consistent with $T$ such that for at most one maximal SN-set
$S$ of $T$ there exist two cut-arcs $a_1$ and $a_2$ in $N'$ such that $S$ equals the union of the sets of leaves below
$a_1$ and $a_2$ and each other maximal SN-set is equal to the set of leaves below just one cut-arc.}
\begin{proof}
Let $T$ a dense triplet set consistent with a simple level-$\leq 2$ network $N$ and $S$ a maximal SN-set of $T$. We
start with two critical observations.
\begin{observation}\label{obs:root}
No two leaves in $S$ can have only one common ancestor (the root). Since this would imply that all leaves are in $S$.
$\Box$
\end{observation}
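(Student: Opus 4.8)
The plan is to prove the stated implication by contradiction, leaning only on the closure property of SN-sets, the density of $T$, and the disjoint-path formulation of consistency in Definition~\ref{def:con}. Suppose, for contradiction, that two leaves $x,y\in S$ have the root as their unique common ancestor in $N$. Since $S$ is a maximal SN-set we have $S\neq L(T)$, so I may fix some leaf $c\in L(T)\setminus S$; the aim is to derive a contradiction from the existence of this $c$, which is precisely the assertion ``this would imply that all leaves are in $S$''.

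First I would determine which of the three triplets on $\{x,y,c\}$ can belong to $T$. Because $S$ is closed under $\mathcal{S}_T$, i.e. $\mathcal{S}_T(S)=S$, and $x,y\in S$, the presence of either $xc|y$ or $yc|x$ in $T$ would force $c$ into $\mathcal{S}_T(S)=S$, contradicting $c\notin S$. Since $T$ is dense, some triplet on $\{x,y,c\}$ lies in $T$, so the only surviving possibility is $xy|c\in T$.

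Next I would translate $xy|c\in T$ into structure in $N$. As $N$ is consistent with $T$, Definition~\ref{def:con} supplies vertices $u\neq v$ and internally vertex-disjoint paths $u\rightarrow x$, $u\rightarrow y$, $v\rightarrow u$ and $v\rightarrow c$. The path $v\rightarrow u$ is nontrivial because $u\neq v$, so $u$ has a strict ancestor and is therefore not the root; nevertheless $u$ reaches both $x$ and $y$. Hence $u$ is a common ancestor of $x$ and $y$ distinct from the root, contradicting the assumption. Consequently no leaf $c\notin S$ can exist, so $S=L(T)$, contradicting the maximality of $S$, and the observation follows.

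The argument is short, and I expect the only point needing care to be the final extraction: one must note that it suffices to produce a single non-root common ancestor, and that the vertex $u$ supplied by consistency is automatically distinct from the root precisely because the nontrivial path $v\rightarrow u$ makes $v$ a strict ancestor of $u$. Everything else is a direct application of SN-closure together with density, so I anticipate no substantive obstacle.
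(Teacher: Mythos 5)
Your proof is correct and fills in exactly the argument the paper leaves implicit in its one-line justification: density plus the SN-closure of $S$ force $xy|c\in T$ for any $c\notin S$, and the disjoint-paths definition of consistency (Definition~\ref{def:con}) then yields a common ancestor $u$ of $x$ and $y$ with a proper ancestor $v$, hence $u$ is not the root, so no leaf can lie outside $S$, contradicting $S\neq L(T)$ for a maximal SN-set. This is the paper's reasoning (``this would imply that all leaves are in $S$''), merely stated contrapositively and with the details spelled out.
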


\noindent We say that $u$ is a \emph{lowest common ancestor} of $x$ and $y$ if $u$ is an ancestor of both $x$ and $y$
and no proper descendant of $u$ has this property.

\begin{observation}\label{obs:LCA}
If two leaves $x,y\in S$ have exactly one lowest common ancestor $u$ then all leaves $z$ that have a parent on a path
from $u$ to $x$ have to be in $S$, unless $N$ is of type $8a$ with leaves on sides $C$ and $F$ in $S$ and no leaves on
side $E$ or $B$ in $S$: the situation from Figure~\ref{fig:thm1fig1}. $\Box$
\end{observation}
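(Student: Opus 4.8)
The plan is to reduce the statement to a pure claim about triplet consistency and then cash it out via the SN-closure of $S$. Concretely, fix a leaf $z$ whose parent $p_z$ lies on some path $P$ from $u$ to $x$, and aim to prove that (outside the exceptional configuration) the \emph{only} triplet on $\{x,y,z\}$ that is consistent with $N$ is $xz|y$. Once this is shown, density guarantees that at least one triplet on $\{x,y,z\}$ belongs to $T$, and since every triplet of $T$ is consistent with $N$, that triplet must be $xz|y$. Because $x,y\in S$ and $S=SN_T(S)$, the defining closure operation $\mathcal{S}_T$ then adds $z$ to $S$, giving $z\in S$ as required.

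First I would verify the easy half: that $xz|y$ is consistent with $N$. Using Definition~\ref{def:con}, I would exhibit the four internally vertex-disjoint paths by taking $p_z$ as the lower witness, with the arc $p_z\to z$ and the sub-path of $P$ from $p_z$ to $x$, and taking $u$ as the upper witness, with the sub-path of $P$ from $u$ to $p_z$ together with a path $u\to y$. Since $P$ is a simple path and $u$ is a lowest common ancestor of $x$ and $y$, these four paths can be chosen internally disjoint. The substantive half is to show that neither $xy|z$ nor $yz|x$ is consistent with $N$. In the tree case this is immediate: $p_z$ sits strictly below the unique meeting point $u$ of the $x$- and $y$-lineages, so $z$ separates from $x$ strictly later than $y$ does; hence every common ancestor of $x,y$ lies weakly above $u$ and cannot be cut off from $z$ (ruling out $xy|z$), and symmetrically $y$ and $z$ have no common ancestor that excludes $x$ (ruling out $yz|x$).

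The real work, and the source of the exception, is that $N$ may contain up to two recombination vertices, so an extra ``up-and-over'' path through such a vertex can make $xy|z$ or $yz|x$ consistent after all. Since $N$ is simple level-$\leq 2$, its generator is, by Lemma~\ref{lem:withoutcut} and Lemma~\ref{lem:enumgens}, the basic tree, the single level-1 generator, or one of the four level-2 generators $8a$--$8d$; I would case-analyze directly on this finite list. In each case I would trace the extra paths created by the recombination vertices and test whether they can supply the disjoint paths required for $xy|z$ or $yz|x$, while using the hypothesis that $u$ is the \emph{unique} lowest common ancestor of $x,y$ (so, by Observation~\ref{obs:root}, $u$ is not the root) and that $p_z$ lies between $u$ and $x$. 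I expect every configuration to block both competing triplets except precisely the one named in the statement — a type-$8a$ network with leaves on sides $C$ and $F$ in $S$ and none on side $E$ or $B$ in $S$, the situation of Figure~\ref{fig:thm1fig1} — where the recombination routing lets the surviving consistent triplet fail to be $xz|y$.

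The main obstacle will be the disjointness bookkeeping across $8b$, $8c$, $8d$ and the non-exceptional parts of $8a$: one must confirm that every candidate system of paths witnessing $xy|z$ or $yz|x$ is obstructed, either because it would force $x$ and $y$ to acquire a second lowest common ancestor (contradicting uniqueness of $u$) or because two recombination vertices cannot simultaneously route the three lineages in a graph with only the available labelled sides. Isolating exactly the $8a$ configuration as the sole genuine exception, and checking that in that configuration $xz|y$ really does fail so the exception is necessary rather than merely admitted, is the delicate endpoint of the argument.
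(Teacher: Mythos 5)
Your reduction to triplet consistency plus SN-closure is the right frame (and matches what the paper leaves implicit: the observation is asserted with no written proof, and the analogous explicit reasoning only appears later in the proof of Theorem~\ref{thm:one}, where consistency of a single triplet forces the $8a$ configuration). But the target you set yourself is both stronger than needed and false, and this would derail your case analysis. You do not need $yz|x$ to be inconsistent with $N$: the closure operation $\mathcal{S}_T$ adds $z$ to $S$ whenever \emph{any} triplet of the form $x'z|y'$ with $x',y'\in S$ lies in $T$, and $yz|x$ has exactly this form (take $x'=y$, $c=z$, $y'=x$), just as $xz|y$ does. The only triplet on $\{x,y,z\}$ that fails to add $z$ is $xy|z$, so the whole observation reduces to: $xy|z$ is inconsistent with $N$ outside the exceptional configuration. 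Moreover $yz|x$ genuinely \emph{can} be consistent in non-exceptional configurations: in a network of type $8c$, put $y$ and $z$ on side $C$ (with $z$ below $y$, so $u$ is the parent of $y$) and let $x$ be the recombination leaf on side $G$; then $u\rightarrow y$, $u\rightarrow z$ along side $C$ together with the path root $\rightarrow$ side $B$ $\rightarrow$ side $E$ $\rightarrow x$ witness $yz|x$, yet $z$ is still forced into $S$ because $xy|z$ remains inconsistent. Your plan — ``I expect every configuration to block \emph{both} competing triplets'' — would hit such configurations and wrongly flag extra exceptions.

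Your handling of the exceptional case has the mechanism backwards as well. In the $8a$ situation of Figure~\ref{fig:thm1fig1} (leaves on sides $C$ and $F$ in $S$, none on $E$ or $B$ in $S$), what goes wrong is not that $xz|y$ fails — it is typically still consistent there (subdivide side $C$: take $u'$ the parent of $z$, route $v'\rightarrow y$ through side $D$) — but that $xy|z$ \emph{becomes} consistent, via the internal recombination vertex: with $x$ on side $C$, $y$ on side $F$ and $z$ on side $C$ above $x$, take $u'$ the parent of $x$ with $u'\rightarrow y$ running down through the recombination structure, and $v'$ the parent of $z$. Since density only guarantees that \emph{some} triplet on $\{x,y,z\}$ lies in $T$, once $xy|z$ is consistent with $N$ the set $T$ may contain that triplet alone, and nothing forces $z\in S$. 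So the ``delicate endpoint'' you describe — verifying that $xz|y$ fails in the exceptional configuration — would not succeed and is not what is required; the correct endpoint is verifying that $xy|z$ is consistent precisely in that configuration and in no other, which is exactly the inspection of the five generators that the paper performs (implicitly here, explicitly later in the appendix proof of Theorem~\ref{thm:one}).
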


\begin{figure}\centering
\vspace{-.5cm}\includegraphics{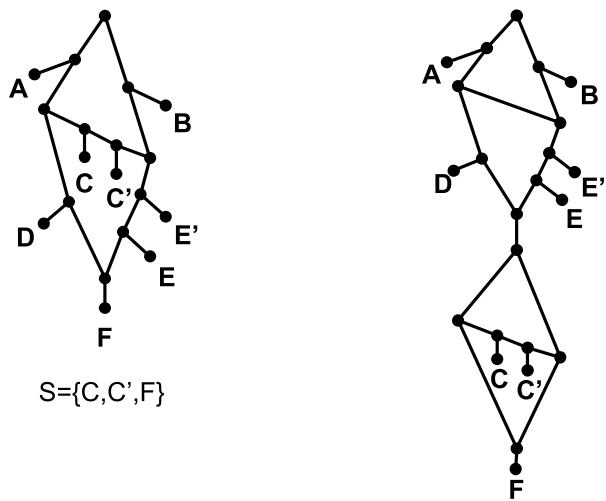} \caption{$8a$ with leaves on sides $C$ and $F$ in $S$ and no leaves on
side $E$ or $B$ in $S$}\vspace{-.3cm}\label{fig:thm1fig1}
\end{figure}

\noindent The case that $N$ is a basic tree is trivial. Now suppose that $N$ is a simple level-1 network. From the
above observations it follows that $S$ equals the set of leaves that have a parent on a path ending in either the
recombination vertex $q$ or in a parent $p$ of the recombination vertex. We can construct the network $N'$ by putting
the leaves in $S$ on a caterpillar below $q$ or $p$ respectively. From now on we assume that $N$ is a simple level-2
network and we prove the following lemma.
\begin{lemma}\label{lem:2paths}
There are at most three paths in $N$ such that each leaf in $S$ has a parent on one of these paths. In fact there are
only at most two such paths unless $N$ is of type $8b$ and leaves on sides $D$, $E$ and $F$ are in $S$.
\end{lemma}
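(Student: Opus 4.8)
The plan is to prove Lemma~\ref{lem:2paths} by a case analysis over the four simple level-2 generators $8a$--$8d$ of Figure~\ref{fig:lev2gen}, letting the two preceding observations do most of the work. Observation~\ref{obs:LCA} is the engine: it shows that membership in $S$ propagates downward along any path issuing from a unique lowest common ancestor of two leaves of $S$. Consequently, once two leaves of $S$ meet at a unique ancestor, the whole chain below them is forced into $S$, so on each directed chain of the generator the parents of the leaves of $S$ occupy a consecutive segment coverable by a single directed path; the only question that remains is \emph{how many} such chains the parents of $S$ can spread over. Observation~\ref{obs:root}, meanwhile, forbids $S$ from containing two leaves whose sole common ancestor is the root, which is exactly what rules out the degenerate distributions that would otherwise require extra paths meeting only at the root.

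First I would fix a generator type and, using the labelling of its sides, enumerate the sets of sides on which $S$ may carry leaves. For each such set I apply Observation~\ref{obs:LCA} to the extremal leaves of $S$ to conclude that all parents lie on a union of directed paths, one per maximal chain of occupied sides, and then bound the number of chains. For generators $8a$, $8c$ and $8d$ I expect to show that two paths always suffice: in each, the reticulation structure offers at most two incomparable branches into which $S$ can spread, and Observation~\ref{obs:root} collapses any apparent three-branch spread. The single delicate point here is the exceptional configuration of Observation~\ref{obs:LCA} for type $8a$, shown in Figure~\ref{fig:thm1fig1} (leaves on sides $C$ and $F$ in $S$, none on $E$ or $B$); I would treat this configuration separately and verify directly that it still needs only two paths.

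The crux is generator $8b$. There sides $D$, $E$ and $F$ lie on three branches that pairwise meet strictly below the root, so that if $S$ has leaves on all three, Observation~\ref{obs:LCA} forces $S$ to occupy all three branches while Observation~\ref{obs:root} does \emph{not} merge them; a single pair of directed paths then cannot reach all three, and one checks that three paths do. This is precisely the advertised exception. In every other distribution of $S$ over the sides of $8b$ I would argue that either one of the three branches is empty of $S$-leaves, or two of them are comparable and hence covered by one path, so that two paths again suffice. Assembling the four generator cases yields the global bound of three paths and the sharper bound of two outside the single $8b$ exception.

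I expect the main obstacle to be the bookkeeping in the $8b$ analysis, where both directions must be pinned down: that three paths are genuinely forced when $D$, $E$ and $F$ all meet $S$, and that two always suffice in every remaining sub-case, all while respecting the exceptional clause of Observation~\ref{obs:LCA}. A secondary point requiring care is confirming that the ``consecutive segment'' conclusion is not disrupted by the multi-arc sides present in these generators, so that each occupied chain really is coverable by one directed path.
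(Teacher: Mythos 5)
Your proposal is correct and follows essentially the same route as the paper: a generator-by-generator case analysis ($8a$--$8d$) in which Observation~\ref{obs:root} rules out the side-combinations that would force extra paths, with leaves of $S$ on sides $D$, $E$, $F$ of type $8b$ emerging as the unique exception requiring a third path. The only real difference is that you promote Observation~\ref{obs:LCA} to the central role, whereas the paper's proof of this lemma uses only Observation~\ref{obs:root} --- since the lemma merely asks that the parents of the leaves of $S$ be covered by few directed paths (not that $S$ occupy those sides consecutively or completely), the consecutive-segment machinery is harmless but unnecessary.
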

\begin{proof}
First consider $8a$ and suppose that a leaf on side $B$ is in $S$. Then by Observation~\ref{obs:root} no leaves on
sides $A$, $C$ and $D$ can be in $S$ and the lemma follows. If no leaf on side $B$ is in $S$ then the lemma is also
clearly
true.\\
\\
For $8b$ we can argue similarly that if a leaf of side $B$ is in $S$ no leaves on sides $C$, $D$, $E$, $F$ and $G$ can
be in $S$. Hence if $B$ is or is not in $S$ the lemma is clearly true.\\
\\
Now consider $8c$ and suppose that a leaf on one of the sides $A$, $C$ or $E$ is in $S$. Then no leaves on sides $B$,
$D$ and $F$ can be in $S$ by Observation~\ref{obs:root}. In $8d$ we argue that if a leaf on side $D$ is in $S$ there
can only be leaves from sides $D$ and $F$ in $S$. In all cases their are at most two paths such that each leaf in $S$
has a parent on one of these paths. $\Box$
\end{proof}

\noindent If $N$ is indeed of type $8b$ and leaves on sides $D$, $E$ and $F$ are in $S$ then it follows by
Observation~\ref{obs:LCA} that in this case all leaves on sides $D$, $E$ and $F$ and on sides $C$ and $G$ are in $S$.
We call this Situation X and come back to this later.\\
\\
Now assume that not all leaves with a parent on one of these paths are in $S$. Then there are leaves $x,z\in S$ and
$y\notin S$ such that there is a path from the parent of $x$ to the parent of $y$ to the parent of $z$. But since
$xz|y\in T$ there must also be vertices $u$ and $v$ such that there are disjoint paths from $u$ to $x$, from $u$ to
$z$, from $u$ to $v$ and from $v$ to $y$. This is only possible in $8a$ when leaves on sides $C$ and $F$ are in $S$
and no leaves on side $E$ or $B$ are in $S$. In this case we can construct a network $N'$ with the desired properties
like in Figure~\ref{fig:thm1fig1}. Otherwise we may assume that there are at most two paths in $N$ such that $S$
consists
of exactly those leaves that have a parent on one of these paths.\\
\\
In simple level-2 structures there is only one possible situation where (the roots of) the two paths can have two
different lowest common ancestors. This is in $8c$ with $S=\{G,H\}$, in which case we're done. Otherwise there is a
unique lowest common ancestor (LCA) of these two paths and the union of this LCA and the two paths is, by
Observation~\ref{obs:LCA} a connected subgraph. In this case we can as well add this LCA to one of the paths. From now
on we assume that the union of the two paths is a connected subgraph $CSG[N,S]$. For the construction of $N'$ we need
the following important property of $CSG[N,S]$.
\begin{lemma}\label{lem:degrees}
The connected subgraph $CSG[N,S]$ has at most two outgoing and three incoming arcs. In addition, if it has two
outgoing arcs it has only one incoming arc.
\end{lemma}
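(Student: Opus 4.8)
The plan is to control $CSG[N,S]$ through its boundary with the rest of $N$, counting separately the arcs that enter it and those that leave it — where arcs to leaves of $S$ do not count, since by construction $S$ is exactly the set of leaves whose parents lie on the (at most two) paths constituting $CSG[N,S]$. I would first pin down the top vertex $u$, the common lowest common ancestor of the two paths: by Observation~\ref{obs:root} it cannot be the root (otherwise $S=L(T)$), and since it has out-degree $2$ it must be a split vertex, both of whose children already lie in $CSG[N,S]$, so $u$ itself produces no outgoing arc.

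Next I would classify where boundary arcs can occur. Outgoing arcs can leave only split vertices: a recombination vertex on a path has its single child either continue the path or be a leaf of $S$, so it contributes no outgoing arc. Hence outgoing arcs arise only (i) at a path bottom that is a subdivision vertex of a side, whose side-continuation leaves $CSG[N,S]$ — at most one per path — or (ii) at a generator split vertex one of whose children leaves $CSG[N,S]$. Dually, incoming arcs occur only at $u$ (its unique parent) or at a recombination vertex lying on a path whose second parent is outside $CSG[N,S]$; a recombination vertex at which the two paths \emph{merge} has both parents inside and contributes nothing. Since a level-$2$ generator has exactly two recombination vertices, the incoming count is at most $1+2=3$, which already gives the second bound.

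For the bound of two outgoing arcs I would exploit that $u$ is a generator split vertex and that a level-$2$ generator has at most two split vertices. In $8a$ and $8d$ the only split vertex is $u$, so case (ii) cannot occur and outgoing arcs come solely from the at most two path bottoms. In $8c$ the two split vertices are the two children of the root and hence incomparable, so only $u$ lies in $CSG[N,S]$ and again case (ii) is empty. Only in $8b$ is the second split reachable from the first, so I would treat it separately, together with the ``merge'' configuration (two paths joining at a recombination vertex), where $CSG[N,S]$ becomes a single path below the merge and therefore has at most one bottom, hence at most one outgoing arc.

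The decisive and hardest step is the final clause, which I would derive from the fact that $S$ is closed under $\mathcal{S}_T$. The key is that a recombination vertex creates triplets of the form $xh\vert y$, and whenever a path stops at a subdivision vertex while the recombination vertex below it is also reachable, as an ancestor of leaves of $S$, from the other path, these triplets force the omitted downstream leaves into $S$ — exactly the mechanism behind Observations~\ref{obs:root} and~\ref{obs:LCA} — saturating that path down to a recombination leaf and destroying the would-be outgoing arc. Consequently a subdivision-vertex bottom (an outgoing arc) and the incoming arc that the corresponding recombination vertex would contribute cannot coexist: if both bottoms are subdivision vertices, so that there are two outgoing arcs, then neither path reaches a non-merge recombination vertex, and the incoming count is exactly $1$. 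I expect the main obstacle to be making this closure argument airtight simultaneously with the $8b$ branch-off case and the merge case; I would discharge it by explicitly enumerating, for each generator, the sides that may carry leaves of $S$ — already sharply restricted by Lemma~\ref{lem:2paths} and Observation~\ref{obs:root} — and verifying the two counts directly in each surviving configuration.
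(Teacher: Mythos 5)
Your overall architecture (incoming arcs bounded by $1+2$ via the top vertex and the two recombination vertices, outgoing arcs bounded per generator, then the trade-off clause) mirrors the paper's, and the incoming bound is sound. But the two structural claims that carry your outgoing-arc count are false. You assert that the top vertex $u$ must be a generator split vertex and that in types $8a$, $8d$ (and $8c$) it is the \emph{only} generator split vertex in $CSG[N,S]$, so that ``case (ii) cannot occur and outgoing arcs come solely from the at most two path bottoms.'' Counterexample in type $8a$ (sides $A$ = root${}\to s$, $B$ = root${}\to r_1$, $C = s\to r_1$, $D = s \to r_2$, $E = r_1\to r_2$, $F$ = the leaf below $r_2$): let side $D$ carry no leaves and let $S$ consist of the leaves on the lower part of side $A$ together with all leaves of side $C$; this is a maximal SN-set for a suitable dense $T$ consistent with $N$. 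Then $CSG[N,S]$ is a single path whose top $u$ is a subdivision vertex of side $A$, the generator split vertex $s$ lies in $CSG[N,S]$ strictly \emph{below} $u$, and $s$'s arc towards $r_2$ is an outgoing arc; the path bottom ($C$-bottom${}\to r_1$) contributes a second one. So there are two outgoing arcs but only one path bottom: your counting would certify the (false) bound of one here, and it cannot be trusted to certify two in general.

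This error then undermines the final clause, which is the hardest part. Your chain ``two outgoing arcs $\Rightarrow$ both bottoms are subdivision vertices $\Rightarrow$ no recombination vertex is reached $\Rightarrow$ exactly one incoming arc'' fails at the first implication, as the example shows. Worse, the configuration that genuinely threatens the lemma -- in $8a$, a $CSG$ consisting of lower $A$, $s$, $C$, $r_1$ and part of side $E$, which would have two outgoing arcs ($s$ towards $r_2$, and the $E$-bottom downwards) \emph{and} two incoming arcs ($u$'s parent plus $r_1$'s second parent on side $B$) -- is excluded by nothing you prove. It is ruled out only by the SN-closure saturation argument: once $CSG[N,S]$ passes through $r_1$ onto side $E$, density forces every remaining leaf of $E$, and then the leaf $F$, into $S$ (for such a leaf $c$ and suitable $x,y\in S$ the only triplets on $\{x,y,c\}$ consistent with $N$ pull $c$ into $S$), which kills the second outgoing arc. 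You correctly identify this mechanism (it is the one behind Observation~\ref{obs:LCA}) but explicitly leave it as a sketch and propose to replace it by ``explicit enumeration per generator.'' That enumeration \emph{is} the proof: it is exactly what the paper does (two incoming arcs force $CSG[N,S]$ to contain a recombination vertex; in $8b$/$8c$ this must be the parent of a leaf in $S$, in $8a$/$8d$ the whole of side $E$ then lies in $CSG[N,S]$; the split-vertex count then leaves room for at most one outgoing arc). Deferring it means the decisive step of the lemma remains unproven.
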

\begin{proof}
Since there are only two recombination vertices it is certainly not possible to have more that three incoming arcs. We
will now determine the maximum number of outgoing arcs by considering the different simple level-2 structures
separately.\\
\\
Note that $8a$ and $8d$ have (excluding the root) only one split vertex. Hence any connected subgraph can have at most
three outgoing arcs. From Observation~\ref{obs:root} follows that $CSG[N,S]$ can have at most two outgoing arcs in
these structures.\\
\\
Simple level-2 networks of type $8c$ have two split vertices and a root. However, by Observation~\ref{obs:root} it is
not possible that $CSG[N,S]$ contains vertices on sides $A$, $C$ or $E$ and simultaneously on sides $B$, $D$ or $F$.
Hence also in networks of this type $CSG[N,S]$ can have at most two outgoing arcs.\\
\\
Finally consider $8b$. Here $CSG[N,S]$ can contain vertices on the sides $D$, $E$ and $F$ but in that case it contains
these sides completely and is also $G$ in $S$. It follows that also in this case $CSG[N,S]$ has at most two outgoing
arcs.\\
\\
Now suppose that $CSG[N,S]$ has two incoming arcs. That means that it contains a recombination vertex. In $8b$ and
$8c$ this has to be the parent of a leaf in $S$. In $8a$ and $8d$ it follows that whole of the side $E$ is contained
in $CSG[N,S]$. In each case there can clearly be at most one outgoing arc because of the limited number of split
vertices. $\Box$
\end{proof}
Now we can use the following procedure to construct $N'$. We remove all leaves from $S$ and contract $CSG[N,S]$ to a
single vertex $v_c$. Then we make a copy of the original network $N$ and remove all leaves that are not in $S$ from
this copy and we denote the result as $C[N,S]$. How we connect $C[N,S]$ to the network depends on $v_c$. From
Lemma~\ref{lem:degrees} we know that $v_c$ has outdegree at most two, indegree at most three and cannot have indegree
and outdegree both equal to two. If $v_c$ is a split vertex (i.e. indegree one and outdegree two) we subdivide the arc
entering $v_c$ and connect $C[N,S]$ by adding an arc from the new vertex to the root of $C[N,S]$. If $v_c$ has
indegree at least two and hence outdegree one we subdivide the arc leaving $v_c$ and connect $C[N,S]$ to this new
vertex. If $v_c$ has indegree three we replace it by two vertices of indegree two and if $v_c$ has outdegree zero or
indegree and outdegree one we make a new outgoing arc from $v_c$ to the root of $C[N,S]$. Finally we can simplify the
obtained network by removing unlabelled leaves (recombination vertices) and suppressing vertices
of degree two.\\
\\
This whole procedure is illustrated in Figure~\ref{fig:thm1fig2}. An example network $N$ is displayed on the left with
the two paths in red. After removing the leaves from $S=\{H,D,G,F\}$ and contracting the two paths to a single vertex
$v_c$ (in red) we get the network in the middle. Since $v_c$ has indegree three and outdegree zero we replace it by
two vertices of indegree two and create an outgoing arc to which we connect a copy of the original network, but
without the leaves from $S$. After suppressing all degree-2 vertices we get the network $N'$ on the right.\\

\begin{figure}\vspace{-.5cm}
\begin{minipage}{.3\textwidth}
\centering
\includegraphics{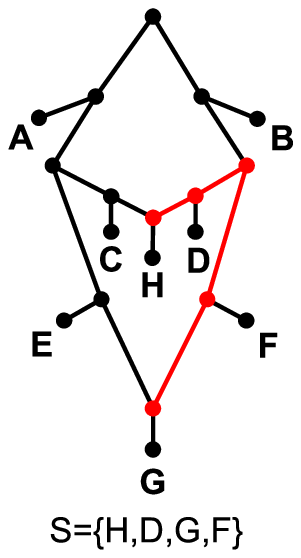}
\end{minipage}
\begin{minipage}{.3\textwidth}
\centering
\includegraphics{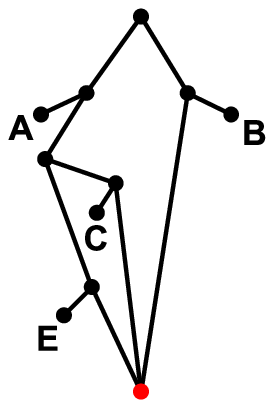}
\end{minipage}
\begin{minipage}{.3\textwidth}
\centering
\includegraphics{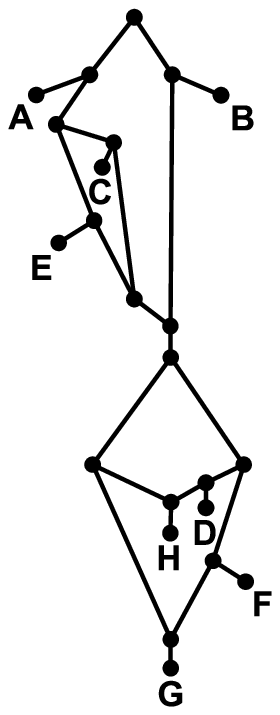}
\end{minipage}
\caption{Construction of $N'$ from $N$ in the proof of Theorem~\ref{thm:one}.}\vspace{-.3cm}\label{fig:thm1fig2}
\end{figure}

\noindent Also in Situation X we can use the same procedure. The result is illustrated in Figure~\ref{fig:thm1fig3},
for both the case that $H\in S$ (above) and that $H\notin S$ (below).\\

\begin{figure}
\centering\vspace{-.5cm}
\includegraphics{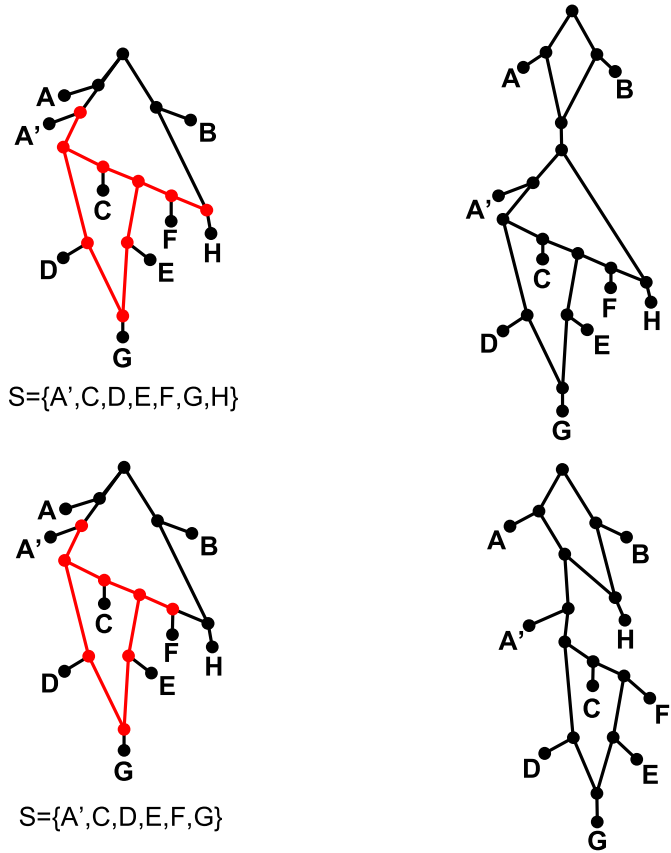} \caption{Construction of $N'$ in situation X for $H \in S$ (above) and $H \not
\in S$ (below).}\vspace{-.3cm}\label{fig:thm1fig3}
\end{figure}

\noindent To see that $T$ is consistent with each of the networks $N'$ above, consider three leaves $x,y,z$. If
$x,y,z\in S$ or $x,y,z\notin S$ then any triplet on these leaves that is consistent with $N$ is clearly also
consistent with $N'$. If $x,y\in S$ and $z\notin S$ then $xy|z$ is the only triplet in $T$ on these three leaves and
this triplet is consistent with $N'$. Finally, consider the case that $x\in S$ and $y,z\notin S$ and suppose that a
triplet $t$ over $\{x,y,z\}$ is consistent with $N$ but not with $N'$. Note that we only contracted and de-contracted
arcs. A de-contraction can never harm a triplet. And a contraction can only harm a triplet if it contracts the whole
path between the two internal vertices of the triplet. This means that just after the contractions (before the other
modifications) there are three disjoint paths from $v_c$ to $x$, $y$ and $z$. This means that $v_c$ is a split vertex
(after removing the leaves from $S$) and that the arc leading to $v_c$ is subdivided and $C[N,S]$ is connected to the
new vertex. It follows that the triplet $yz|x$ is consistent with $N'$. Now suppose that $t=xy|z$. We may assume that
$S$ contains another leaf $x'$ that is (in $N$) on the other side or above the split vertex splitting the two paths.
Because $xx'|z\in T$ it follows that there is a path from the root to $z$ not passing through $v_c$. If this path does
not intersect the path from $v_c$ to $y$ then this implies that $xy|z$ is consistent with $N'$. Otherwise, $N$ is of
type $8a$ with $x$ on side $C$, $x'$ on side $D$, $y$ on side $E$ and $z$ on side $F$. In this case $S$ contains by
Observation~\ref{obs:LCA} all leaves on sides $C$ and $D$ and maybe some on side $A$. In this case we can construct
the network $N'$ like in Figure~\ref{fig:thm1fig4}. The case $xz|y$ is symmetric.\\

\begin{figure}
\centering\vspace{-.5cm}
\includegraphics{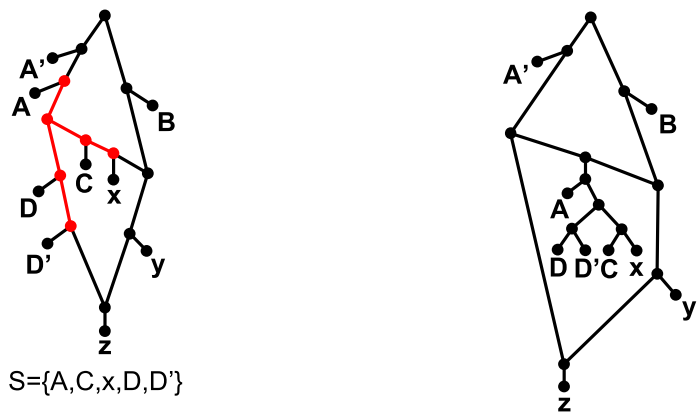} \caption{Construction of $N'$ in $8a$ with $x$ on side $C$, $x'$ on side $D$,
$y$ on side $E$ and $z$ on side $F$.}\vspace{-.3cm}\label{fig:thm1fig4}
\end{figure}

\noindent Repeating this procedure for each maximal SN-set gives a level-2 network $N'$ such that each maximal SN-set
equals the union of the sets of leaves below two cut-arcs. Furthermore, unless $N$ is of type $8c$, every maximal
SN-set equals the set of leaves below just one cut-arc in $N'$. If $N$ is of type $8c$ then this is also true except
for the (potential) maximal SN-set $\{G,H\}$. $\Box$
\end{proof}

\noindent \textbf{Lemma \ref{lem:arcs}} \emph{Any level-2 network with $n$ leaves has $O(n)$ arcs.}
\begin{proof}
The proof is by induction on $n$. Assume that any level-2 network with $n\leq M$ leaves has at most $8n$ arcs.
Consider a level-2 network $N$ with $M+1$ leaves. Observe that each nontrivial biconnected component of $N$ is a
simple level-1 or simple level-2 network after removing all leaves. Recall from Section \ref{sec:def} that for any
valid network always holds that every nontrivial biconnected component has at least three outgoing arcs. From this
follows that there always exists at least one leaf that is not a recombination leaf. Take any such a leaf, remove it
and suppress its parent with indegree and outdegree equal to one. If this creates a nontrivial biconnected component
with only two outgoing arcs, we replace it by a single split vertex. Otherwise, if we created multiple arcs we replace
these by a single arc and suppress the two obtained vertices with indegree and outdegree both equal to one (this only
occurs if we remove all leaves on sides $B$ and $C$ in a component of type $8d$). In each case, we reduce the number
of leaves by one and the number of arcs by at most eight. The obtained network has, by the induction hypothesis, at
most $8M$ arcs. Hence $N$ had at most $8M+8=8(M+1)$ arcs. $\Box$
\end{proof}

\noindent\textbf{Lemma \ref{lem:consistency}} \emph{Given a level-2 network $N$ and a set of triplets $T$ one can
decide in time $O(n^3)$ whether $N$ is a simple level-2 network consistent with $T$.}
\begin{proof}
Consider the following function $f$. For sides $X,Y,Z\in\{A,B,C,D,E, F, G, H\}$ and type $t\in\{8a,8b,8c,8d\}$ the
function $f(t,X,Y,Z)=1$ if a triplet $xy|z$ is consistent with a network of type $t$ with a leaf $x$ on side $X$, $y$
on side $Y$ and $z$ on side $Z$ such that if some of these leaves are on the same side $x$ and $y$ are always below
$z$ and $x$ is always below $y$. Otherwise, $f(t,X,Y,Z)=0$. If for a triplet $xy|z$ leaf $z$ is on the same side but
below $x$ or $y$ then this triplet is not consistent with the network. For any other triplet the function $f$ can be
used to evaluate the consistency of the triplet with any simple level-2 network. Furthermore, the function $f$ can be
computed in constant time. It remains to prove that one can determine, in $O(n^2)$ time, whether $N$ is a simple
level-2 network and if so to find the type of the simple level-2 network as well as the order of the leaves on the
sides. Subsequently one can use the function $f$ to decide for all $O(n^3)$ triplets whether they are consistent with
$N$.\\
\\
For determining whether $N$ is of type $8a$, $8b$, $8c$ or $8d$ we make use of the following subroutine
\emph{LeafPathBetween}$(p,q, A)$ where $p$ and $q$ are non-leaf vertices of $N$. The subroutine returns TRUE if there
is a path from $p$ to $q$ such that the only vertices ``hanging off'' that path (if any) are leaves, and such that no
internal vertices of the path are in $A$. In the case that the subroutine returns TRUE it also returns the internal
vertices of such a path. Otherwise it returns FALSE. It is easy to see that this subroutine executes in polynomial
time. Namely, we start a path at $p$ via an out-arc of $p$ (if there are two out-arcs from $p$ we simply try the
following algorithm for both out-arcs), and examine the current vertex $v$ on the path. If $v$ is equal to $q$ we are
done, return TRUE and the internal vertices of the path that we have constructed. If $v$ is in $A$ return FALSE.
Otherwise, consider the arcs entering and leaving $v$ that are \emph{not} equal to the arc that we entered $v$ by.
There are several mutually-exclusive cases to consider. (i) If there is an in-arc then return FALSE. (ii) If all
children are leaves return FALSE. (iii) If there are two children and neither are leaves, return FALSE. (iv) If there
are two children, one is a leaf and one is not a leaf, continue to the non-leaf child and iterate.\\
\\
Now, let us consider the question of determining whether $N$ has the form of $8a$
(respectively $8b$, $8c$, $8d$.)\\
\\
\textbf{Case $8a$: } In $8a$ there must be exactly one root vertex (indegree 0, outdegree 2), one \emph{internal}
split vertex (indegree 1, outdegree 2 such that neither child is a leaf), one \emph{internal} recombination vertex
(indegree 2, outdegree 1 such that the child is not a leaf) and one \emph{external} recombination vertex (indegree 2,
outdegree 1 such that the child \emph{is} a leaf.) It is clearly easy to check in polynomial time that this is so, so
let these four vertices be $v_1, v_2, v_3, v_4$
respectively. We then execute the following code.\\
\\
Set $SEEN=\emptyset$;\\
$P := LeafPathBetween(v_1, v_2, SEEN);$\\
If $P =$ FALSE return FALSE;\\
$SEEN := SEEN \cup P;$\\
$P := LeafPathBetween(v_1, v_3, SEEN);$\\
$SEEN := SEEN \cup P;$\\
If $P =$ FALSE return FALSE;\\
$P := LeafPathBetween(v_2, v_3, SEEN);$\\
$SEEN := SEEN \cup P;$\\
If $P =$ FALSE return FALSE;\\
$P := LeafPathBetween(v_2, v_4, SEEN);$\\
$SEEN := SEEN \cup P;$\\
If $P =$ FALSE return FALSE;\\
$P := LeafPathBetween(v_3, v_4, SEEN);$\\
If $P =$ FALSE return FALSE;\\
Return TRUE.\\
\\
\textbf{Case $8b$.} Here there is exactly one root vertex ($v_1$), two internal split vertices ($v_2$ and $v_3$), and
two external recombination vertices ($v_4$ and $v_5$). (Note that we can use \emph{LeafPathBetween}$(v_2, v_3,
\emptyset)$ to check whether $v_2$ is the ancestor of $v_3$ or vice-versa; it will return TRUE if $v_2$ is the
ancestor of $v_3$ and FALSE if not. Having identified $v_2$ and $v_3$ it is easy to again use \emph{LeafPathBetween}
to identify $v_4$ and $v_5$.) We can then use the same pseudocode as in case $8a$, this time making calls to
\emph{LeafPathBetween} in the following order: $v_1 \rightarrow v_2$, $v_1 \rightarrow v_5$, $v_2 \rightarrow
v_3$, $v_2 \rightarrow v_4$, $v_3 \rightarrow v_4$, $v_3 \rightarrow v_5$.\\
\\
\textbf{Case $8c$.} Here there is exactly one root vertex ($v_1$), two internal split vertices ($v_2$ and $v_3$), and
two external recombination vertices ($v_4$ and $v_5$). This time we make calls to \emph{LeafPathBetween} in the
following order: $v_1 \rightarrow v_2$, $v_1 \rightarrow v_3$, $v_2 \rightarrow v_4$, $v_3 \rightarrow v_4$,
$v_2 \rightarrow v_5$ and $v_3 \rightarrow v_5$.\\
\\
\textbf{Case $8d$.} Here there is exactly one root vertex ($v_1$), one internal split vertex ($v_2$), one internal
recombination vertex ($v_3$) and one external recombination vertex $(v_4)$. This time the paths to consider, in order,
are $v_1 \rightarrow v_2$, $v_1 \rightarrow v_4$, $v_2 \rightarrow v_3$ (twice)
and $v_3 \rightarrow v_4$.\\
\\
The network has $O(n)$ arcs by Lemma~\ref{lem:arcs} and hence also $O(n)$ vertices. The algorithm makes $O(1)$ calls
to \emph{LeafPathBetween}. Each execution of \emph{LeafPathBetween} inspects at most $O(n)$ vertices, and must check
each vertex against the $O(n)$ vertices in $SEEN$, giving $O(n^2)$ running time.\\
\\
The sides the leaves are on and the order of the leaves on these sides can be determined during the algorithm by
noting that each call to \emph{LeafPathBetween} corresponds to a particular side of the simple level-2 network, and
that each such call explores all leaves hanging off that side.\\
\\
We conclude that the algorithm including construction of look-up tables takes $O(n^2)$ time, and that subsequent
triplet consistency checks in $N$ take $O(1)$ time. $\Box$
\end{proof}

\begin{lemma}\label{lem:8csimple}
The set of triplets $\tilde{T}$ is only consistent with simple level-2 networks.
\end{lemma}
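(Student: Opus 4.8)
The plan is to show that any level-2 network $N$ consistent with $\tilde{T}$ satisfies two properties: (i) it has no nontrivial cut-arcs, and (ii) it is strict level-2. Property (i) together with Lemma~\ref{lem:withoutcut} forces $N$ to be a basic tree, a simple level-1 network, or a simple level-2 network, and property (ii) rules out the first two, leaving exactly the simple level-2 case.

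First I would verify that $\tilde{T}$ is dense and that it has \emph{no} nontrivial SN-sets, i.e. that every SN-set of $\tilde{T}$ is either a singleton or all of $L(\tilde{T})$. By Lemma~\ref{lem:2born} it suffices to check that $SN_{\tilde{T}}(\{x,y\}) = L(\tilde{T})$ for every pair of distinct leaves $x,y$, which is a finite computation on the explicit network $\tilde{N}$ of Figure~\ref{fig:unique}. The reason this holds is the heavily entangled structure of generator $8c$: both recombination vertices are reachable from both split vertices, so the closure operator $\mathcal{S}_{\tilde{T}}$ rapidly pulls in the remaining leaves. Concretely I would organise the verification by the sides of $8c$ on which $x$ and $y$ lie, exhibiting in each case a short chain of triplets of $\tilde{T}$ that drives the closure up to all of $L(\tilde{T})$.

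With SN-triviality established, suppose for contradiction that some level-2 network $N$ consistent with $\tilde{T}$ contains a nontrivial cut-arc $a=(u,v)$. Since $a$ is a cut-arc, $v$ has indegree one, and being a non-leaf it is a split vertex, so the set $S$ of leaves below $a$ satisfies $2 \le |S| \le n-1$; by Lemma~\ref{lem:cutSN}, $S$ is an SN-set of $\tilde{T}$, hence a nontrivial one, contradicting the previous paragraph. Thus $N$ has no nontrivial cut-arcs, and by Lemma~\ref{lem:withoutcut} it is a basic tree, a simple level-1 network, or a simple level-2 network. To exclude the first two, I would exhibit a single leaf-triple of $\tilde{N}$ carrying \emph{all three} triplet orientations; a natural candidate is the triple consisting of the two recombination leaves together with a leaf hung near the root, for which the disjoint-path condition of Definition~\ref{def:con} realises each orientation by routing through the two split vertices. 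No tree is consistent with two distinct triplets on a common triple, and any single-reticulation (hence any simple level-1) network is consistent with at most two of the three orientations on any triple; so neither a basic tree nor a simple level-1 network can be consistent with $\tilde{T}$, and $N$ must be a simple level-2 network.

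The main obstacle I anticipate is the SN-triviality verification of the second paragraph: although finite, it must be argued uniformly over all leaf pairs, and the cleanest route is a case analysis over the sides of $8c$ showing that the SN-closure of every pair escapes each proper nonempty subset of $L(\tilde{T})$. The strict-level-2 step, by contrast, reduces to the clean and robust observation that a single reticulation cannot produce three distinct triplets on the same three leaves, so once the cut-arc analysis has reduced us to the simple cases this step is essentially immediate.
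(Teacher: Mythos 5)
Your proposal is correct, and its outer skeleton coincides with the paper's: both proofs first show that a network consistent with $\tilde{T}$ can have no nontrivial cut-arc, invoke Lemma~\ref{lem:withoutcut} to reduce to the basic tree, simple level-1 and simple level-2 cases, and then eliminate the first two cases using the triple triplet on the two recombination leaves plus a leaf near the root (the paper uses $\{g,h,a\}$). Where you genuinely differ is in how the nontrivial cut-arc is excluded. The paper argues directly on triplets: for $x,y$ below and $z$ above such a cut-arc, $xy|z$ is the only triplet on $\{x,y,z\}$ consistent with $N$, and one checks that every triple carrying a \emph{unique} triplet of $\tilde{T}$ lies inside $\{a,\ldots,f\}$, so all leaves would lie in $\{a,\ldots,f\}$, contradicting the existence of $g$ and $h$. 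You instead route through SN-sets: Lemma~\ref{lem:cutSN} makes the leaves below the cut-arc an SN-set, and you verify via Lemma~\ref{lem:2born} (computing $SN_{\tilde{T}}(\{x,y\})$ for all pairs) that $\tilde{T}$ has only trivial SN-sets. Your key claim is in fact true --- for instance $SN_{\tilde{T}}(\{g,h\})$ absorbs every leaf because each $\{g,h,x\}$ is a triple triplet, and every other pair's closure quickly reaches such a configuration --- so your argument goes through; it is essentially a repackaging of the same tightness phenomenon, with the finite verification shifted from triples (which triples carry a unique triplet) to pairs (closure computations), and with the benefit of tying the lemma to the SN-set machinery used throughout the paper. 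Two caveats. First, your bounds $2\le |S|\le n-1$ do not follow merely from $v$ being a split vertex: both bounds need the paper's convention that every nontrivial biconnected component has at least three outgoing arcs (otherwise the two branches below $v$ could reconverge onto a single leaf, giving $|S|=1$, or all leaves could sit below one cut-arc, giving $S=L$, which would defeat your contradiction). Second, the SN-triviality verification, which you defer as a ``finite computation'', is the real content of your proof and must actually be carried out over all $28$ pairs; the paper's triple-based check is the lighter of the two.
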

\begin{proof}
Suppose that $N$ is consistent with $\tilde{T}$ but not a simple level-$\leq 2$ network. Then by
Lemma~\ref{lem:withoutcut} $N$ contains a nontrivial cut-arc $a$. Let $B$ be the set of leaves below $a$ and
$A=L\setminus B$. Because $a$ is a nontrivial cut-arc $B$ contains at least two leaves.\\
\\
For every two leaves $x$ and $y$ in $B$ and for every leaf $z$ in $A$ there is only one triplet on these three leaves
that is consistent with the network. Every set of three leaves for which there is only one triplet is a subset of
$\{a,b,c,d,e,f\}$. Hence $x,y,z$ are elements of $\{a,b,c,d,e,f\}$. This holds for any two leaves $x$ and $y$ from $B$
and $z$ from $A$, hence all leaves are elements of $\{a,b,c,d,e,f\}$. This yields a contradiction.\\
\\
It is clear that $\tilde{T}$ is not consistent with a basic tree and not with a simple level-1 network since
$\tilde{T}$ contains three triplets over $\{g,h,a\}$. Hence is $\tilde{T}$ only consistent with simple level-2
networks. $\Box$
\end{proof}

\begin{lemma}\label{lem:8c8c}
The set of triplets $\tilde{T}$ is only consistent with networks of type $8c$ where $g$ and $h$ are recombination
leaves.
\end{lemma}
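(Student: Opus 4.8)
The plan is to start from Lemma~\ref{lem:8csimple}, which already guarantees that every network $N$ consistent with $\tilde T$ is a simple level-$2$ network, so $N$ is built from one of the four generators $8a$, $8b$, $8c$, $8d$ of Figure~\ref{fig:lev2gen}. Since $\tilde N$ itself is a network of type $8c$ in which $g$ and $h$ are the recombination leaves, existence is immediate; the content of the lemma is that no other generator, and no other placement of $g$ and $h$, is possible. I would therefore aim to (i) single out $g$ and $h$ as the recombination leaves of $N$, and (ii) eliminate the generators $8a$, $8b$, $8d$.

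For step (i) the mechanism I would use is that a recombination leaf is exactly a leaf whose parent has indegree two, hence a leaf reachable from the root by two internally vertex-disjoint directed paths, whereas a leaf hanging on a side is reached ``through a single channel.'' The triplet trace of this is that $\{g,h\}$ is a \emph{distinguished pair}: by inspection of $\tilde N$ (one leaf per side of $8c$, with $g,h$ on the two recombination vertices), for every remaining leaf $w$ the set $\tilde T$ contains all three triplets $gh|w$, $gw|h$ and $hw|g$, and in particular $gh|w$ for all $w$. Using Definition~\ref{def:con}, consistency of $gh|w$ for every $w$ forces $g$ and $h$ to share a common ancestor that is a \emph{proper} descendant of the root and reaches both by disjoint paths (if their only common ancestor were the root, as for the two leaves on the top sides, nothing could be split off below it, cf.\ Observation~\ref{obs:root}); the additional consistency of both $gw|h$ and $hw|g$ for leaves $w$ descending on either side then forces each of $g,h$ to share non-root ancestors with leaves descending on \emph{both} sides, which is possible only if $g$ and $h$ each lie below a recombination vertex. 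I would prove this as the key claim, and then observe that generators $8a$ and $8d$ contain only one \emph{external} recombination vertex (their second recombination vertex is internal, with a non-leaf child) and hence admit at most one recombination leaf, so they are excluded at once.

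It then remains to rule out $8b$ and to check that in $8c$ the two recombination leaves are forced to be exactly $g$ and $h$, with the remaining leaves filling the sides as in $\tilde N$. Here I would use the triplets of $\tilde T$ restricted to the ``skeleton'' leaves $a,b,\dots,f$: these detect that the two split vertices of $N$ are siblings below the root (the parallel shape of $8c$) rather than in the ancestor--descendant relation of $8b$, and they fix the side on which each skeleton leaf lies. \textbf{The main obstacle is precisely this last part:} a finite but genuinely tedious case analysis, running over the two surviving generators and the placements of $a,\dots,h$ on their sides up to the symmetries of each generator, in which one verifies that the full triplet set $\tilde T$ of $\tilde N$ is consistent with no configuration other than the intended one. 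The conceptual work is carried by the distinguished-pair argument, which nails $g$ and $h$ to recombination vertices and removes $8a$ and $8d$; after that the elimination of $8b$ and the placement of the skeleton leaves is careful bookkeeping with the explicit triplets of $\tilde T$.
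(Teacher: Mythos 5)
Your step (i) contains a genuine gap: the key claim --- that consistency of all three triplets on $\{g,h,w\}$, for every $w$, forces both $g$ and $h$ to be recombination leaves --- is false, and with it the shortcut that ``excludes $8a$ and $8d$ at once.'' Concretely, take a network of type $8a$ (root $r$ with children a split vertex $s$ and the internal recombination vertex $u$; $s$ has arcs to $u$ and to the external recombination vertex $v$), put $g$ below $v$ (side $F$), put $h$ on the arc $(s,u)$ (side $C$), and distribute the remaining six leaves over sides $D$ and $E$. One checks directly from Definition~\ref{def:con} that all three triplets on $\{g,h,w\}$ are consistent with this network for every $w$ (e.g.\ for $w$ on side $E$: $hw|g$ via $u'=$ parent of $h$, $v'=s$; $hg|w$ via $u'=s$, $v'=r$; $wg|h$ via $u'=$ parent of $w$, $v'=r$), yet $h$ is not a recombination leaf --- it is not even reachable from a recombination vertex. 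The same happens in $8d$ with $g$ on side $F$, $h$ on one of the two parallel sides, and the rest on the other parallel side. Your mechanism breaks exactly here: a leaf on side $C$ of $8a$ shares the non-root ancestor $s$ with every leaf on sides $C$, $D$, $E$, $F$, so ``shares non-root ancestors with leaves descending on both sides'' does not imply ``lies below a recombination vertex.''

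This is why the paper's proof is structured differently. It proves the recombination-leaf property only for types $8b$ and $8c$, using \emph{double} triplets: three leaves none of which is reachable from a recombination vertex admit a unique consistent triplet, and in $8b$/$8c$ (unlike $8a$/$8d$) the leaves reachable from a recombination vertex are precisely the two recombination leaves. Types $8a$ and $8d$ are then eliminated not by counting recombination leaves but by the very bookkeeping you defer to the end: the triple-triplet constraint narrows the placements to a few configurations (including the counterexample above), and specific skeleton triplets of $\tilde T$ --- $eg|f$ and $fg|e$ when $e,f$ are forced onto one side, and $ec|g$, $ge|c$, $hc|e$ in the remaining $8a$ configuration --- yield contradictions. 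So the case analysis you postpone is not confined to ``$8b$ and placement within $8c$''; it is indispensable for disposing of $8a$ and $8d$, and without it your argument does not go through.
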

\begin{proof}
From Lemma~\ref{lem:8csimple} we know that a network $N$ consistent with $\tilde{T}$ must be a simple level-2 network.
We first argue that in networks of type $8b$ and $8c$ $g$ and $h$ have to be recombination leaves. If a leaf $x$ is
not reachable from any recombination vertex then there exists a unique path from the root to $x$. Hence if three
leaves $x$, $y$, $z$ are all not reachable from any recombination vertex then there is only one triplet on $\{x,y,z\}$
consistent with the network. It follows that if for any three leaves there are two triplets in the input (a
\emph{double triplet}) then at least one of these leaves must be reachable from a recombination vertex. In $8b$ and
$8c$ there are precisely two leaves that are reachable from a recombination vertex. Therefore, if the network is of
type $8b$ or $8c$ then it is clear that $g$ and $h$ have to be recombination leaves, since they are the only two
leaves
that together appear in all double triplets.\\
\\
First consider networks of type $8b$ and observe that if for any three leaves there are three triplets in the input (a
\emph{triple triplet}), then this input can only be consistent with a network of type $8b$ if these leaves are on
sides $G$, $H$ and $C$. Because $\{g,h,x\}$ is a triple triplet for every $x \neq g, h$, all leaves but $g$ and $h$
have to be on side $C$. But in this case it is not possible for both triplets $eg|f$ and $fg|e$ to be simultaneously
consistent
with the network, since $g$ is on side $G$ or $H$ and $e$ and $f$ are both on side $C$.\\
\\
Now consider a network of type $8a$ and observe that this network can only be consistent with triple triplets if its
leaves are on sides $C$, $E$ and $F$ or on sides $C$, $D$ and $F$. In the input is a triple triplet $\{g,h,x\}$ for
all $x \neq g, h$. From this it follows that there are only two possibilities for the network to look like. The first
possibility is that $g$ and $h$ are on the sides $F$ and $D$ or the sides $F$ and $E$ and all other leaves are on side
$C$. But in this case the triplets $eg|f$ and $fg|e$ cannot simultaneously be consistent with the network, since $e$
and $f$ are both on side $C$. The other possibility is that $g$ and $h$ are on the sides $F$ and $C$ and all other
leaves are on the sides $D$ and $E$. From the triplet $ec|g$ it follows that $e$ and $c$ are on the same side. But in
that case $ge|c$ and $hc|e$ cannot simultaneously be consistent with the network.\\
\\
Finally, consider networks of type $8d$. The only way for a triple triplet to be consistent with this type of network
is to put the leaves in the triple triplet on the sides $B$, $C$ and $F$. Since $g$ and $h$ are in a triple triplet
with every other leaf we know that $g$ and $h$ are on the sides $F$ and (without loss of generality) $B$ and all other
leaves are on side $C$. But in this case it is not possible that triplets $eg|f$ and $fg|e$ are simultaneously
consistent with the network, since $e$ and $f$ are both on side $C$. $\Box$
\end{proof}

\begin{lemma}\label{lem:8cunique}
The set of triplets $\tilde{T}$ is only consistent with $\tilde{N}$.
\end{lemma}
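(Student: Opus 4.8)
The plan is to build directly on the two preceding lemmas. By Lemma~\ref{lem:8csimple} and Lemma~\ref{lem:8c8c}, every network $N$ consistent with $\tilde{T}$ is a simple level-2 network of type $8c$ in which $g$ and $h$ are the two recombination leaves, hence occupy the two recombination-leaf sides $G$ and $H$. It therefore remains only to determine, for the six remaining leaves $a,b,c,d,e,f$, which of the six sides $A,\ldots,F$ each occupies and in what order leaves sharing a side appear, and then to argue that this assignment is forced to coincide exactly with the one in $\tilde{N}$. Since $\tilde{T}$ is by definition the set of \emph{all} triplets consistent with $\tilde{N}$, the argument has at its disposal both positive constraints (each triplet of $\tilde{T}$ must be consistent with $N$) and negative constraints (no triplet outside $\tilde{T}$ may be consistent with $N$), and I would use both.

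First I would classify the six sides into the two root-sides $A,B$ (arcs leaving the root) and the four internal sides $C,D,E,F$ (arcs from a split vertex to a recombination vertex), and determine which leaves lie in each class. The grouping can be read off from the triplet patterns already exploited in the proof of Lemma~\ref{lem:8c8c}: for instance the fact that $\{e,f,g\}$ is a \emph{double} triplet (both $eg|f$ and $fg|e$ lie in $\tilde{T}$ while $ef|g$ does not) forces $e$ and $f$ onto the two internal sides feeding the same recombination vertex as $g$, and analogous double- and triple-triplet patterns pin the remaining leaves to their sides. Triplets of the form $xy|z$ versus $xz|y$ then fix relative depth and distinguish a root-side leaf from an internal-side leaf, while a caterpillar/SN-set style argument (as in Lemma~\ref{lem:caterpillarsets}) fixes the order of any leaves that happen to share a side.

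The main obstacle is that the generator $8c$ has a nontrivial automorphism group: it admits the symmetry swapping the two split vertices (exchanging sides $A\leftrightarrow B$, $C\leftrightarrow D$, $E\leftrightarrow F$) and the symmetry swapping the two recombination vertices (exchanging $C\leftrightarrow E$, $D\leftrightarrow F$, $G\leftrightarrow H$). Uniqueness as a \emph{labelled} network therefore requires that the triplets of $\tilde{T}$ break each of these symmetries. The crux of the proof is thus to exhibit, for every candidate re-placement of $a,\ldots,f$ that differs from the placement in $\tilde{N}$, a specific triplet that is either present in $\tilde{T}$ but inconsistent with the re-placement, or consistent with the re-placement but absent from $\tilde{T}$; for example, an asymmetric pair such as $ec|g$ together with $ge|c$ distinguishes the two recombination branches and rules out the swap $G\leftrightarrow H$. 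I expect the bulk of the work to be exactly this finite but delicate case analysis over the side assignments, where the real danger is incompleteness (overlooking a symmetric alternative) rather than any single hard estimate. Once every side-assignment other than that of $\tilde{N}$ has been eliminated, we conclude that $N=\tilde{N}$, as claimed. $\Box$
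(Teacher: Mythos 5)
Your overall strategy---reduce to Lemmas~\ref{lem:8csimple} and~\ref{lem:8c8c}, then pin each of $a,\ldots,f$ to its side of the type-$8c$ generator by a finite triplet analysis---is exactly the paper's strategy, but two of the tools you propose are not actually available, and the part you defer (``the bulk of the work'') is the entire content of the lemma. The first genuine gap is your use of ``negative constraints.'' The hypothesis is only that $N$ is a level-2 network \emph{consistent with} $\tilde{T}$, i.e.\ every triplet of $\tilde{T}$ is consistent with $N$; nothing in the hypothesis forbids $N$ from being consistent with additional triplets. Asserting ``no triplet outside $\tilde{T}$ may be consistent with $N$'' is precisely the uniqueness statement you are trying to prove, so rejecting a candidate placement because it supports a triplet absent from $\tilde{T}$ is circular. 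All eliminations must come from positive constraints, and the paper's proof indeed uses only those: the fact that $\{a,b,g\}$ carries a \emph{double} triplet (both $ag|b$ and $bg|a$ lie in $\tilde{T}$) forces $a$ and $b$ onto opposite sides of the generator (two leaves on the same side cannot satisfy both, exactly as in the analogous step of Lemma~\ref{lem:8c8c}); then $ac|g$, $ae|g$ put $c,e$ on $a$'s side, $bd|g$, $bf|g$ put $d,f$ on $b$'s side, $ch|e$ and $eg|c$ fix $c$ on $C$ and $e$ on $E$, $ce|a$ fixes $a$ on $A$, and symmetrically $dh|f$, $fg|d$, $df|b$ fix $d$, $f$, $b$. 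Writing down such a chain is the proof; your proposal only promises it.

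The second gap is that your treatment of the $8c$ symmetries is backwards. A re-placement of the leaves obtained by applying an automorphism of the generator (e.g.\ $G\leftrightarrow H$ together with $C\leftrightarrow E$ and $D\leftrightarrow F$) yields the \emph{same} labelled network, so no triplet whatsoever can distinguish it---and none needs to, since it does not contradict $N=\tilde{N}$. Consequently your stated plan, ``exhibit a distinguishing triplet for every candidate re-placement that differs from the placement in $\tilde{N}$,'' cannot be carried out as written: it fails on exactly these automorphic images (your example, the pair $ec|g$ and $ge|c$ ``ruling out the swap $G\leftrightarrow H$,'' rules out swapping $g$ and $h$ alone, not the full automorphism). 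The symmetry is not something the triplets break; it must be factored out, which the paper does with a without-loss-of-generality normalisation ($a$ on the left, $b$ on the right, $g$ on side $G$, $h$ on side $H$) before running the forcing chain. With that correction, and with the chain of triplets actually exhibited, your outline becomes the paper's proof; as it stands it is a plan containing one circular ingredient and one step that cannot be completed in the form stated.
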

\begin{proof}
Let $N$ be a network consistent with $\tilde{T}$. From Lemma~\ref{lem:8c8c} we know that $N$ is of type $8c$ and that
$g$ and $h$ are the two recombination leaves. Since there is no triplet $ab|g$ we know that $a$ and $b$ are on
different sides (one on the left and one on the right side). Assume without loss of generality that $a$ is on side
$A$, $C$ or $E$, $b$ is on side $B$, $D$ or $F$, $g$ is on side $G$ and
$h$ on side $H$.\\
\\
From the triplets $ac|g$ and $ae|g$ it follows that $c$ and $e$ are both on one of the sides $A$, $C$ or $E$. And from
the triplets $bd|g$ and $bf|g$ it follows that $d$ and $f$ are both on one of the sides $B$, $D$ or $F$.\\
\\
From the triplets $ch|e$ and $eg|c$ it now follows that $c$ is on side $C$ and $e$ on side $E$. And from the triplet
$ce|a$ then follows that $a$ is on side $A$. Similarly, from the triplets $dh|f$ and $fg|d$ it follows that $d$ is on
side $D$ and $f$ on side $F$. And from the triplet $df|b$ then follows that $b$ is on side $B$. Therefore, $N =
\tilde{N}$. $\Box$
\end{proof}

\end{document}